\documentclass[12pt, letterpaper]{article}
\usepackage{microtype}
\usepackage[T1]{fontenc}
\usepackage[margin=1in]{geometry}

\usepackage{graphicx, float, caption}
\usepackage{comment}
\usepackage{setspace}
\usepackage{appendix}
\usepackage{subcaption}
\usepackage{amsmath,amssymb,mathrsfs,amsthm}

\usepackage[colorlinks=true,linkcolor=blue,citecolor=blue]{hyperref}
\usepackage{natbib}
\usepackage{enumitem}

\usepackage{fnpct}

\usepackage{times}

\DeclareMathOperator{\E}{\mathbf{E}}

\newtheorem{proposition}{Proposition}
\newtheorem{corollary}{Corollary}[proposition]
\newtheorem{lemma}{Lemma}

\theoremstyle{definition}

\newtheorem{example}{Example}
\newtheorem{remark}{Remark}
\newtheorem*{example*}{Running Example}

\newcommand\cites[1]{\citeauthor{#1}'s\ (\citeyear{#1})}

\title{Tournaments with Managerial Discretion
\thanks{The authors are grateful to Chris Avery, Pablo Casas-Arce, Krishna Dasaratha, Maciej Kotowski, Jonathan Libgober, Bart Lipman, Chiara Margaria, Jawwad Noor, V.~G.~Narayanan, Juan Ortner, Krishna Palepu, Dennis Yao, and participants at Harvard University lunch seminars and the 36th Stony Brook Game Theory Conference for their helpful comments and discussion. %
}
}

\author{
 \textbf{Peiran Xiao}\thanks{Department of Economics, University of Southern California. Email: peiran.xiao@usc.edu.} \\
 University of Southern California
\and 
\textbf{Hashim Zaman}\thanks{Harvard Business School, Harvard University. Email: hzaman@hbs.edu.}\\
Harvard University
}

\date{\today}

\begin{document}
\maketitle

\onehalfspacing

\begin{abstract}
    We study tournaments with managerial discretion in hiring. A manager selects a coworker from a pool of candidates and then competes against him in a Lazear--Rosen--style tournament with a prize equal to a share of total output. A profit-maximizing principal sets the prize share together with a head start (or handicap)---an advantage (or disadvantage) in the output comparison---granted to the manager.  The head start affects output through three channels: (i) encouraging the manager, (ii) discouraging the new hire, and (iii) inducing the manager to hire a stronger candidate. The hiring effect dominates the discouragement effect until the strongest candidate is hired; beyond that point, any further head start discourages the new hire more than it encourages the manager. The optimal contract therefore grants a head start just large enough to induce the manager to hire the strongest candidate.
\end{abstract}

\textbf{Keywords}: Tournaments, managerial discretion, sabotage, head start, output-dependent prizes.

\clearpage

\begin{quote}
\emph{``[W]hen compensation is relative, and when the individuals who do the hiring are
to be in the same pool with those hired, there is an incentive to hire people
strategically. Incumbents do not want competition from good outsiders,
and so they tend to hire lower-quality people than would otherwise be
optimal for the firm.''}

  \hfill   ---Edward P. Lazear, \emph{Personnel Economics}
\end{quote}

\section{Introduction}

In hierarchical organizations, principals often delegate hiring and promotion decisions to lower-level managers, who are better positioned to evaluate candidates' abilities \citep{Sengupta2004,Haegele2026}.
For example, direct contact with candidates and familiarity with local conditions may give managers soft information about candidates' job-relevant skills and fit with the team \citep{DellerSandino2020,Frankel2021,WuLiu2026}, and such information may be difficult to communicate to the principal \citep{Dessein2002}.
\footnote{
 Decentralized hiring by local business-unit managers, commonly referred to as local manager hiring, is prevalent \citep{ChenJungbauerWang2023,WuLiu2026}.
  For our purposes, hiring and promotion are equivalent whenever the manager selects a candidate who subsequently joins the manager's tournament pool.
  Throughout, we use ``she'' for the manager and ``he'' for the selected candidate.
}
Under tournament-based incentive schemes, however, the selected candidate becomes both a coworker and a competitor of the incumbent manager.
This creates an incentive for the manager to sabotage the selection process by choosing a lower-ability candidate than would otherwise be optimal for the firm in order to prevent competition \citep{carmichael1988incentives,lazear1995personnel,FriebelRaith2004}.
For example, in academia, universities face the challenge of ensuring that incumbent faculty members hire the best possible candidates, and tenure is considered essential for this purpose.
In politics, incumbents tend to appoint weaker deputies to prevent future competition.
Similar dynamics also arise in corporate environments.
As Steve Jobs famously remarked, ``A players hire A players, B players hire C players, and C players hire D players.''
\footnote{
Obtained from \cite{Kawasaki2015}.
Moreover, \cite{Sullivan2011}  notes that weak managers often ``don’t even try to hire superior talent'' because they fear being overshadowed or displaced; instead, they hire employees with lower ability to increase their job security.  
}
In a survey of 336 corporate executives in the U.S. across various industries, \cite{ZamanLakhani2024} asked if respondents have ``ever observed a colleague disapprove hiring of a high-ability candidate to avoid potential competition for himself or herself.'' Among those whose firms operated on relative performance evaluation, over 30\% answered in the affirmative.

Despite the observational evidence, 
tournaments with managerial discretion and sabotage in the hiring process remain underexplored.
To fill this gap, we model a two-player tournament in which the manager has discretion over hiring a co-worker, who subsequently becomes a competitor. The manager chooses the new hire’s ability and then competes against him in a Lazear-Rosen-style tournament \citep{lazear1981rank}, where the winner receives a share of total output.
For concreteness, our primary example is a technology company in which a senior employee is delegated authority to hire a junior colleague for her team.
The new hire contributes to team output but also becomes a potential rival of the incumbent in an internal tournament for bonuses, promotions, or other rewards tied to team performance.

We propose that the principal can offer the manager a \emph{head start} to partially insulate her from potential competition. The head start is an advantage granted to the manager when comparing her output with the new hire's, which can be interpreted as a premium for seniority within the organization (see, e.g., \cite{Konrad2002, Siegel2014}). 
For example, in competition for bonuses or promotions, firms can provide incumbency advantages to employees who have been at the firm longer.
In organizations, incumbents can receive more credit than they deserve because of seniority.
Head starts, which are equivalent to handicapping opponents, are commonly used in tournaments \emph{without} managerial discretion---such as organizational and sports contests---to provide incentives (\cite{lazear1981rank, o1984economic, DrugovRyvkin2017, fu2020optimal}).
By contrast, we explore the role of head starts in mitigating managerial sabotage in hiring, in addition to providing incentives in tournaments.

The principal sets the tournament prize as a constant share of total output and can commit to this share through equity-based compensation contracts or reputational concerns.
In our technology-company example, the better-performing employee may receive a bonus or earn a promotion accompanied by an equity award, such as firm stock, whose value is tied to team or firm performance.
This incentive structure makes the new hire both a collaborator and a competitor: hiring a lower-ability agent increases the manager's probability of winning but reduces the size of the tournament prize.
Incentive schemes based on jointly achieved outcomes are common in organizations \citep{Holmstrom1982, DaiToikka2022, DasarathaGolubShah2024}.
Output-dependent prizes are also used in tournaments to avoid uncertainty, reduce wage costs, and discourage peer-to-peer sabotage and collusion \citep{GuthLevinskyPull2016, DanilovHarbringIrlenbusch2019, GloklerPullStadler2022}.
\footnote{
Investment banks provide another illustration of compensation tied to firm performance.
For example, Goldman Sachs allocated 34\% of its net revenue to its compensation pool in 2023 \citep{Goldman2023Q2}.
}

We find that the head start has varying effects on the efforts of both the manager and the new hire. First, it biases the tournament in favor of the manager and incentivizes her to exert effort, thereby having an \emph{encouragement effect}. However, its impact on the new hire's effort is rather mixed. On the one hand, the bias discourages the new hire from investing effort, thereby having a \emph{discouragement effect}. On the other hand, it mitigates the competition faced by the manager and leads her to hire a higher-ability agent, which increases the new hire’s effort relative to the situation where a lower-ability agent would have been hired. This positive \emph{hiring effect} dominates the {discouragement effect} in equilibrium until the highest-ability agent is hired. 
Once the highest-ability candidate is hired, any further head start discourages the new hire more than it encourages the manager, thereby decreasing total output.
Therefore, the optimal head start is just enough to induce the manager to hire the strongest candidate.

The payout ratio (i.e., the share of total output awarded to the winner) also has mixed effects on the principal's profit. 
While a higher payout ratio reduces the principal's share of total output, it affects the agents' effort in two ways.
On the one hand, a higher payout ratio incentivizes both the manager and the new hire to exert more effort. 
On the other hand, it affects the manager's choice of the new hire's ability: hiring a higher-ability agent increases the size of the tournament prize but lowers her chances of winning.
We characterize the profit-maximizing head start and payout ratio for the principal to take advantage of tournament incentives while mitigating hiring sabotage.
\footnote{
In Appendix~\ref{app:beta}, we explore an extension where the loser also receives a constant share of total output. We show that our main results remain robust: the optimal contract assigns a zero share to the loser, as giving a head start to the manager is more effective at mitigating hiring sabotage.
}

In this paper, we explore the unique dynamics in tournaments that involve managerial discretion in hiring the other player.
To the best of our knowledge, we are the first in contest theory to study the role of a player's discretion in choosing her competitor, which leads to sabotage during this process.
Moreover, we study the use of head starts and output-dependent prizes in tournaments to mitigate hiring sabotage. While the literature has explored the use of head starts (or handicaps) to provide incentives in the tournament stage (e.g., \cite{lazear1981rank,o1984economic, DrugovRyvkin2017}) and tournaments with output-dependent prizes (e.g., \cite{Chung1996, BayeHoppe2003, GershkovLiSchweinzer2009,GuthLevinskyPull2016, DanilovHarbringIrlenbusch2019}), we are the first to explore their roles in mitigating managerial sabotage in the hiring process. 
Additionally, even in the absence of managerial discretion, we show that giving a head start to the lower-ability agent can decrease total effort in tournaments with output-dependent prizes, contrary to the common wisdom of ``leveling the playing field.''
Our results on the effects of the head start, noise, and abilities on individual efforts are also novel in the setting with output-dependent prizes.

The tournament literature has explored peer-to-peer sabotage (e.g., \cite{lazear1989pay, skaperdas1995modeling, chen2003sabotage, krakel2005helping, munster2007selection,HarbringIrlenbusch2008, gurtler2010sabotage, deutscher2013sabotage}).%
\footnote{
See \cite{chowdhury2015sabotage} for a survey of (peer-to-peer) sabotage in tournaments.  
}
The literature has also found that output-dependent prizes can mitigate peer-to-peer sabotage and encourage helping \citep{DanilovHarbringIrlenbusch2019, GloklerPullStadler2022} and that handicaps can exacerbate peer-to-peer sabotage \citep{brown2017hidden}.
By contrast, we consider a setting in which the manager has discretion over hiring a new employee and may sabotage the \emph{hiring} process to prevent competition from the new hire. 
While managerial authority in hiring and promotion has been explored in organizational economics (e.g., \cite{carmichael1988incentives, lazear1995personnel, Dessein2002, FriebelRaith2004, Sengupta2004, Chen2024, Haegele2026}), its implications for tournaments remain unexplored.

The literature has considered the use of head starts or handicaps in tournaments to mitigate heterogeneity in players' abilities (e.g., \cite{lazear1981rank, o1984economic}). The common wisdom is to give a head start to the weaker player (i.e., handicap the stronger player) to ``level the playing field,'' as player heterogeneity would diminish incentives for both players.
Recent papers have challenged this common wisdom. \cite{DrugovRyvkin2017} show that head starts or handicaps can be optimal even when players have the same ability.
\cite{fu2020optimal} and \cite{drugov2022hunting} find that giving a head start to the stronger player can increase total effort.
Consistent with these findings, we find that in Lazear-Rosen-style tournaments with output-dependent prizes, a head start to the stronger player can increase total effort.
To the best of our knowledge, the literature has not explored tournaments where an agent has the discretion to hire another, \emph{a fortiori} the use of head starts to mitigate sabotage in this process.%

\section{The Model}

\subsection{Setup}
Consider a two-player tournament in which a player, the manager $m$, has the discretion to hire the other agent $n$.
This framework captures situations where an organization operating under tournament incentives needs to hire a new agent, but the principal cannot observe the abilities of the candidates and must delegate the hiring decision to a manager.
The manager of ability (type) $\theta_m\in  \Theta \equiv [\underline \theta, \bar \theta]$ observes the abilities of the candidates and selects a new hire of ability $\theta_n\in \Theta$.
We assume that $\underline\theta>0$ is sufficiently small.
Once the hiring decision is made, the manager competes with the new hire in a Lazear--Rosen--style tournament.
In the tournament, both agents have common knowledge of their abilities $(\theta_m,\theta_n)$, and agent $i\in\{m,n\}$ invests effort $e_i\geq0$ towards production at a cost $c(e_i)/\theta_i$. Assume that $c(\cdot)$ is twice continuously differentiable, strictly increasing, and $c''(e)>0$ for all $e\geq0$, with $c(0)=c'(0)=0$ and $\lim_{e\to\infty} c'(e)>2\bar\theta$. 
The output of agent $i$ is given by $y_i = e_i+\varepsilon_i$, where the term $\varepsilon_i$ captures a zero-mean random shock, which can also be interpreted as noise in performance evaluation.
We assume that the tournament prize $V(y)$ depends on the total output $y=y_m+y_n$, instead of a fixed prize, in order to provide incentives for the manager to hire a high-ability agent.%
\footnote{
    If the tournament prize does not depend on total output, the manager will always hire the lowest-ability agent. 
}
For tractability, we further assume the tournament prize is a constant share of total output, i.e., $V(y) = \alpha y$, which can be interpreted as a performance bonus or equity award tied to team or firm output. %
\footnote{ 
    See \citet{GershkovLiSchweinzer2009} and \cite{GuthLevinskyPull2016} for prizes as a constant share of total output in tournaments.
    In a Lazear-Rosen-style tournament without managerial discretion, \cite{GuthLevinskyPull2016} show that this prize structure is more cost-effective than a fixed prize. 
    Managerial discretion makes the output-dependent prize even more desirable because it incentivizes the manager to hire a high-ability agent.
    }
The profit-maximizing principal can commit to the payout ratio $\alpha\in(0,1)$ through formal equity compensation contracts or through repeated interaction and reputation.
In addition, she can also offer a \emph{head start} $h\in[-\bar{H}, \bar{H}]$ to the manager---an advantage in output comparison---to incentivize her to hire a higher-ability agent.
\footnote{
We allow for the possibility that $h<0$, which makes it a \emph{handicap}---a bias \emph{against} the manager when comparing outputs. 
In the absence of the manager’s discretion, it is optimal to give a head start to the higher-ability agent in our setting (see Remark~\ref{rmkl}).
}
In practice, the firm may reward better-performing agents with stock or promotion and offer a seniority premium in performance evaluations.
We assume that the bound $\bar{H}>0$ on the head start is large.
Given the head start $h$, the manager wins the prize if and only if $y_m + h \geq y_n$.

Alternatively, the tournament scheme can be viewed as a sharing contract: the principal retains a fraction $1-\alpha$ of the total output, and the remaining fraction $\alpha$ constitutes a compensation pool $V=\alpha y$ shared by the agents.
Specifically, each agent’s percentage share of the compensation pool in the sharing contract equals their probability of winning in the tournament, which is \emph{endogenously} determined by their effort levels. A head start $h>0$ increases the manager's percentage share of the compensation pool and reduces the new hire's percentage share.

The timing of the game is as follows. First, the principal knows the manager's ability and commits to the contract $(\alpha,h)$.
Next, the manager chooses an agent $\theta_n\in\Theta$ from the candidate pool.
Then, the manager and the new hire choose their effort levels in the tournament, where abilities are common knowledge to them.
Finally, the output is realized, and agents are paid according to the contract $(\alpha,h)$.
The solution concept we use is the pure-strategy subgame-perfect Nash equilibrium (SPNE).

\subsection{Tournament Stage}
Given the head start $h$, the manager $m$ will outperform $n$ with probability 
\begin{equation*}
    \Pr(y_m + h\geq y_n) = \Pr(e_m-e_n+h\geq   \varepsilon_n-\varepsilon_m ). %
\end{equation*}
Assume that the difference in random shocks,
$\varepsilon_n-\varepsilon_m$, follows a distribution $G(\cdot)$ with a continuously differentiable density $g(\cdot)$ that is unimodal and symmetric around zero (i.e., $g(x)=g(-x)$).
For example, the distribution can be generated by independent and identically distributed (i.i.d.) random shocks with a unimodal distribution (\citet[Theorem 2.2]{Purkayastha1998}).
Therefore, the probability that $m$ will outperform $n$ is $\Pr(y_m + h\geq y_n) = G(e_m-e_n+h)$.
By symmetry, the probability that $n$ will outperform $m$ is $G(e_n-e_m-h) = 1-G(e_m-e_n+h)$.
We also assume $\E[\varepsilon_n+\varepsilon_m\mid \varepsilon_n-\varepsilon_m]=0$ so that the expected total output is the same regardless of the winner's identity, which is satisfied if the random shocks are i.i.d.~and symmetric around zero.

Additionally, we parameterize the distribution of the difference in random shocks by $G(x)= F(x/\sigma)$, where $F$ is a CDF with unit variance, and $\sigma>0$ is a scale parameter that measures the noise in performance evaluation (i.e., the dispersiveness and the variance of $G$). 
\footnote{
After the parameterization, the distribution $G$ has a larger variance if and only if the difference in random shocks is more dispersed (\citet[Proposition 2]{DrugovRyvkin2020}).
When random shocks are i.i.d.~and within a location--scale family, their difference is more dispersed if and only if the random shocks themselves are more dispersed.
More generally, when random shocks are i.i.d.~and log-concave, their difference is more dispersed if the shocks themselves are more dispersed (\citet[Theorem 3.B.9]{ShakedShanthikumar2007}).
}
This distribution can be generated by i.i.d.~random shocks that belong to a location-scale family with variance $\sigma^2/2$ and mean zero \citep{DrugovRyvkin2020, MorganTumlinsonVardy2022} as well as correlated random shocks within a location-scale family, as shown in the following examples.

\begin{example}[Normal Distribution]
    If $\varepsilon_m,\varepsilon_n \stackrel{\text{i.i.d.}}{\sim} \mathcal{N}(0,\sigma^2/2)$, then $\varepsilon_n-\varepsilon_m \sim \mathcal{N}(0,\sigma^2)$. %
\end{example}

\begin{example}[Logistic Distribution]
    If $\varepsilon_m,\varepsilon_n \stackrel{\text{i.i.d.}}{\sim} \text{Gumbel}(-\gamma s,s)$, which has mean zero, then $\varepsilon_n-\varepsilon_m \sim \text{Logistic}(0,s)$. %
    Thus, after log-transforming effort, the tournament is equivalent to a \cite{Tullock1980} contest with discriminatory power $r=1/s$ and a multiplicative head start. 
\end{example}

\begin{example}[Uniform Distribution]
    If $\varepsilon_m,\varepsilon_n\sim \operatorname{Unif}[-\frac{M}{4},\frac{M}{4}]$ and $\varepsilon_m=-\varepsilon_n$, 
    then $\varepsilon_n-\varepsilon_m \sim \operatorname{Unif}[-\frac{M}{2},\frac{M}{2}]$, and $\E[\varepsilon_n+\varepsilon_m\mid \varepsilon_n-\varepsilon_m]=0$ also holds.%
    \footnote{
        If $\varepsilon_m$ and $\varepsilon_n$ are i.i.d. uniform random variables, then
        $\varepsilon_n-\varepsilon_m$ follows a symmetric triangular distribution. 
    }
\end{example}

We use backward induction to derive the pure-strategy SPNE.
Given the new hire's ability $\theta_n$, in the subgame $\Gamma(\theta_n)$ (i.e., the tournament stage), the expected payoffs for $m$ and $n$ are
\footnote{Note that the subgame is not supermodular in $(e_m,e_n)$ unless $g'(x)=0$ (uniform distribution).}
\begin{align}
   u_m(e_m,e_n,\theta_m) & = \alpha \cdot G(e_m-e_n+h)  (e_n+e_m) - \frac{c(e_m)}{\theta_m}, \\
   u_n(e_m,e_n,\theta_n) & = \alpha \cdot G(e_n-e_m-h)  (e_n+e_m) - \frac{c(e_n)}{\theta_n}.
\end{align}
Both agents will choose the effort levels $(e_m,e_n)$ to maximize their expected payoffs.
Since each agent can guarantee a nonnegative payoff by choosing zero effort,
effort levels with average cost $c(e_i)/e_i>2\bar\theta$ cannot arise in
equilibrium. Consequently, the equilibrium effort levels are bounded
(see Lemma~\ref{lem:bounded}).
The first-order conditions yield
\begin{align} 
  \frac{\partial u_m }{\partial  e_m}  & =  \alpha[(e_n+e_m)\cdot g(e_m-e_n+h) + G(e_m-e_n+h)] - \frac{c'(e_m)}{\theta_m} = 0, \label{FOC1}\\
  \frac{\partial u_n }{\partial e_n}  & = \alpha[(e_n+e_m)\cdot g(e_m-e_n+h) + G(e_n-e_m-h)] - \frac{c'(e_n)}{\theta_n}=0. \label{FOC2}
\end{align}
Following the standard practice in the tournament literature, we assume the variance of $G$ is sufficiently large  
so that the second-order conditions are satisfied, and a unique pure-strategy equilibrium exists. 
\footnote{%
    See \citet{lazear1981rank}, \citet{nalebuff1983prizes}, and \cite{RyvkinDrugov2020}.
    When $x$ is bounded and $g(x)$ is continuously differentiable, a large variance implies that $g(x)$ and $g'(x)/g(x)$ are sufficiently small (see Lemma~\ref{lem:prelim}).
    For example, the normal distribution $\mathcal{N}(0,\sigma^2)$ has $g(x)\leq \frac{1}{\sqrt{2\pi}\sigma}$ and $\frac{g'(x)}{g(x)} = -\frac{x}{\sigma^2}$.
    An alternative assumption is that $c''(e)$ is uniformly sufficiently large and $G$ is log-concave (see Remark~\ref{rem:weaker-assumptions}).
 }
Thus, the two equations above determine the equilibrium effort levels $(e_m(\theta_n),e_n(\theta_n))$ in the subgame as functions of the hiring decision $\theta_n$.
The following lemma provides the comparative statics.

\begin{lemma}
\label{lemma:1}
 Given $\theta_n$, the equilibrium effort levels in the subgame $\Gamma(\theta_n)$ are given by equations~\eqref{FOC1} and \eqref{FOC2}.
 The following comparative statics hold.
    \begin{enumerate}[label=(\roman*)]
        \item The manager's effort $e_m(\theta_n)$ is
        \begin{itemize}
            \item strictly increasing in the head start $h$, the payout ratio $\alpha$, and her ability $\theta_m$;
            \item decreasing in the noise in performance evaluation $\sigma$ if $h\geq 0$;%
            \footnote{
            The necessary and sufficient condition is that $h>-2e_m(\theta_n)$, i.e., the head start is not too unfavorable to her.
            Symmetrically, the new hire's effort is decreasing in $\sigma$ if and only if $h<2e_n(\theta_n)$.
            }
            
            \item increasing (decreasing) in the new hire's ability $\theta_n$ if $\theta_m\geq\theta_n$ and $h\geq0$  
            ($\theta_n\geq\theta_m$ and $h\leq 0$).

        \end{itemize}
        \item The new hire's effort $e_n(\theta_n)$ is
        \begin{itemize}
            \item strictly decreasing in the head start $h$, and strictly increasing in $\alpha$ and $\theta_n$;
            \item decreasing in $\sigma$ if $h\leq0$;%
            \item increasing (decreasing) in $\theta_m$ if $\theta_n\geq\theta_m$ and $h\leq0$  ($\theta_m\geq\theta_n$ and $h\geq 0$).
        \end{itemize}     
        \item There exist $\bar h_1,\bar h_2\geq 0$ such that if $h>\bar h_1$ ($h<-\bar h_2$), then $e_m(\theta_n)$ is increasing (decreasing) in $\theta_n$, and $e_n(\theta_n)$ is decreasing (increasing) in $\theta_m$.
        \footnote{
            The more precise condition is that the manager is more (less) likely to win; the lemma provides two sufficient conditions.
            For example, $\frac{de_m}{d\theta_n}$ is positive (negative) if $g'(e_m^*-e_n^*+h)$ is negative (positive), which holds when the manager is more (less) likely to win because $g(x)$ is unimodal.
            For strictly unimodal $g(x)$, the condition is necessary and sufficient.
            For the uniform distribution, $\frac{de_m}{d\theta_n}=0$ because $g'(x)\equiv0$ on the support.
            }
     \end{enumerate}
     
\end{lemma}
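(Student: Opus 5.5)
We will apply the implicit function theorem to the system \eqref{FOC1}--\eqref{FOC2}. Write $x\equiv e_m-e_n+h$, $S\equiv e_m+e_n$, and the equations as $\Phi_m(e_m,e_n;z)=0$, $\Phi_n(e_m,e_n;z)=0$ for a parameter $z\in\{h,\alpha,\theta_m,\theta_n,\sigma\}$. The large-variance assumption (via Lemma~\ref{lem:prelim} and the bounded-effort Lemma~\ref{lem:bounded}) makes $g$ and $g'/g$ uniformly small on the relevant range. The Jacobian is therefore dominated by its diagonal terms $-c''(e_i)/\theta_i<0$, and the off-diagonal terms are $O(g)+O(S|g'|)$. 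Hence the determinant $D>0$, and the second-order conditions give a unique equilibrium. By Cramer's rule,
\[
\frac{de_m}{dz}=\frac{-\partial_z\Phi_m\,\partial_{e_n}\Phi_n+\partial_z\Phi_n\,\partial_{e_n}\Phi_m}{D},
\]
and similarly for $e_n$. Up to terms of relative order $g$ and $S g'$, the sign of $de_m/dz$ is therefore the sign of $\partial_z\Phi_m$. The exception is a parameter that enters only $\Phi_n$, where the sign is driven by the cross term.

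\textbf{Direct (strict) effects.} For $h$ we have $\partial_h\Phi_m=\alpha(Sg'+2g)$ and $\partial_h\Phi_n=\alpha(Sg'-2g)$. Since $|Sg'|\le S\,|g'/g|\,g\ll g$, these have signs $+$ and $-$ respectively, which gives $e_m$ strictly increasing and $e_n$ strictly decreasing in $h$. For $\alpha$, $\partial_\alpha\Phi_i=c'(e_i)/(\alpha\theta_i)>0$ at an interior solution, so both efforts are strictly increasing. For $\theta_m$, $\partial_{\theta_m}\Phi_m=c'(e_m)/\theta_m^2>0$ and $\partial_{\theta_m}\Phi_n=0$, so $e_m$ is strictly increasing; the same argument applied to $\theta_n$ gives $e_n$ strictly increasing. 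Interiority comes from $c'(0)=0$ together with the marginal benefit $\alpha(Sg+G)>0$ at zero effort.

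\textbf{Cross effects and $\sigma$.} For $\theta_n$, only $\Phi_n$ moves, so
\[
\operatorname{sign}\frac{de_m}{d\theta_n}=\operatorname{sign}\bigl(\partial_{e_n}\Phi_m\bigr)=\operatorname{sign}\bigl(\alpha(g-Sg'(x)-g)\bigr)=-\operatorname{sign}\, g'(x),
\]
because $\partial_{e_n}\Phi_m=\alpha[g+Sg'\cdot(-1)-g]$. Unimodality and symmetry of $g$ give $g'(x)\le0$ iff $x\ge0$, that is, iff the manager is weakly favored. Symmetrically, $\operatorname{sign}(de_n/d\theta_m)=\operatorname{sign}(\partial_{e_m}\Phi_n)=\operatorname{sign}(\alpha S g'(x))$.

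The main task is then to locate the sign of $x=e_m-e_n+h$. For part (i), with $\theta_m\ge\theta_n$ and $h\ge0$, I first show $e_m\ge e_n$. Subtracting the FOCs gives
\[
c'(e_m)/\theta_m-c'(e_n)/\theta_n=\alpha(2G(x)-1).
\]
Suppose $e_m<e_n$. Then the left side is negative, forcing $x<0$, i.e.\ $h<e_n-e_m$. I would argue the contradiction by monotonicity: the map $(e_m,e_n)\mapsto$ difference is increasing in $e_m-e_n$ when $g$ is small, and a short case check closes it. With $e_m\ge e_n$ in hand, $x\ge h\ge0$, so $g'(x)\le0$ and $e_m$ is increasing in $\theta_n$. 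The mirrored case, and the corresponding statement for $e_n$ in $\theta_m$, follow by symmetry.

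For part (iii), efforts are bounded (Lemma~\ref{lem:bounded}), so $|e_m-e_n|\le \bar e$. Taking $\bar h_1=\bar h_2=\bar e$ makes $x>0$ (resp.\ $x<0$) regardless of abilities, and the signs follow from the previous display.

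For $\sigma$, write $g(x)=f(x/\sigma)/\sigma$ and $G=F(x/\sigma)$. Then
\[
\partial_\sigma\Phi_m=\alpha\bigl[-S(g+xg')/\sigma-xg/\sigma\bigr].
\]
The term $S g/\sigma$ dominates $S x g'/\sigma$ under the smallness of $g'/g$, so the sign is that of $-g(S+x)=-g(2e_m+h)$. This is negative iff $h>-2e_m$, which gives the footnote's condition and in particular $h\ge0$. Cross terms are of smaller order, so $de_m/d\sigma\le0$. The analogous calculation for $e_n$ yields the factor $2e_n-h$.

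\textbf{Main obstacle.} The main obstacle is making the ``dominance'' steps rigorous: I need uniform bounds showing that the second-order terms ($Sg'$, $xg'$, and off-diagonal Jacobian entries times $\partial_z\Phi$) never overturn the leading signs. For this I would invoke Lemma~\ref{lem:prelim} with the effort bound to fix $\sigma$ large enough, and keep every comparison in explicit ratio form, for instance $|Sg'|/g\le \bar S\sup|g'/g|$.
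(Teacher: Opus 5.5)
Your framework is the paper's: the implicit function theorem on \eqref{FOC1}--\eqref{FOC2}, large variance to sign the Jacobian, cross effects signed by $g'(x)$, and $\bar h_1=\bar h_2=\bar e$ for part (iii). The direct effects and the $\sigma$ computation go through as you describe. Two small corrections. First, $\partial_h\Phi_m=\alpha(Sg'+g)$ and $\partial_h\Phi_n=\alpha(Sg'-g)$; the $2g$ belongs to $\partial_{e_m}\Phi_m$, and no sign changes. Second, $D>0$ needs only the diagonal signs, because the off-diagonal product is $-\alpha^2S^2g'(x)^2\le 0$.

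The genuine difference is how you locate the sign of $x$ for the third bullets of (i) and (ii). The paper argues by continuation. At $h=0$, $\theta_m=\theta_n$, symmetry gives $x^*=0$, and the comparative statics $dx^*/d\theta_m>0$, $dx^*/d\theta_n<0$, $dx^*/dh>0$ carry $x^*\ge 0$ to every point with $\theta_m\ge\theta_n$, $h\ge 0$. You argue pointwise from the difference of the first-order conditions, but as written you stop before the contradiction. Having $e_m<e_n$ together with $x<0$ is perfectly consistent with $h\ge 0$, so nothing is yet contradicted.

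The missing line is the following. Set $\Psi(e_m,e_n)=c'(e_m)/\theta_m-c'(e_n)/\theta_n-\alpha\bigl(2G(e_m-e_n+h)-1\bigr)$. Then $\partial_{e_m}\Psi=c''(e_m)/\theta_m-2\alpha g(x)\ge \kappa/\bar\theta-2g(0)>0$ on $[0,\bar e]$, under exactly the paper's large-variance condition. So if $e_m<e_n$, then
\[
0=\Psi(e_m,e_n)<\Psi(e_n,e_n)=c'(e_n)\Bigl(\frac{1}{\theta_m}-\frac{1}{\theta_n}\Bigr)-\alpha\bigl(2G(h)-1\bigr)\le 0,
\]
a contradiction. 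The mirrored case follows by swapping roles and replacing $h$ with $-h$.

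With this inserted, your route is more self-contained. It is a single inequality at the given parameters, with no need for the equilibrium to be unique and differentiable along a path from the symmetric anchor. It also delivers the stronger conclusion $e_m\ge e_n$, not just $x\ge 0$. The paper's continuation route instead yields, as a by-product, that the winning probability is monotone in $(\theta_m,\theta_n,h)$. The same continuation device is reused in Lemma~\ref{lemma:2}(v).
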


\begin{remark}
    Because the abilities of both agents are fixed in this lemma, the results extend the literature on head starts (or handicaps) in tournaments with fixed prizes to a setting with output-dependent prizes. 
    The effects of the head start and noise are consistent with 
    this literature (Cf.~\cite{o1984economic, DrugovRyvkin2020}). 
    To our knowledge, the comparative statics of individual efforts with respect to the head start, noise, and abilities, which may be interdependent, are novel in this setting with output-dependent prizes.
\end{remark}

The proof is in Appendix~\ref{app:proofs}. Given the new hire’s ability $\theta_n$, a head start $h$ increases the manager’s chance of winning, encouraging the manager to invest more effort while discouraging the new hire from doing so. We will explore these two effects in detail in the next section, where we also show that, because of the output-dependent prizes, it can be suboptimal to give a head start to the lower-ability agent (even in the absence of managerial discretion), in contrast to the common wisdom of ``leveling the playing field'' (see Lemma~\ref{lemma:3}).

As the payout ratio $\alpha$ increases, the stakes in the tournament rise, providing stronger incentives for both the manager and the new hire to invest effort. Thus, both agents invest more effort as the payout ratio increases.

Within a location-scale family of the distribution $G$, as the scale parameter $\sigma$ (or the variance $\sigma^2$) increases, the difference in random shocks becomes more dispersed, and the performance evaluation becomes noisier.
Therefore, the manager and the new hire have weaker incentives to invest effort when the competition is fair (i.e., $h=0$) or the head start is in their favor. 
However, the effect on effort is reversed if the head start is sufficiently unfavorable to them.
For example, when the head start $h>0$ is sufficiently large, a noisier evaluation encourages the new hire to invest more effort, since it gives him some hope of winning despite the large bias against him.
 
Finally, as the manager’s ability $\theta_m$ increases, she invests more effort due to a lower marginal cost of effort. At the same time, the new hire is incentivized to invest more effort to outperform the manager if he is more likely to win than the manager.
Otherwise, if the new hire is less likely to win, he will invest less effort when the manager's ability increases.
Technically, this arises because the new hire's payoff is supermodular (submodular) in his own effort and the manager's effort when the new hire is more (less) likely to win.
By symmetry, as the new hire's ability $\theta_n$ increases, he invests more effort, whereas the manager invests more (less) effort if she is more (less) likely to win.
Intuitively, the manager is more (less) likely to win when her ability is high (low) compared to the new hire's or when the head start is large (small).

The following example illustrates these results under the uniform distribution and quadratic costs.

\begin{example*}[Uniform--Quadratic]
        Suppose $G \sim \operatorname{Unif}[-\frac{M}{2},\frac{M}{2}]$, $c(e)=e^2/2$, and $M$ is sufficiently large that all winning probabilities are interior. The equilibrium effort levels in the subgame $\Gamma(\theta_n)$ are given by
    \begin{align*} 
        e_m(\theta_n) &=  \frac{M/2+ h}{M/\alpha\theta_m -2},\\
        e_n(\theta_n) &=   \frac{M/2-h}{M/\alpha \theta_n -2}.
    \end{align*}
    Under the uniform distribution, an agent’s marginal benefit from increasing effort is independent of the opponent’s effort level.
    Consequently, strategic interdependence in effort choices disappears, and agents' equilibrium effort levels do not depend on their opponents’ abilities.%
    \footnote{
        With a uniform distribution, it is no longer necessary to assume that abilities are common knowledge to both agents in the tournament, allowing the possibility that the new hire does not know the manager's ability. 
        }
    The effort levels are increasing in the amount of head start in their favor, the payout ratio, and their own abilities.
    Moreover, both the manager's and the new hire's efforts decrease in the variance of $G$ if and only if the head start is not too unfavorable to them---i.e., $h\geq -\alpha \theta_m$ and $h\leq \alpha \theta_n$ respectively.
\end{example*}

\subsection{Hiring Stage}
At the hiring stage, the manager chooses $\theta_n\in\Theta$ to maximize her expected payoff in the subgame $\Gamma(\theta_n)$.
By Lemma~\ref{lemma:1}, because both $c(\cdot)$ and $g(\cdot)$ are continuously differentiable, $e_n(\theta_n)$ is continuously differentiable and increasing in $\theta_n$.
Therefore, we have
\begin{align} \label{FOC3}
   \frac{d  u_m (e_m(\theta_n),e_n(\theta_n))}{d \theta_n} =  
   \frac{\partial u_m}{\partial e_n} e_n'(\theta_n) =
   \alpha   [G(e_m-e_n+h) - (e_m+e_n)\cdot g(e_m-e_n+h)]e_n'(\theta_n).%
\end{align}
When the manager hires a higher-ability agent, the agent will invest more effort at the tournament stage, which increases the size of the tournament prize but also reduces the manager's probability of winning.
The new hire's ability affects the manager's payoff only through his own effort ($e_n$), and not through the manager's effort, as $e_m$ is chosen optimally by the manager at the tournament stage.
Therefore, the manager chooses $\theta_n$ to determine the new hire’s effort $e_n(\theta_n)$ in the subgame and maximize her expected payoff.

Hence, in the SPNE, when the hiring decision $\theta_n^*\in(\underline{\theta},\bar\theta)$, the first-order conditions for $\theta_n^*$ and the equilibrium effort levels $(e_m^*,e_n^*)$ are given by 
\begin{align}
&G(e_m^*-e_n^*+h) = (e_m^*+e_n^*)\cdot g(e_m^*-e_n^*+h) \label{eqm1} \\
&c'(e_n^* )/\theta_n^* =\alpha, \label{eqm2} \\
&c'(e_m^* )/\theta_m  = 2\alpha  \cdot G(e_m^* -e_n^* +h).\label{eqm3}
\end{align}
The optimal hiring decision $\theta_n^*$ balances the marginal benefit of hiring a higher-ability agent (i.e., a higher tournament prize) and the marginal cost (i.e., a lower probability of winning), as determined by equation~\eqref{eqm1}.
Given his ability $\theta_n^*$ chosen optimally by the manager, the first-order condition for the new hire's effort collapses to equation~\eqref{eqm2}:  in equilibrium, his marginal cost of effort equals its marginal effect on the expected tournament prize (i.e., $\partial \E[V]/\partial e_n = \alpha$).
On the other hand, the manager's marginal cost of effort is higher (lower) than its marginal effect on the expected prize if she is more (less) likely to win the tournament.

In general, it is not possible to obtain a closed-form solution.
When the hiring decision $\theta_n^*$ takes the corner solution $\theta_n^*\in\{\underline\theta,\bar\theta\}$, the comparative statics in Lemma~\ref{lemma:1} apply.
For an interior solution, we obtain the following results.

\begin{lemma}%
\label{lemma:2}
     Assume $\theta_n^*\in(\underline{\theta},\bar\theta)$.
     The equilibrium effort levels $(e_m^*,e_n^*)$ and hiring decision $\theta_n^*$ are given by equations~\eqref{eqm1}--\eqref{eqm3}.
     The following comparative statics hold.
    \begin{enumerate} [label=(\roman*)]
        \item The manager's effort $e_m^*$ is
        \begin{itemize}
            \item strictly increasing in the head start $h$, the payout ratio $\alpha$, and her ability $\theta_m$;
            \item strictly decreasing in the noise in performance evaluation $\sigma$ if $h\geq0$.
            \footnote{
            The necessary and sufficient condition is that the head start is not too unfavorable to the manager, i.e., $h>-2e_m^*$.
            } 
        \end{itemize}
        
        \item
        The hiring decision $\theta_n^*$ is strictly increasing in $h$ and $\sigma$, and strictly decreasing in $\alpha$.
        
        \item 
        The new hire's effort $e_n^*$ is strictly increasing in $h$ and $\sigma$.

        \item There exist $\bar h_1,\bar h_2>0$ such that if $h>\bar h_1$ ($h<-\bar h_2$), then 
        \begin{itemize}
        \item $\theta_n^*$ is increasing (decreasing) in $\theta_m$;
        \item $e_n^*$ is increasing (decreasing) in $\theta_m$ and $\alpha$.
        \footnote{The more precise condition for each statement in (iv) and (v) is that the manager is more (less) likely to win, which is necessary and sufficient for strictly unimodal $g(x)$.}
        \end{itemize}

        \item There exist $\bar\theta_m, \underline\theta_m\in \Theta$ such that if $\theta_m>\bar\theta_m$ ($\theta_m<\underline\theta_m$) and $h\geq0$ ($h\leq0$), then
        \begin{itemize}
        \item $\theta_n^*$ is increasing (decreasing) in $\theta_m$;
        \item $e_n^*$ is increasing (decreasing) in $\theta_m$ and $\alpha$.
        \end{itemize}
    \end{enumerate}
    
\end{lemma}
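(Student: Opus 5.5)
The plan is to reduce the three-equation system \eqref{eqm1}--\eqref{eqm3} to a single scalar equation and then apply the implicit function theorem. The key observation is that $\theta_n^*$ enters only \eqref{eqm2}. So $(e_m^*,e_n^*)$ are pinned down by \eqref{eqm1} and \eqref{eqm3} alone, and afterwards $\theta_n^*=c'(e_n^*)/\alpha$.

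\textbf{Reduction to one equation.} Let $x\equiv e_m^*-e_n^*+h$ and $\varphi(x)\equiv G(x)/g(x)$. Then \eqref{eqm1} reads $e_m^*+e_n^*=\varphi(x)$, so
\[
e_m^*=\tfrac12\bigl(\varphi(x)+x-h\bigr),\qquad e_n^*=\tfrac12\bigl(\varphi(x)-x+h\bigr).
\]
Substituting into \eqref{eqm3} gives one equation in $x$:
\[
H(x;h,\alpha,\theta_m,\sigma)\equiv c'\!\Bigl(\tfrac{\varphi(x)+x-h}{2}\Bigr)-2\alpha\theta_m G(x)=0 .
\]
We have $\varphi'(x)=1-G(x)g'(x)/g(x)^2$. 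By Lemma~\ref{lem:bounded}, $x$ ranges over a bounded set, and by the large-variance assumption (Lemma~\ref{lem:prelim}) $g$ and $g'/g$ are small there. Hence $|\varphi'-1|$ is small, and $H_x=\tfrac12 c''(\cdot)(1+\varphi')-2\alpha\theta_m g>0$, using that $c''$ is bounded below on the bounded effort set. This gives uniqueness and differentiability of $x$ in the parameters, with $x_p=-H_p/H_x$ for each parameter $p$.

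\textbf{Head start and payout ratio.} Since $H_h=-\tfrac12c''$, we get $x_h=\tfrac{c''/2}{H_x}>0$. The two efforts then move as follows.
\begin{itemize}
\item $\tfrac{de_m^*}{dh}=\tfrac12\bigl(x_h(1+\varphi')-1\bigr)$. This is strictly positive because $H_x<\tfrac12c''(1+\varphi')$.
\item $\tfrac{de_n^*}{dh}=\tfrac12\bigl(1+x_h(\varphi'-1)\bigr)$. This is strictly positive because $|\varphi'-1|$ is small and $x_h$ is bounded.
\end{itemize}
Since $\theta_n^*=c'(e_n^*)/\alpha$ is increasing in $e_n^*$, $\theta_n^*$ is strictly increasing in $h$. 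This gives (i)–(iii) for $h$.

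For $\alpha$, $H_\alpha=-2\theta_mG<0$, so $x_\alpha>0$ and $e_m^*$ rises. Meanwhile $de_n^*/d\alpha=\tfrac12 x_\alpha(\varphi'-1)$ is small, so in $\tfrac{d\theta_n^*}{d\alpha}=\tfrac{c''e_{n,\alpha}}{\alpha}-\tfrac{c'(e_n^*)}{\alpha^2}$ the second term dominates and $\theta_n^*$ strictly decreases in $\alpha$. The argument for $\theta_m$ is identical, with $H_{\theta_m}=-2\alpha G<0$.

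For $\sigma$, I would write $G(x)=F(x/\sigma)$ and $\varphi(x)=\sigma\,\Phi(x/\sigma)$ with $\Phi=F/f$, and differentiate $H$ in $\sigma$. The explicit factor $\sigma$ in $\varphi$ raises $e_m^*+e_n^*$, which should make $e_n^*$ and hence $\theta_n^*$ increase. For $e_m^*$, the sign condition $h>-2e_m^*$ (equivalently $x+\varphi$ sufficiently large relative to $h$) should appear exactly as in Lemma~\ref{lemma:1}.

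\textbf{Sign-dependent statements (iv)–(v).} Here $de_n^*/dp=\tfrac12x_p(\varphi'-1)$ for $p\in\{\theta_m,\alpha\}$, and $\operatorname{sign}(\varphi'-1)=\operatorname{sign}(-g'(x))$. By unimodality and symmetry of $g$, this is positive when $x>0$ (the manager is more likely to win) and negative when $x<0$. The same sign carries to $\theta_n^*$ in $\theta_m$, because $\theta_n^*=c'(e_n^*)/\alpha$ with $\alpha$ fixed.

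It therefore remains to show that $x>0$ under the stated conditions. For (iv), efforts are bounded uniformly (Lemma~\ref{lem:bounded}), so $h>\bar h_1$ forces $x>0$, and symmetrically $h<-\bar h_2$ forces $x<0$. For (v), which I expect to be the main obstacle, I would argue by contradiction. Suppose $h\ge0$ and $x\le0$. Then $G\le\tfrac12$, and \eqref{eqm3} gives $c'(e_m^*)\le\alpha\theta_m$. Combining this with $e_m^*\le e_n^*-h$ and $e_m^*+e_n^*=\varphi(x)\le\varphi(0)$ should bound $e_m^*$ from above, independently of $\theta_m$. I would then use $x_{\theta_m}>0$ and monotonicity in $\theta_m$ to show that for $\theta_m$ above some threshold $\bar\theta_m$ the equation $H=0$ cannot hold with $x\le0$. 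The threshold $\underline\theta_m$ for $h\le0$ would be obtained symmetrically. If this direct bound fails, the fallback is to define $\bar\theta_m$ as the value of $\theta_m$ at which $x$ crosses zero, using the strict monotonicity $x_{\theta_m}>0$.
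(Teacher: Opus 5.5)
Your reduction of \eqref{eqm1}--\eqref{eqm3} to the scalar equation $H(x)=0$ is valid, and it is a tidier route than the paper's $2\times2$ Jacobian $\mathbf{J}_2$. Since $\varphi(x)-x=J(x)$, your $e_n^*=\tfrac12(J(x)+h)$ is the paper's identity $2e_n^*-h=J(x^*)$. With $D=2H_x/\theta_m$, your derivatives in $h$, $\alpha$, $\theta_m$ coincide with the paper's (e.g.\ $de_m^*/dh=\alpha\theta_m g(x)/H_x=2\alpha g(x)/D$).

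The genuine gap is part (v). Your contradiction argument cannot locate a threshold inside $\Theta$. From $x\le0$ and $h\ge0$ you only get $e_m^*\le\varphi(0)/2=1/(4g(0))$, a bound that does not involve $\theta_m$ and grows with $\sigma$. Meanwhile \eqref{eqm3} with $G(x)\le\tfrac12$ and $\theta_m\le\bar\theta$ gives only $c'(e_m^*)\le\alpha\bar\theta$. These two facts are compatible for every $\theta_m\in\Theta$ once the variance is large, and $\theta_m$ cannot be pushed beyond $\bar\theta$ to force $e_m^*$ up. Your fallback (let $\bar\theta_m$ be where $x$ crosses zero) presupposes that $x$ changes sign inside $\Theta$, which is precisely what must be shown; otherwise it only yields the vacuous choice $\bar\theta_m=\bar\theta$.

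The missing idea is the pair of anchors at $h=0$ that the paper uses:
\begin{itemize}
\item At $\theta_m=\bar\theta$ one automatically has $\theta_n^*\le\bar\theta=\theta_m$. The tournament-stage comparison in step (5) of the proof of Lemma~\ref{lemma:1} ($\theta_m\ge\theta_n$ and $h\ge0$ imply $x\ge0$) then gives $x^*\ge0$.
\item As $\theta_m\to0$, \eqref{eqm3} forces $e_m^*\to0$ while $e_n^*=\tfrac12J(x^*)\ge1/(4g(0))$, so $x^*<0$.
\end{itemize}
Only once the zero of $x$ at $h=0$ is located in $\Theta$ this way do your facts $x_{\theta_m}>0$ and $x_h>0$ deliver $x^*\ge0$ for all $\theta_m\ge\bar\theta_m$, $h\ge0$. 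Symmetrically they give $x^*\le0$ for $\theta_m\le\underline\theta_m$, $h\le0$.

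The $\sigma$ claims in (i)--(iii) are only asserted. The remark that ``the factor $\sigma$ raises $e_m^*+e_n^*$'' ignores that $x$ moves as well. Carrying out your own reduction with $\partial_\sigma\varphi=(\varphi-x\varphi')/\sigma$ and $\partial_\sigma G=-xg/\sigma$ gives
\[
\frac{de_m^*}{d\sigma}=-\frac{\alpha\theta_m\,g(x)\,(2e_m^*+h)}{\sigma H_x}
\]
and
\[
\frac{de_n^*}{d\sigma}=\frac{J(x)\bigl(c''(e_m^*)-2\alpha\theta_m g(x)\bigr)+c''(e_m^*)\,x\,G(x)\,g'(x)/g(x)^2}{2\sigma H_x},
\]
which are exactly the paper's expressions. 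The first gives the condition $2e_m^*+h>0$. In the second, the last term is nonpositive because $xg'(x)\le0$ by unimodality. Positivity therefore requires the lower bound $J(x)\ge1/(2g(0))$ of Lemma~\ref{lem:prelim}(ii) to dominate that term under large variance. This is the ingredient your sketch does not identify, and it is what then gives $d\theta_n^*/d\sigma>0$ through $\theta_n^*=c'(e_n^*)/\alpha$.
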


A higher head start encourages the manager to invest more effort. Additionally, increasing the head start further insulates her from competition, leading her to hire a higher-ability agent until the strongest candidate is hired.
The head start affects the new hire’s effort through two channels: on the one hand, it discourages effort by lowering his chance of winning; on the other, it leads to a higher-ability hire, which in turn increases his effort. Overall, the net effect on the new hire's effort is positive. We will discuss them in detail in the next section.

As the difference in random shocks becomes more dispersed (captured by a higher $\sigma$), performance evaluation becomes noisier, so the manager has weaker incentives to invest effort when the head start favors her ($h\geq0$). 
A noisier performance evaluation also makes the manager less concerned about competing with a higher-ability agent, leading her to hire a higher-ability candidate. The variance affects the new hire’s effort through two channels: on the one hand, it discourages effort by making performance evaluation noisier; on the other, it results in a higher-ability new hire, which in turn increases his effort. Overall, the net effect on the new hire’s effort is positive.

As the payout ratio  $\alpha$ increases, the stakes in the tournament rise, providing stronger incentives for the manager to invest effort. The payout ratio also affects the new hire’s effort through two channels. On the one hand, it directly increases the new hire’s incentives to invest effort.
On the other hand, it increases the manager’s incentive to prevent competition by hiring a lower-ability agent, especially when she is less likely to win.
Overall, the net effect on the new hire’s effort is positive (negative) if the manager is more (less) likely to win.
The manager is more (less) likely to win if her ability is sufficiently high (low) and the head start favors (disfavors) her.

Similarly, as the manager’s ability $\theta_m$ increases, she will invest more effort because her marginal cost of effort is lower. The manager’s ability also affects the new hire’s effort through two channels: if the manager is more likely to win, it incentivizes the new hire to exert more effort to outperform the manager; on the other hand, a higher-ability manager chooses a higher-ability agent, which further increases the new hire’s effort. Overall, the net effect on the new hire’s effort is positive (negative) if the manager is more (less) likely to win.

The following example illustrates the results with a uniform distribution and a quadratic cost function.

\begin{example*}[Uniform--Quadratic]
    Suppose $G \sim \operatorname{Unif}[-\frac{M}{2},\frac{M}{2}]$ and $c(e)=e^2/2$, with $M$ sufficiently large so that all winning probabilities are interior. 
    The equilibrium hiring decision and effort levels are given by
    \begin{align*} 
       \theta_n^*  &= \min{\left(\frac{h+M/2}{2\alpha}, \bar \theta \right)}   \label{theta_n} ,\\
         e_m^*  &= \frac{M/2+ h  }{M/\alpha\theta_m -2},\\
          e^*_n &\equiv e_n (\theta_n^* )  =    \min{\left(\frac{h+M/2}{2},  \frac{M/2-h}{M/\alpha\bar \theta -2}   \right)}.
    \end{align*}
    The uniform assumption removes the strategic interdependence of effort choices in the tournament stage (because $g'(x)=0$), so neither the new hire's effort $e_n^*$ nor his ability $\theta_n^*$ depends on the manager's ability $\theta_m$.
    Furthermore, the new hire's effort $e_n^*$ is also independent of the payout ratio $\alpha$ when $\theta_n^*<\bar\theta$  because its positive effect on the new hire's effort is exactly offset by its negative effect through the decrease in the new hire's ability due to increased managerial sabotage.
\end{example*}

\section{Optimal Head Start}
The effects of the head start on the expected total output $\E[y^*]=e_m^*+e_n^*$ can be decomposed into three components:
\begin{equation*}
 \frac{d\E[y^*]}{dh} %
 =  {\frac{de_m^*}{dh}} +  \frac{\partial e_n (\theta_n,h) }{\partial h}\Big|_{\theta_n=\theta_n^*(h)} +   \frac{\partial e_n(\theta_n,h)}{\partial  \theta_n}\Big|_{\theta_n=\theta_n^*(h)}  \theta_n^{*\prime}(h).
\end{equation*}
According to Lemmas 1 and 2, we identify the three effects and their directions as follows: 
\begin{enumerate}
    \item Encouragement effect on $e_m$: $\frac{de_m^*}{dh}>0$.
    \item Discouragement effect on $e_n$: $\frac{\partial e_n (\theta_n^*(h),h) }{\partial h}<0$.
    \item Hiring effect on $e_n$ (through $\theta^*_n$): $\frac{\partial e_n(\theta_n^*(h),h)}{\partial  \theta_n} \theta_n^{*\prime}(h)\geq0$ ($>0$ if $\theta_n^*(h) < \bar\theta$).
\end{enumerate}

Intuitively, the head start increases the manager's probability of winning. Since the tournament prize is increasing in effort, it has an \emph{encouragement effect} on the manager by increasing her marginal return to effort.   
However, the impact of the head start on the new hire's effort is rather mixed. On the one hand, holding his ability $\theta_n$ fixed, the head start reduces his marginal return to effort, thereby having a \emph{discouragement effect}. On the other hand, the head start partially insulates the manager from competition and leads the manager to hire a higher-ability agent. This \emph{hiring effect} results in greater effort from the higher-ability new hire.

Importantly,
by Lemma~\ref{lemma:2}, the hiring effect dominates the discouragement effect until the highest-ability agent is hired (i.e., when $\theta_n^*(h)<\bar \theta$) because
\begin{equation*}
 \frac{de_n^*}{dh} = \underbrace{\frac{\partial e_n (\theta_n,h) }{\partial h}\Big|_{\theta_n=\theta_n^*(h)}}_{\text{Discouragement effect}} + 
 \underbrace{ \frac{\partial e_n(\theta_n,h)}{\partial  \theta_n}\Big|_{\theta_n=\theta_n^*(h)}  \theta_n^{*\prime}(h)}_{\text{Hiring effect}} >0
 \label{eq: DECE}
 \end{equation*}
Thus, the head start always increases total output until the strongest candidate is hired, so the optimal head start must (at least) induce the manager to hire the highest-ability agent $\bar\theta$.

\begin{example*}[Uniform--Quadratic]
    Assume $G \sim \operatorname{Unif}[-\frac{M}{2},\frac{M}{2}]$ and $c(e)=e^2/2$. 
    When $\theta_n^*(h) \in(\underline\theta, \bar\theta)$,
    the encouragement and discouragement effects are given by
    $$\frac{de_m^*}{dh} = \frac{1}{M/\alpha \theta_m-2}>0,\quad \frac{\partial e_n (\theta_n^*,h) }{\partial h} = - \frac{1}{M/\alpha \theta_n^* -2}<0$$
    respectively.
    The sum of the discouragement and the hiring effects is 
    $\frac{de_n^*}{dh}  = \frac{1}{2}$.
  \end{example*} 

Because $\theta_n^{*}(h)$ is strictly increasing in $h$ until the strongest candidate is hired (i.e., $\theta_n^*(h)=\bar\theta$), we denote by $\bar h(\alpha)$ the head start just enough to induce the manager to hire the highest-ability candidate---i.e., $\theta_n^*(h)=\bar \theta$ for all $h\geq\bar h(\alpha)$ and $\theta_n^*(h)<\bar \theta$ for all $h<\bar h(\alpha)$.
In general, by Lemma~\ref{lemma:2}, $\bar h(\alpha)$ is increasing in $\alpha$ and decreasing in the variance of $G$.
This is because a higher payout ratio $\alpha$ or a lower noise in the output evaluation increases the manager's incentives to sabotage, and therefore requires a higher head start to induce the manager to hire the strongest candidate.

\begin{example*}[Uniform--Quadratic]
    For $G \sim \operatorname{Unif}[-\frac{M}{2},\frac{M}{2}]$ and $c(e)=e^2/2$, 
    the head start just enough to induce the manager to hire the strongest candidate is 
    $\bar h(\alpha) = 2\alpha\bar\theta-M/2$.
  \end{example*}

Having shown that the optimal head start is at least \(\bar h(\alpha)\), we next ask whether it is desirable to give the manager a further head start.
As $\theta_n$ cannot increase beyond $\bar\theta$, a further head start will no longer have the hiring effect.
Hence, we need to compare the discouragement effect (on the new hire) and the encouragement effect (on the manager) in the absence of the hiring effect (i.e., holding $\theta_n$ fixed).
The following lemma provides a necessary and sufficient condition under which the discouragement effect dominates the encouragement effect.
\begin{lemma} \label{lemma:3}
   Given $\theta_n$, the expected total output ($e_m^*+e_n^*$) is decreasing in $h$ if and only if 
    \begin{equation}\label{condition}
            (e_m^*+e_n^*)  \frac{g'(e_m^* -e_n^* + h)}{g(e_m^* -e_n^* + h)} \leq \frac{c''(e_m^*)/\theta_m - c''(e_n^*)/ \theta_n}{c''(e_m^*)/\theta_m + c''(e_n^*)/\theta_n}.
    \end{equation}
\end{lemma}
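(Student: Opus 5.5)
We will differentiate the system \eqref{FOC1}--\eqref{FOC2} implicitly with respect to $h$, holding $\theta_n$ fixed. Write $x = e_m^*-e_n^*+h$, $S=e_m^*+e_n^*$, $g=g(x)$, $g'=g'(x)$, $G=G(x)$, and set $a = c''(e_m^*)/\theta_m$ and $b = c''(e_n^*)/\theta_n$. Differentiating \eqref{FOC1} gives
\begin{align*}
\alpha\big[(de_m+de_n)g + S g'(de_m-de_n+dh) + g(de_m-de_n+dh)\big] &= a\,de_m,\\
\alpha\big[(de_m+de_n)g + S g'(de_m-de_n+dh) - g(de_m-de_n+dh)\big] &= b\,de_n,
\end{align*}
where the second line comes from \eqref{FOC2}, using $\frac{d}{dx}G(-x)=-g(x)$. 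The natural move is to change variables to $\Sigma = de_m+de_n$ and $\Delta = de_m-de_n$. Adding the two equations gives
\[
2\alpha g\,\Sigma + 2\alpha S g'(\Delta+dh) = a\,de_m + b\,de_n ,
\]
and subtracting them gives
\[
2\alpha g(\Delta+dh) = a\,de_m - b\,de_n .
\]

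The plan is to solve this $2\times 2$ linear system by Cramer's rule for $d\Sigma/dh = d(e_m^*+e_n^*)/dh$. Substitute $de_m = (\Sigma+\Delta)/2$ and $de_n = (\Sigma-\Delta)/2$, so that $a\,de_m \pm b\,de_n$ become linear in $\Sigma$ and $\Delta$. This yields two equations of the form $A_{11}\Sigma + A_{12}\Delta = r_1\,dh$ and $A_{21}\Sigma + A_{22}\Delta = r_2\,dh$. Reading off the coefficients:
\begin{itemize}
\item first equation: $A_{11} = 2\alpha g - (a+b)/2$, $A_{12} = 2\alpha S g' - (a-b)/2$, $r_1 = -2\alpha S g'$;
\item second equation: $A_{21} = -(a-b)/2$, $A_{22} = 2\alpha g - (a+b)/2$, $r_2 = -2\alpha g$.
\end{itemize}
Cramer's rule then gives
\[
\frac{d\Sigma}{dh} = \frac{r_1A_{22} - r_2A_{12}}{\det A}.
\]

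The sign of the denominator comes from the standing assumption, invoked in Lemma~\ref{lemma:1} and Lemma~\ref{lem:prelim}, that the variance of $G$ is large, so that $g$ and $g'/g$ are small. Then $\det A \approx \tfrac14[(a+b)^2-(a-b)^2] = ab > 0$; this positive determinant is exactly the stability/uniqueness condition already used for the comparative statics in Lemma~\ref{lemma:1}. The numerator is
\[
-2\alpha S g'\Big(2\alpha g - \tfrac{a+b}{2}\Big) + 2\alpha g\Big(2\alpha S g' - \tfrac{a-b}{2}\Big) = \alpha\big[S g'(a+b) - g(a-b)\big],
\]
because the $4\alpha^2 S g g'$ terms cancel. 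So $d\Sigma/dh \le 0$ if and only if $S g'(a+b) \le g(a-b)$. Dividing by $g(a+b)>0$, this is exactly \eqref{condition}.

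The main obstacle is bookkeeping: getting the signs right in the differentiated FOCs (especially the $G(-x)$ term) and confirming that the $gg'$ cross-terms cancel exactly. The exact cancellation is what makes the condition clean, with no approximation needed in the numerator. The determinant's sign also needs a careful justification. I would argue it directly from the second-order and uniqueness assumptions, which make the Jacobian of the FOC system a P-matrix with positive determinant. That rules out the need for $\det A \approx ab$ to be anything more than a sign statement.
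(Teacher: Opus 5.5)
Your proof is correct and follows essentially the paper's argument. The paper likewise differentiates \eqref{FOC1}--\eqref{FOC2} implicitly in $h$ and sums the expressions for $de_m^*/dh$ and $de_n^*/dh$ from Lemma~\ref{lemma:1}, obtaining the same numerator $\alpha[(g+Sg')\,c''(e_n^*)/\theta_n-(g-Sg')\,c''(e_m^*)/\theta_m]$ over $\det(\mathbf{J}_1)>0$; your sum/difference change of variables is only a bookkeeping variant, and your $\det A$ in fact equals $\det(\mathbf{J}_1)$ exactly, since the row and column changes have determinants $-2$ and $-1/2$.
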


\begin{remark}
For quadratic costs $c(e)=e^2/2$, this condition simplifies to
\begin{equation*}\label{condition'}
    (e_m^*+e_n^*)  \frac{g'(e_m^* -e_n^* + h)}{g(e_m^* -e_n^* + h)} \leq \frac{\theta_n-\theta_m}{\theta_n+\theta_m},
\end{equation*}
which holds if $\theta_n>\theta_m$ and the variance is sufficiently large. 
In the uniform--quadratic setting, the condition is equivalent to $\theta_n\geq\theta_m$.

For the normal distribution $G \sim \mathcal{N}(0,\sigma^2)$ and $c(e) = e^2/2$, the condition is equivalent to
$$(e_m^*+e_n^*)(e_n^*-e_m^* - h)  \leq \frac{\theta_n-\theta_m}{\theta_n+\theta_m}\sigma^2,$$
which holds for all $h\geq \bar h(\alpha)$ if $\theta_n>\theta_m$ and $\sigma^2$ is large.
\end{remark}

\begin{remark}\label{rmkl}
    Lemma~\ref{lemma:3} also implies that, even without managerial discretion (i.e., when abilities are exogenous), it can be suboptimal to give a head start to the lower-ability agent in tournaments with output-dependent prizes, in contrast to the common wisdom of ``leveling the playing field.'' 

    Indeed, when $\theta_n=\bar\theta$ is fixed, handicapping the manager instead may generate higher total output.
    For each $\alpha$, the loss in profit from choosing the head start $\bar h(\alpha)$ rather than the output-maximizing head start for fixed agent types captures the opportunity cost of delegating the hiring decision to the manager relative to the full information benchmark.
    \footnote{
    Suppose that $c'''(\cdot)\geq 0$, $g(\cdot)$ is log-concave, and the variance is sufficiently large so that the comparative statics in Lemma~\ref{lemma:1} hold. Then, Lemma~\ref{lemma:3} implies that there exists a cutoff $\tilde h$ (possibly equal to $-\bar H$) such that condition~\eqref{condition} holds if and only if $h\geq \tilde h$. Thus, when abilities are fixed, $\tilde h$ characterizes the output-maximizing head start.
    }
\end{remark}
By Lemma~\ref{lemma:3}, if the regularity condition~\eqref{condition} holds for all $h> \bar h(\alpha)$, the discouragement effect dominates the encouragement effect when abilities are exogenous. Therefore, any further head start beyond $\bar h(\alpha)$ decreases total effort.

\begin{example*}[Uniform--Quadratic]
    If $G \sim \operatorname{Unif}[-\frac{M}{2},\frac{M}{2}]$ and $c(e)=e^2/2$, condition~\eqref{condition} is satisfied if $\theta_n\geq \theta_m$.
    Therefore, once the highest-ability agent $\theta_n=\bar\theta$ is hired by the manager, any further head start given to the manager will decrease total output because the discouragement effect dominates the encouragement effect, that is,
\begin{equation*}
    \frac{d(e_m^*+ e_n^*)}{dh}  = \underbrace{\frac{1}{M/\alpha \theta_m-2}}_{\text{Encouragement effect}} - \underbrace{\frac{1}{M/\alpha\bar \theta-2}}_{\text{Discouragement effect}} \leq 0 \mbox{ when } h > \bar h(\alpha).
\label{eq: EEDE}
\end{equation*}
The inequality is strict if $\theta_m<\bar\theta$.
Thus, the optimal head start is $\bar h(\alpha) = 2\alpha\bar\theta-M/2$.
\end{example*}

Consequently, Lemmas \ref{lemma:2} and \ref{lemma:3} imply the following proposition on the optimal head start.
\begin{proposition} %
    \label{prop:1}
    For any given $\alpha\in(0,1)$, the optimal head start ensures that the manager hires the highest-ability agent. %
    Moreover, if condition~\eqref{condition} holds for all $h> \bar h(\alpha)$ and $\theta_n=\bar\theta$, a head start of $\bar h(\alpha)$ is optimal and is just enough to induce the manager to hire the highest-ability agent.
    \footnote{
        In the uniform--quadratic case, $\bar h(\alpha)$ is the \emph{unique} optimal head start whenever $\theta_m< \bar\theta$.
        More generally, uniqueness holds if condition~\eqref{condition} holds with strict inequality for all $h> \bar h(\alpha)$ and $\theta_n=\bar\theta$.
    }
\end{proposition}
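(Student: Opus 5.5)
The plan is to fix $\alpha\in(0,1)$ and study the principal's profit $(1-\alpha)(e_m^*(h)+e_n^*(h))$ as a function of $h\in[-\bar H,\bar H]$ alone. Since $1-\alpha>0$ is constant, maximizing profit is the same as maximizing expected total output $\E[y^*](h)=e_m^*(h)+e_n^*(h)$. The argument has two parts. First, any $h<\bar h(\alpha)$ is dominated. Second, under condition~\eqref{condition}, $\E[y^*]$ is nonincreasing on $[\bar h(\alpha),\bar H]$.

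\textbf{Part one.} Take the region $h<\bar h(\alpha)$. By definition of $\bar h(\alpha)$ and Lemma~\ref{lemma:2}(ii), here $\theta_n^*(h)<\bar\theta$, and $\theta_n^*$ is strictly increasing in $h$. Two facts are needed.

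Where $\theta_n^*(h)\in(\underline\theta,\bar\theta)$, Lemma~\ref{lemma:2}(i) and (iii) give $de_m^*/dh>0$ and $de_n^*/dh>0$. So the encouragement effect is positive, and the hiring effect dominates the discouragement effect.

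Where $\theta_n^*(h)=\underline\theta$ is a corner, there is no hiring effect. I would handle this region separately. Once $h$ is large enough, equation~\eqref{eqm1} moves $\theta_n^*$ into the interior, and a lower corner is never profit-improving relative to the interior path. If needed, I would assume $\underline\theta$ is small enough that the lower corner arises only for very unfavorable $h$, as the setup allows.

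Combining these with continuity of equilibrium efforts in $h$ (from the implicit function theorem applied to \eqref{FOC1}--\eqref{FOC2} and \eqref{eqm1}--\eqref{eqm3}), $\E[y^*]$ is strictly increasing on the interior stretch up to $\bar h(\alpha)$. Hence for any $h<\bar h(\alpha)$ in that stretch, $\E[y^*](h)<\E[y^*](\bar h(\alpha))$. So the optimum has $h\ge\bar h(\alpha)$, and the manager hires $\bar\theta$. This gives the first claim.

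\textbf{Part two.} For $h\ge\bar h(\alpha)$ we have $\theta_n^*(h)=\bar\theta$. Total output therefore coincides with the subgame output of $\Gamma(\bar\theta)$, which is a tournament with fixed abilities $(\theta_m,\bar\theta)$. Lemma~\ref{lemma:3} applies with $\theta_n=\bar\theta$: if \eqref{condition} holds at every $h>\bar h(\alpha)$, then $d(e_m^*+e_n^*)/dh\le 0$ there. By continuity at $\bar h(\alpha)$, $\E[y^*](h)\le\E[y^*](\bar h(\alpha))$ for all $h\ge\bar h(\alpha)$. Together with part one, $\bar h(\alpha)$ is a global maximizer. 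If the inequality in \eqref{condition} is strict, the derivative is strictly negative and the maximizer is unique.

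\textbf{Main obstacle.} The delicate step is continuity and the kink at $\bar h(\alpha)$. I need $\theta_n^*(h)\to\bar\theta$ as $h\uparrow\bar h(\alpha)$, so that the interior solution of \eqref{eqm1}--\eqref{eqm3} pastes continuously onto the corner solution of \eqref{FOC1}--\eqref{FOC2} at $\theta_n=\bar\theta$. I would verify this from \eqref{FOC3}: $\bar h(\alpha)$ is exactly the $h$ at which $G-(e_m+e_n)g=0$ holds at $\theta_n=\bar\theta$, and the manager's objective is continuous in $(h,\theta_n)$. I would then invoke the uniqueness of the subgame equilibrium, which the paper assumes, together with Berge's maximum theorem to get continuity of $\theta_n^*$ and of the efforts.
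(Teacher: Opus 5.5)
Your proposal is correct and essentially reproduces the paper's own argument. The paper obtains Proposition~\ref{prop:1} directly from Lemma~\ref{lemma:2}: while $\theta_n^*(h)<\bar\theta$, $de_m^*/dh>0$ and the hiring effect outweighs the discouragement effect, so total output rises up to $\bar h(\alpha)$. It then applies Lemma~\ref{lemma:3} at $\theta_n=\bar\theta$ for $h>\bar h(\alpha)$, which is exactly your two parts.

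The paper does not treat the lower-corner region $\theta_n^*=\underline\theta$, and your write-up only asserts it, but a level comparison closes it. By Lemmas~\ref{lemma:1} and~\ref{lemma:2}, $e_m^*$ is increasing in $h$ in every regime. At $\bar h(\alpha)$, equation~\eqref{eqm2} gives $c'(e_n^*)=\alpha\bar\theta$. At the lower corner, \eqref{FOC2} gives $c'(e_n)\le\alpha\underline\theta\,(1+2\bar e\sup g)<\alpha\bar\theta$ under the paper's small-$\underline\theta$ or large-variance assumptions, so output there is strictly below output at $\bar h(\alpha)$.
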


\begin{remark}
    We have assumed that the bound $\bar{H}$ on the head start is sufficiently large so that a head start of $\bar h(\alpha)$ is always feasible.
    Otherwise, if large head starts are infeasible due to fairness concerns, the optimal head start is $\min (\bar h(\alpha), \bar{H})$, and hiring sabotage may still occur.
\end{remark}

In words, the optimal head start always ensures that the manager hires the strongest candidate and thus eliminates hiring sabotage.
Furthermore, under a regularity condition that is satisfied when the variance is sufficiently large, the optimal head start is exactly at the level that induces the manager to hire the strongest candidate.

\section{Optimal Payout Ratio}
In the previous section, we have shown that under some assumptions, for any given $\alpha\in(0,1)$, the optimal head start is $\bar h(\alpha)$, which is just enough to ensure the manager hires the strongest candidate.
The principal also faces a trade-off when determining the optimal payout ratio: a higher payout ratio reduces the principal's share of total output but may incentivize agents to exert more effort.
According to Lemma~\ref{lemma:2}, the payout ratio affects the effort levels of the manager and the new hire in two ways.
On the one hand, a higher payout ratio encourages them to exert more effort. On the other hand, it affects the new hire's ability because the manager may hire a higher or lower-ability agent.

Given the optimal head start $\bar h(\alpha)$ from Proposition~\ref{prop:1}, the optimal payout ratio is given by
\begin{equation}\label{profitmax}
    \alpha^* \in \arg\max_{\alpha\in (0,1)} \Pi(\alpha),\quad \text{where } \Pi(\alpha) =  (1-\alpha ) (e_m^*(\alpha, \bar h(\alpha))+e_n^*(\alpha, \bar h(\alpha))).
\end{equation}
While an interior maximizer \(\alpha^*\in(0,1)\) exists,
its analytical solution and comparative statics are intractable, since the parameters $(\alpha,\sigma, \theta_m)$ affect efforts both directly and indirectly through $\bar h(\alpha)$.
\footnote{
The existence of an interior maximizer follows from the continuity of $\Pi(\alpha)$ on \([0,1]\) (because \(e_m^*(\alpha,h)\), \(e_n^*(\alpha,h)\), and $\bar h(\alpha)$ are all continuous in $\alpha$ and $h$), \(\Pi(0)=\Pi(1)=0\), and \(\Pi(\alpha)>0\) for all \(\alpha\in(0,1)\).
}
For tractability, we henceforth assume a uniform distribution $G \sim \operatorname{Unif}[-\frac{M}{2}, \frac{M}{2}]$
and a quadratic cost function $c(e)=e^2/2$ to remove the strategic interdependence of effort choices in the tournament stage (see also \cite{Konrad2009}, \cite{Ederer2010}, and \cite{BrownMinor2014}).

Given the optimal head start $\bar h(\alpha) = 2\alpha \bar \theta - M/2$, an increase in the payout ratio $\alpha$ has a positive effect on the effort levels of both agents. From the new hire's perspective, since $\theta_n^* = \bar \theta$, an increase in $\alpha$ incentivizes him to exert effort as $e_n^*(\alpha, \bar h (\alpha)) = \alpha \bar \theta$.
From the manager's perspective, an increase in $\alpha$ creates a greater incentive for the manager to both exert effort and engage in sabotage. The higher incentive to sabotage leads to an increase in the optimal head start, which further incentivizes the manager because of the {encouragement effect} of the head start, as $e_m^*(\alpha, \bar h (\alpha)) =  \frac{2\alpha \bar \theta}{M/\alpha\theta_m -2}$.%

\begin{proposition}
    \label{prop:2}
    Suppose that $G\sim\operatorname{Unif}[-\frac{M}{2},\frac{M}{2}]$,
    $c(e)=\frac{e^2}{2}$, and $M > (1 + \sqrt{2})\bar{\theta}$.
    Define 
    \[
    \alpha^* = \left(1+ \sqrt{1-2\frac{\theta_m}{M}}\right)^{-1}, \quad h^* = \bar h(\alpha^*) = 2\alpha^*  \bar \theta - M/2,
    \]
    and suppose $\bar H \geq \left|h^* \right|$.
    We restrict to contracts for which the 
    tournament subgame $\Gamma(\theta_n)$ has a pure-strategy Nash equilibrium for every $\theta_n\in\Theta$.

    Then, the contract $(\alpha^*,h^*)$ is optimal and induces the manager to hire the highest-ability agent in equilibrium.
    \footnote{
        If $\theta_m<\bar\theta$, then $(\alpha^*,h^*)$ is the unique optimal contract.
        }
    The principal's profit is given by
    $\Pi^* = \left( 2\left(\frac{1}{\alpha^*} - \frac{\theta_m}{M}\right)\right)^{-1} \bar\theta$.%
\end{proposition}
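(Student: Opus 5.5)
The plan is a two-step optimization. First I fix $\alpha$ and use Proposition~\ref{prop:1} (specialized to the uniform--quadratic case) to set $h=\bar h(\alpha)$. Then I maximize the resulting one-variable profit over $\alpha$.

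\textbf{Reducing to the contract class.} In the uniform--quadratic setting, $g'\equiv 0$ on the support. So the subgame first-order conditions decouple and give $e_m=\alpha\theta_m(M/2+h)/(M-2\alpha\theta_m)$ and $e_n=\alpha\theta_n(M/2-h)/(M-2\alpha\theta_n)$, provided the winning probabilities are interior. The manager's payoff as a function of $\theta_n$, via \eqref{FOC3}, has derivative proportional to $G-(e_m+e_n)/M$. This gives the hiring rule $\theta_n^*=\min\{(h+M/2)/(2\alpha),\bar\theta\}$ from the running example. I then split by the value of $h$ for fixed $\alpha$:
\begin{itemize}
\item For $h<\bar h(\alpha)$, the computation $de_n^*/dh=1/2$ and $de_m^*/dh>0$ shows that total output is strictly increasing in $h$.
\item For $h>\bar h(\alpha)$, with $\theta_n=\bar\theta\ge\theta_m$, the displayed derivative is $\le 0$.
\end{itemize}
Hence $h=\bar h(\alpha)=2\alpha\bar\theta-M/2$ is optimal for each $\alpha$. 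Plugging in gives $e_n^*=\alpha\bar\theta$ and $e_m^*=2\alpha\bar\theta/(M/(\alpha\theta_m)-2)$. Therefore
\[
\Pi(\alpha)=(1-\alpha)\alpha\bar\theta\Bigl(1+\frac{2\alpha\theta_m}{M-2\alpha\theta_m}\Bigr)=\frac{\bar\theta\,\alpha(1-\alpha)M}{M-2\alpha\theta_m}.
\]

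\textbf{Maximizing over $\alpha$.} Writing $k=\theta_m/M$, I would maximize $\alpha(1-\alpha)/(1-2k\alpha)$. The first-order condition is $(1-2\alpha)(1-2k\alpha)+2k\alpha(1-\alpha)=0$, i.e.\ $2k\alpha^2-2\alpha+1=0$. Its root in $(0,1)$ is
\[
\alpha=\frac{1-\sqrt{1-2k}}{2k}=\Bigl(1+\sqrt{1-2k}\Bigr)^{-1}=\alpha^*.
\]
The other root exceeds $1$. Since $\Pi(0)=\Pi(1)=0$, $\Pi>0$ in between, and there is a unique critical point, $\alpha^*$ is the unique maximizer. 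For the profit value, I would use $1-\alpha^*=2k\alpha^{*2}-\alpha^*+\ldots$; more concretely, from the quadratic, $(1-\alpha^*)=\alpha^*(1-2k\alpha^*)$. Then $\Pi^*=\bar\theta\,\alpha^{*2}/(1)\cdot\ldots$ simplifies to $\bar\theta\alpha^*/(2(1-k\alpha^*))=\bigl(2(1/\alpha^*-k)\bigr)^{-1}\bar\theta$, matching the statement. Uniqueness when $\theta_m<\bar\theta$ follows from the strict inequality in the $h>\bar h$ region together with the strict monotonicity for $h<\bar h$.

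\textbf{Main obstacle: validity of the closed forms and equilibrium existence.} The delicate part is checking that the formulas above are valid at the optimum and everywhere the comparison is made. Winning probabilities must be interior, i.e.\ $|e_m-e_n+h|\le M/2$. Denominators must be positive, $M>2\alpha\theta$. And the pure-strategy equilibrium must exist for every $\theta_n$, which is the class restriction in the statement.

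For contracts outside the region where these formulas hold, I must argue that they either violate the restriction or yield lower profit. This is where the hypothesis $M>(1+\sqrt2)\bar\theta$ should enter. I would verify it at $(\alpha^*,h^*)$: using $\alpha^*\le 1/(1+\sqrt{1-2\bar\theta/M})$, the condition should guarantee $e_m^*-e_n^*+h^*\in[-M/2,M/2]$ and $2\alpha^*\bar\theta<M$. Then, for competing contracts whose probabilities hit a corner, I would bound profit directly. In those cases one agent's marginal incentive from $G$ is $0$ or $\alpha$, and I expect profit to fall below $\Pi^*$. I expect this boundary bookkeeping, rather than the calculus, to take the most care.
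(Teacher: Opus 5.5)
Your calculus is correct. That includes the profit value: the quadratic gives $\Pi^*=\bar\theta\alpha^{*2}$, which equals the stated expression. The function you maximize, $\bar\theta\alpha(1-\alpha)M/(M-2\alpha\theta_m)$, is exactly the paper's $\overline\Pi(\alpha)$.

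\textbf{Missing step: from the maximizer to an optimal contract.} You treat this function as the value $\max_h \Pi(\alpha,h)$, obtained by plugging $h=\bar h(\alpha)$ into the interior closed forms. For some $\alpha$ that is false. Take $\theta_m=\bar\theta=1$, $M=2.5$, $\alpha=0.9$, $h=\bar h(\alpha)=0.55$. The interior formulas give $e_m\approx 2.31$ and $e_n=0.9$, so $e_m-e_n+h\approx 1.96>M/2$. The ``interior'' solution lies outside the interior region, so the outcome there is a corner, or the contract is excluded.

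Your monotonicity claims in $h$ are derivatives valid only within one regime. Your global hiring rule presumes interiority for every $\theta_n$. Neither compares contracts across regimes, so the corner bookkeeping cannot be done by verifying that your formulas hold everywhere: they do not.

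The paper reverses the logic. It proves $\Pi(\alpha,h)\le\overline\Pi(\alpha)$ \emph{pointwise} for every admissible contract, then shows $(\alpha^*,h^*)$ attains the bound. There are two cases.
\begin{itemize}
\item \emph{Interior winning probability after the manager's choice $\theta_n^*$.} Continuity of the unique equilibrium in $\theta_n$ keeps it interior nearby, so only the local hiring condition is needed. With $\hat\theta=(M/2+h)/(2\alpha)$, either $\theta_n^*=\hat\theta$, or $\theta_n^*=\bar\theta\le\hat\theta$, or $\theta_n^*=\underline\theta\ge\hat\theta$. In each case, using $\underline\theta\le\theta_m\le\bar\theta$, one gets $e_m+e_n\le \alpha\bar\theta M/(M-2\alpha\theta_m)$.
\item \emph{Strict corner.} Efforts are $(0,\alpha\theta_n^*)$ or $(\alpha\theta_m,0)$, so $e_m+e_n\le\alpha\bar\theta$, which is below the same bound. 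The tie case $|e_m-e_n+h|=M/2$ cannot be an equilibrium.
\end{itemize}
Your ``I expect profit to fall below $\Pi^*$'' has to become this argument.

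\textbf{Equilibrium uniqueness and existence.} Two further ingredients are missing.

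First, a contract's profit is well defined only if each subgame has a \emph{unique} pure-strategy equilibrium. With uniform noise, an interior equilibrium could in principle coexist with a corner one. Lemma~\ref{lem:unique-continuation} rules this out, and that is where $M>(1+\sqrt2)\bar\theta$ genuinely enters, not only in checking interiority at the candidate. Two strict corners would require $M<2\bar\theta$. Interior/corner coexistence is excluded via $M^2-2M\bar\theta-\bar\theta^2>0$.

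Second, to place $(\alpha^*,h^*)$ in the restricted class, you must show that for \emph{every} $\theta_n\in\Theta$, not just $\bar\theta$, the first-order solution is a mutual global best response. Because $G$ has kinks at $\pm M/2$, interiority of the candidate point is not enough. The paper verifies concavity of each payoff over all $e_i\ge 0$ against the opponent's candidate effort. It uses $\alpha^*<1/\sqrt2$, $M>(1+2\alpha^*)\bar\theta$ and $M>2\alpha^*(1+\alpha^*)\bar\theta$ (Lemma~\ref{lem:existence}).
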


\begin{corollary} \label{cor:2.1}
    Under the uniform--quadratic assumption, the following comparative statics hold:
    \begin{enumerate}[label=(\roman*)]
     \item The optimal head start $h^*$ is decreasing in $M$ and increasing in the manager's ability $\theta_m$, and $h^*>0$ if and only if $M<\frac{8\bar\theta^2}{(4\bar\theta-\theta_m)}$.
        
    \item The optimal payout ratio $\alpha^*$ is decreasing in  $M$ and increasing in the manager's ability $\theta_m$. As $M\to \infty$, $\alpha^* \to 0.5$.
    As $M\to 2\bar \theta$, $\alpha^* \to  \left(1+\sqrt{ 1 -\theta_m/ \bar\theta}\right)^{-1}$.
    As $\theta_m\to 0$, $\alpha^*\to 0.5$.

    \item  The optimal profit $\Pi^*$ is decreasing in $M$ and increasing in the manager's ability $\theta_m$. 
    
    \end{enumerate} 
\end{corollary}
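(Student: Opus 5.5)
The plan is to work directly from the closed forms in Proposition~\ref{prop:2}. Write $x \equiv \theta_m/M$ and $s(x)\equiv\sqrt{1-2x}$. The standing assumption $M>(1+\sqrt2)\bar\theta>2\bar\theta\ge 2\theta_m$ gives $x<1/2$, so $s$ is well defined and positive, and $\alpha^*=(1+s(x))^{-1}$. All three parameters $(\alpha^*,h^*,\Pi^*)$ depend on $(M,\theta_m)$ only through $x$, apart from the explicit $-M/2$ in $h^*$. The whole corollary therefore reduces to monotonicity of elementary functions of $x$, together with the observation that $x$ is increasing in $\theta_m$ and decreasing in $M$.

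For (ii), $s$ is strictly decreasing in $x$, so $\alpha^*$ is strictly increasing in $x$. Hence $\alpha^*$ is increasing in $\theta_m$ and decreasing in $M$. The limits follow by substitution:
\begin{itemize}
\item as $M\to\infty$ or $\theta_m\to0$, we have $x\to0$, so $s\to1$ and $\alpha^*\to 1/2$;
\item as $M\to2\bar\theta$, we have $x\to\theta_m/(2\bar\theta)$, so $\alpha^*\to(1+\sqrt{1-\theta_m/\bar\theta})^{-1}$.
\end{itemize}
Here I treat the limit $M\to 2\bar\theta$ formally, via continuity of the formula, even though the proposition's hypothesis bounds $M$ further away.

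For (i), $h^*=2\alpha^*\bar\theta-M/2$. Monotonicity in $\theta_m$ is inherited from $\alpha^*$. In $M$, both terms are decreasing, since $\alpha^*$ decreases in $M$ and $-M/2$ decreases in $M$. For the sign condition, $h^*>0$ is equivalent to $1+s<4\bar\theta/M$, which is equivalent to $s<4\bar\theta/M-1$. This requires $M<4\bar\theta$ for the right-hand side to be positive. Under that restriction we may square both sides, which gives $1-2\theta_m/M<16\bar\theta^2/M^2-8\bar\theta/M+1$. Multiplying through by $M^2$ yields $M(8\bar\theta-2\theta_m)<16\bar\theta^2$, that is, $M<8\bar\theta^2/(4\bar\theta-\theta_m)$. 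To close the equivalence, check that this threshold is at most $4\bar\theta$. That inequality is equivalent to $\theta_m\le2\bar\theta$, which always holds. So the squaring step loses nothing, and $M<4\bar\theta$ is implied by the final inequality. This squaring and sign bookkeeping is the only step needing care.

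For (iii), write $\Pi^*=\bar\theta/\bigl(2\varphi(x)\bigr)$ with $\varphi(x)=1/\alpha^*-x=1+\sqrt{1-2x}-x$. Then
\[
\varphi'(x)=-\frac{1}{\sqrt{1-2x}}-1<0,
\]
and $\varphi>0$ on $[0,1/2)$, since $\varphi(1/2)=1/2>0$ and $\varphi$ is decreasing. Hence $\Pi^*$ is strictly increasing in $x$. It is therefore increasing in $\theta_m$ and decreasing in $M$.

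As a cross-check, the same monotonicity of $\Pi^*$ could be obtained by an envelope argument on $\Pi(\alpha)=(1-\alpha)(e_m^*+e_n^*)$. Along the optimal head start, $e_n^*=\alpha\bar\theta$ and $e_m^*=2\alpha\bar\theta/(M/(\alpha\theta_m)-2)$, so $\Pi(\alpha)$ is pointwise increasing in $\theta_m$ and decreasing in $M$. The direct computation above suffices, however.
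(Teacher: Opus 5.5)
Your proof is correct. It reaches the result by a different organization than the paper's proof.

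\textbf{The paper's route.}
\begin{itemize}
\item It works with $\alpha^*=\frac{M-\sqrt{M(M-2\theta_m)}}{2\theta_m}$, computes $d\alpha^*/dM$ and $d\alpha^*/d\theta_m$ explicitly, and signs them using $M>2\theta_m$.
\item It gets the monotonicity of $h^*$ from these derivatives via $h^*=2\alpha^*\bar\theta-M/2$.
\item It states the threshold $h^*>0\iff M<8\bar\theta^2/(4\bar\theta-\theta_m)$ only as the outcome of ``direct calculations.''
\item It handles $\Pi^*$ by the envelope theorem applied to $\max_{\alpha}(1-\alpha)\bigl(\alpha\bar\theta+\frac{2\alpha\bar\theta}{M/(\alpha\theta_m)-2}\bigr)$, whose objective is pointwise increasing in $\theta_m$ and decreasing in $M$.
\end{itemize}

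\textbf{What your route buys.} Reducing everything to the single ratio $x=\theta_m/M$ makes the monotonicity of $\alpha^*$ and $\Pi^*$ a matter of composing monotone maps, so you avoid the explicit partial derivatives. You also supply the squaring argument the paper omits, including the check that the threshold never exceeds $4\bar\theta$, so no sign case is lost.

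\textbf{What the paper's route buys.} The envelope argument for (iii) never uses the closed form of $\Pi^*$. It would therefore survive in variants where $\alpha^*$ has no explicit expression. Your closing cross-check is exactly that argument.

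\textbf{A simplification you could add.} With $s=\sqrt{1-2x}$ we have $x=(1-s^2)/2$, so your $\varphi$ satisfies $\varphi(x)=1+s-x=\tfrac12(1+s)^2$. Hence $\Pi^*=\bar\theta/(1+s)^2=(\alpha^*)^2\bar\theta$, and (iii) follows at once from (ii) with no further differentiation.
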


As we can see, the optimal payout ratio $\alpha^*$ and head start $h^*$ depend on the manager's ability $\theta_m$ and the noise in the performance evaluation captured by $M$. As the manager's ability $\theta_m$ increases, the manager invests more effort because her marginal cost of effort decreases, so the principal's profit is higher. Meanwhile, when $\theta_m$ is higher, the payout ratio also has a greater marginal effect on the manager's effort, so the optimal payout ratio $\alpha^*$ is higher. The increase in the payout ratio creates a greater incentive to sabotage, thereby requiring the principal to increase the optimal head start $h^*$ to ensure that the new hire is of the highest ability.

In equilibrium, the principal offers a head start to the manager that ensures the highest-ability agent is hired. 
As the noise in performance evaluation increases, the noise begins to take over performance attribution. As a result, the manager becomes less fearful of competing against a higher-ability candidate. 
Thus, the manager has a lower incentive to sabotage, which allows the principal to reduce the optimal head start $h^*$. 

Moreover, for a given \emph{optimal} head start, 
the increase in noise makes the effort less important in determining the winner due to noise in performance measurement, thereby reducing the marginal effect of the payout ratio on the manager's effort.
In other words, for a marginal increase in the payout ratio, the manager's effort increases less than it would if noise were lower. 
Therefore, under higher noise, the principal will lower the optimal payout ratio $\alpha^*$, which in turn decreases the optimal head start even further.
Furthermore, higher noise also reduces the manager's marginal return to effort and thus has a demotivating effect on her equilibrium effort, thereby adversely impacting the principal's profit.

\section{Conclusion}

Hierarchical firms often delegate hiring decisions to managers because managers possess local information about candidates' skills and team-specific fit. This delegation creates a conflict when the selected employee becomes both a coworker and a competitor for performance-based rewards. For example, a senior employee at a technology company may be given authority to select a junior colleague for the team while anticipating future competition for bonuses tied to team performance.

Existing tournament literature does not consider cases where one player can choose the ability of another. We fill this gap by studying the optimal design of a two-player Lazear-Rosen-style tournament, where the manager has discretion over hiring the other player, and the winner receives a fraction of total output. Tournament incentives may lead hiring managers to hire a lower-ability agent than would otherwise be optimal for the firm to avoid future competition.

To mitigate hiring sabotage, the principal designs a head start---an advantage to the manager in the output comparison---and a payout ratio, which is the share of total output awarded to the winner. This role of head starts builds on the literature on biased contests, where head starts (or handicaps) are used to restore efficiency or provide incentives in tournaments.
The output-dependent prize makes the new hire also a coworker and provides some incentives for the manager to hire a high-ability candidate.

We find that the head start has three effects on the output: (i) an encouragement effect on the manager, (ii) a discouragement effect on the new hire, and (iii) a hiring effect through the increased ability of the new hire. The hiring effect dominates the discouragement effect until the strongest candidate is hired; once the strongest candidate is hired, any further head start leads the discouragement effect to dominate the encouragement effect.
Therefore, the optimal contract offers a head start just enough to induce the manager to hire the strongest candidate.
Consequently, we derive the optimal head start and payout ratio that maximize the principal's profit.

In reality, a large head start may be infeasible due to concerns about fairness.
In this case, our model predicts that hiring sabotage may exist in equilibrium because the head start is limited to an insufficient level.
Similarly, head starts may incur additional psychological costs to new employees by lowering their morale, which can also reduce the optimal head start to a level that does not completely eliminate sabotage.
Moreover, the principal may want a higher-ability agent to win the tournament, as she wants to select the strongest agent {through} the tournament and maximize \emph{future} profits (\cite{clark2001rank, hvide2003risk, munster2007selection, ryvkin2008predictive, BrownMinor2014, DrugovRyvkin2017}).
In this case, providing a head start to the manager risks promoting a less competent agent, which harms the firm’s future profitability. Consequently, the optimal head start may allow for some degree of hiring sabotage in equilibrium (i.e., $\theta_n^*(h^*) < \bar\theta$), but it will always ensure that the new hire is more capable than the manager (i.e.,  $\theta_n^*(h^*) >\theta_m$).
This is consistent with \cites{Kawasaki2015} advice to have managers hire employees better than they are.
We formalize this result in Appendix~\ref{twoperiod}.

Our analysis takes full managerial discretion over hiring as a benchmark, highlighting the trade-off between the informational benefits of delegation and the incentive costs of discretion. In practice, senior managers or external evaluators may review hiring decisions. Such oversight or centralized hiring can mitigate hiring sabotage, but it also neglects the manager's local knowledge \citep{DellerSandino2020}. Although we do not formally model partial discretion, the incentive problem we identify remains relevant whenever the incumbent manager retains influence over the hiring decision.

\newpage

\counterwithin*{equation}{section}
\renewcommand{\theequation}{\thesection.\arabic{equation}}

\counterwithin*{theorem}{section}
\renewcommand{\thetheorem}{\thesection.\arabic{theorem}}

\counterwithin*{lemma}{section}
\renewcommand{\thelemma}{\thesection.\arabic{lemma}}

\counterwithin*{proposition}{section}
\renewcommand{\theproposition}{\thesection.\arabic{proposition}}

\counterwithin*{remark}{section}
\renewcommand{\theremark}{\thesection.\arabic{remark}}

\counterwithin*{figure}{section}
\renewcommand{\thefigure}{\thesection.\arabic{figure}}

\begin{appendices}
\onehalfspacing

\counterwithin*{equation}{section}
\renewcommand{\theequation}{\thesection.\arabic{equation}}

\counterwithin*{theorem}{section}
\renewcommand{\thetheorem}{\thesection.\arabic{theorem}}

\counterwithin*{lemma}{section}
\renewcommand{\thelemma}{\thesection.\arabic{lemma}}

\section{Proofs}
\label{app:proofs}

\subsection{Preliminaries}

\begin{lemma} \label{lem:bounded}
    The equilibrium effort levels $e_m$ and $e_n$ are bounded by $\bar e = \sup\{e: c(e)/e\leq 2\bar\theta\}$.
\end{lemma}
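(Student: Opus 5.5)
The argument is a participation argument: each agent can secure a nonnegative payoff by choosing zero effort. I then bound the expected prize from above by a linear function of own effort and compare it with the cost. Fix any $\theta_n$ and any $(\alpha,h)$, and let $(e_m,e_n)$ be an equilibrium of $\Gamma(\theta_n)$.

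Consider first the manager's best response to $e_n$. Deviating to $e_m=0$ gives payoff $\alpha G(h-e_n)e_n\geq 0$. Hence equilibrium requires
\[
\alpha G(e_m-e_n+h)(e_m+e_n)-\frac{c(e_m)}{\theta_m}\;\geq\;\alpha G(h-e_n)e_n .
\]
Since $G\leq 1$, $\alpha<1$ and $\theta_m\leq\bar\theta$, the left side is at most $e_m+e_n-c(e_m)/\bar\theta$. This alone does not bound $e_m$, because $e_n$ could in principle be large.

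I would therefore use the inequality more carefully. Since $G$ is increasing, $G(e_m-e_n+h)\geq G(h-e_n)$, so this route does not help. Instead I split the gain as
\[
\alpha G(e_m-e_n+h)(e_m+e_n)-\alpha G(h-e_n)e_n=\alpha G(e_m-e_n+h)e_m+\alpha e_n\bigl[G(e_m-e_n+h)-G(h-e_n)\bigr].
\]
The first term is at most $e_m$. For the second term, the mean value theorem gives $G(e_m-e_n+h)-G(h-e_n)\leq e_m\sup g$. The resulting $e_n e_m\sup g$ would be controlled by the large-variance assumption, but only if $e_n$ is bounded, which is circular.

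The cleaner route, and the one I commit to, is to sum the two agents' conditions. Each agent's equilibrium payoff is nonnegative because zero effort guarantees a nonnegative payoff. The two winning probabilities sum to one, so the expected prizes sum to $\alpha(e_m+e_n)$. Adding the two nonnegativity conditions gives
\[
\frac{c(e_m)}{\theta_m}+\frac{c(e_n)}{\theta_n}\;\leq\;\alpha(e_m+e_n)\;<\;e_m+e_n .
\]
With $\theta_i\leq\bar\theta$ this yields $c(e_m)+c(e_n)\leq\bar\theta(e_m+e_n)$.

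Suppose now that $e_m>\bar e$. By convexity and $c(0)=0$, the average cost $c(e)/e$ is nondecreasing, so $c(e_m)>2\bar\theta e_m$. Hence $c(e_n)<\bar\theta e_n-\bar\theta e_m<\bar\theta e_n$, which requires $c(e_n)/e_n<\bar\theta$. Convexity and $c\geq 0$ then force $e_n\leq\bar e$. Plugging back in, $2\bar\theta e_m<c(e_m)\leq\bar\theta(e_m+e_n)$, so $e_m<e_n\leq\bar e$, a contradiction. The case $e_n>\bar e$ is symmetric.

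Finiteness of $\bar e$ follows from $\lim c'(e)>2\bar\theta$ together with convexity. Eventually $c(e)/e$ exceeds $2\bar\theta$, because $c(e)/e\to\lim c'$. Since $c(e)/e$ is nondecreasing, the set $\{e: c(e)/e\leq 2\bar\theta\}$ is an interval starting at $0$, and its supremum $\bar e$ is finite. This completes the bound.

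The main subtlety is avoiding the circular per-agent bound. Summing the two nonnegativity constraints, which uses that the winning probabilities add to one, is the key step.
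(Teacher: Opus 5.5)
Your final argument is correct and follows essentially the same approach as the paper: both rely on the zero-effort participation constraints and on summing them, using that the two winning probabilities add to one. The only difference is that the paper first uses the manager's individual constraint to get $e_n>e_m$ and then the summed constraint to get $e_n\le\bar e$, whereas you obtain both conclusions from the summed inequality $c(e_m)+c(e_n)\le\bar\theta(e_m+e_n)$ alone, which is a slight streamlining.
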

\begin{proof}
    First, a unique $\bar e = \sup\{e: c(e)/e\leq 2\bar\theta\} \in (0,\infty)$ exists by the assumptions on $c(e)$ (continuously differentiable, strictly increasing, strictly convex, and $\lim_{e\to\infty} c'(e)>2\bar\theta$).
    
    Now we prove the lemma by contradiction. Suppose WLOG that $e_m>\bar e$.
    Because $m$ can always guarantee a nonnegative payoff by investing $e_m=0$, we have $u_m = \alpha G(e_m-e_n+h) (e_m+e_n) - c(e_m)/\theta_m \geq 0$.
    Because $\alpha\leq1$ and $G(e_m-e_n+h)\leq1$, this implies
    \[\frac{c(e_m)}{e_m}
    \leq \left(1+\frac{e_n}{e_m}\right)\theta_m.\]
    Since $e_m>\bar e$, we have 
    $$\frac{c(e_m)}{e_m} >2\bar\theta \geq 2\theta_m.$$
    Combining this with the previous inequality, we have $e_n>e_m$.
    
    On the other hand, we also have $u_n\geq0$ and thus
    \[
    u_m+u_n = \alpha\cdot (e_m+e_n)-\frac{c(e_m)}{\theta_m}-\frac{c(e_n)}{\theta_n} \geq0.
    \]
    Because  $c(e_m)/\theta_m  > e_m$, we must have  
    $$\frac{c(e_n)}{\theta_n} \leq
(e_m+e_n)-\frac{c(e_m)}{\theta_m}
<
e_n$$  
    and therefore $e_n\leq \bar e <e_m$, contradicting the previous conclusion that $e_n>e_m$. 
\end{proof}

\begin{lemma}
    \label{lem:prelim}
Assume $g(x)$ is continuously differentiable on its support, symmetric, and unimodal at $x=0$. We have the following properties:
\begin{enumerate}[label=(\roman*)]
    \item $g'(x)\leq 0$ for all $x\geq 0$, and $g'(x)\geq 0$ for all $x\leq 0$.
    \item $J(x)\equiv \frac{G(x)}{g(x)}-x \geq \frac{1}{2g(0)}$ and is increasing (decreasing) when $x\geq0$ ($x\leq 0$).
    \item For every $C<\infty$, as the variance of $G$ grows arbitrarily large 
$(\sigma^2\to\infty)$, we have
\begin{align*}
   \sup_{|x|\le C} g(x)\to 0,\quad
\sup_{|x|\le C} |g'(x)|\to 0,\quad
\sup_{|x|\le C}\left|\frac{g'(x)}{g(x)}\right|\to 0,
\\
\sup_{|x|\le C}\left|\frac{g'(x)}{g(x)^2}\right|\to 0,
\quad
\sup_{|x|\le C}|J'(x)|\to 0. 
\end{align*}

\end{enumerate}
\end{lemma}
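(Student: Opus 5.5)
The three parts are proved in order, using only the standing assumptions on $g$: it is continuously differentiable, symmetric about zero and unimodal, and $G(x)=F(x/\sigma)$ with $F$ a fixed CDF of unit variance.

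\textbf{Part (i).} Unimodality at $0$ means $g$ is nondecreasing on $(-\infty,0]$ and nonincreasing on $[0,\infty)$. Since $g$ is $C^1$ on its support, this gives the sign pattern of $g'$ directly. Symmetry, $g(x)=g(-x)$, shows the two halves are consistent.

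\textbf{Part (ii).} I would compute
\[
J'(x)=\frac{g(x)^2-G(x)g'(x)}{g(x)^2}-1=-\frac{G(x)g'(x)}{g(x)^2}.
\]
By (i), $g'(x)\le0$ for $x\ge0$, so $J'(x)\ge0$ there; for $x\le0$, $g'(x)\ge0$, so $J'(x)\le0$. Hence $J$ attains its minimum at $x=0$. By symmetry $G(0)=1/2$, so $J(0)=\frac{1}{2g(0)}$, which is the claimed lower bound.

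\textbf{Part (iii).} I would use the scaling $g(x)=\frac{1}{\sigma}f(x/\sigma)$ and $g'(x)=\frac{1}{\sigma^2}f'(x/\sigma)$. For $|x|\le C$, the argument $u=x/\sigma$ lies in $[-C/\sigma,C/\sigma]$, which shrinks to $\{0\}$ as $\sigma\to\infty$. The estimates are then:
\begin{itemize}
\item $\sup g\le f(0)/\sigma\to0$, since $f$ peaks at $0$.
\item $\sup|g'|\le \sigma^{-2}\sup_{|u|\le 1}|f'(u)|\to0$, using continuity of $f'$ near $0$.
\item $g'/g=\sigma^{-1}f'(u)/f(u)$, and $f(u)\ge f(C/\sigma)\ge f(1)>0$ for large $\sigma$ when $1$ lies inside the support, so the ratio is $O(1/\sigma)$.
\item $g'/g^2=f'(u)/f(u)^2$ has no $\sigma$ prefactor. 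The key point is that $f'(0)=0$, by symmetry and differentiability, and $f'$ is continuous, so $\sup_{|u|\le C/\sigma}|f'(u)|\to0$ while $f(u)\to f(0)>0$. Hence this ratio tends to $0$.
\item $J'=-Gg'/g^2$ with $G\le1$, so $\sup|J'|\to0$ follows from the previous item.
\end{itemize}

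\textbf{Main obstacle.} The delicate step is the $g'/g^2$ bound, which needs $f'(0)=0$ and continuity of $f'$ at $0$, together with a positive lower bound on $f$ near $0$. I would note that $f(0)>0$ because a unimodal density peaks at $0$ and cannot vanish there. For compactly supported $f$ (for example, the uniform case), $0$ lies in the interior of the support, so a neighborhood of $0$ is available and the $g'/g$ bound still uses points inside the support.
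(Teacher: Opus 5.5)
Your proof is correct and follows the paper's argument essentially step for step. Part (i) uses the same sign argument from unimodality. Part (ii) uses the same computation $J'(x)=-G(x)g'(x)/g(x)^2$, with the minimum at $0$ and $J(0)=1/(2g(0))$. Part (iii) uses the same scaling $g(x)=f(x/\sigma)/\sigma$, where $f(0)>0$, $f'(0)=0$, and continuity of $f'$ at $0$ drive the $g'/g^2$ and $J'$ bounds. The only cosmetic difference is that you bound $\sup|g'|$ and $\sup|g'/g|$ directly via their explicit $\sigma^{-2}$ and $\sigma^{-1}$ prefactors, whereas the paper obtains them by multiplying the $g'/g^2$ bound by $\sup g$.
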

\begin{proof}
    Part (i) is straightforward because $g(x)$ is single-peaked at zero.
    
    To see (ii), by symmetry, we have $G(0)=1/2$ and $J(0)=\frac{1}{2g(0)}$. Taking the derivative yields
    $$J'(x)=\frac{g(x)^2-G(x)g'(x)}{g(x)^2}-1 =-\frac{G(x)g'(x)}{g(x)^2},$$
    which, by part (i), is nonnegative (nonpositive) for \(x\ge0\) (\(x\le0\)). Thus, $J(x)$ achieves its global minimum at $x=0$.

To see (iii), by our parametric assumption, \(G(x)=F(x/\sigma)\), where \(F\) is a standardized distribution with unit variance and density \(f\), and \(\sigma^2\) is the variance of \(G\). Hence, \(g(x)=f(x/\sigma)/\sigma\) and \(g'(x)=f'(x/\sigma)/\sigma^2\).

Fix $C<\infty$. Since $f$ is continuous and $f(0)>0$, for all sufficiently large $\sigma$,
\[
\inf_{|x|\le C} f(x/\sigma)\geq \frac{f(0)}{2}.
\]
Since $f$ is symmetric and differentiable at zero, $f'(0)=0$.
Continuity of $f'$ at zero therefore implies
\[
\sup_{|x|\le C}|f'(x/\sigma)|\to0
\quad\text{as }\sigma\to\infty.
\]

Therefore, we have the following conclusions as $\sigma \to \infty$.
\[
\sup_{|x|\le C} g(x)
=
\sup_{|x|\le C}\frac{1}{\sigma}f(x/\sigma)
=\mathcal{O}(1/\sigma)\to 0,
\]
\[
\sup_{|x|\le C}
\left|
\frac{g'(x)}{g(x)^2}
\right|
=
\sup_{|x|\le C}
\left|
\frac{f'(x/\sigma)}{f(x/\sigma)^2}
\right|
\leq
\frac{\sup_{|x|\le C}|f'(x/\sigma)|}
{\left(\inf_{|x|\le C} f(x/\sigma)\right)^2}
\leq
\frac{4}{f(0)^2}
\sup_{|x|\le C}|f'(x/\sigma)|
\to 0,
\]
and thus 
\[
\sup_{|x|\le C}
\left|\frac{g'(x)}{g(x)}\right|
\leq    \sup_{|x|\le C}
\left|\frac{g'(x)}{g(x)^2}\right|\cdot \sup_{|x|\le C} g(x)
\to 0,
\]
\[
\sup_{|x|\le C}
\left|g'(x)\right|
\leq    \sup_{|x|\le C}
\left|\frac{g'(x)}{g(x)}\right|\cdot \sup_{|x|\le C} g(x)
\to 0.
\]
Finally,
\[
J'(x)
=
-\frac{G(x)g'(x)}{g(x)^2}.
\]
Since $|G(x)|\leq 1$,
\[
\sup_{|x|\le C}|J'(x)|
\leq
\sup_{|x|\le C}
\left|
\frac{g'(x)}{g(x)^2}
\right|
\to 0.
\]
Therefore, the convergence is uniform over every compact set $|x|\leq C$.
\end{proof}

\begin{lemma}
\label{lem:uniform-curvature}
$c''$ is uniformly
bounded away from zero on $[0,\bar e]$:
there exists ${\kappa}>0$ such that
\[
{\kappa} = \inf_{e\in[0,\bar e]}c''(e).
\]
\end{lemma}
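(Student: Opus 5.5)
The claim follows from compactness of $[0,\bar e]$ and continuity of $c''$. By Lemma~\ref{lem:bounded}, $\bar e=\sup\{e: c(e)/e\le 2\bar\theta\}$ is a finite positive number, so $[0,\bar e]$ is a nonempty compact interval. Since $c$ is twice continuously differentiable, $c''$ is continuous on $[0,\bar e]$. The extreme value theorem then says the infimum $\kappa=\inf_{e\in[0,\bar e]}c''(e)$ is attained at some $e_0\in[0,\bar e]$.

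Positivity comes from the standing assumption that $c''(e)>0$ for all $e\ge 0$. This gives $\kappa=c''(e_0)>0$, which is exactly the claimed uniform lower bound.

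The only point needing care is finiteness of $\bar e$, because without it the interval would not be compact. Finiteness is already part of Lemma~\ref{lem:bounded}. If one wanted to re-derive it, one would note that convexity and $c(0)=0$ make $c(e)/e$ nondecreasing. Because $\lim_{e\to\infty}c'(e)>2\bar\theta$, we also have $c(e)/e\to\lim c'(e)>2\bar\theta$. Hence the set $\{e: c(e)/e\le 2\bar\theta\}$ is bounded.

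Beyond this, no step is a real obstacle. The purpose of the lemma is to supply the constant $\kappa$ for the later second-order-condition and comparative-statics arguments, where efforts are confined to $[0,\bar e]$ by Lemma~\ref{lem:bounded}.
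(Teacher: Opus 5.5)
Your proof is correct and matches the paper's argument. You take finiteness of $\bar e$ from Lemma~\ref{lem:bounded}, apply continuity of $c''$ and the extreme value theorem on the compact interval $[0,\bar e]$, and get positivity of the attained minimum from $c''>0$; your optional re-derivation of $\bar e<\infty$, via the monotonicity of $c(e)/e$ and $c(e)/e\to\lim c'(e)>2\bar\theta$, is also sound though not needed.
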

\begin{proof}
By Lemma~\ref{lem:bounded}, $\bar e<\infty$. Since $c$ is twice
continuously differentiable, $c''$ is continuous. It therefore attains
a minimum on the compact interval $[0,\bar e]$. Because $c''(e)>0$ for
every $e\geq0$, this minimum is strictly positive. Hence,
${\kappa} = \inf_{e\in[0,\bar e]}c''(e)>0$.
\end{proof}

\begin{remark} 
\label{rem:weaker-curvature}
The assumption that $c''(e)>0$ for all $e\geq0$ can be weakened to $c''(e)\geq0$ for all $e\geq0$ and $c''(e)>0$ for all $e>0$.  
This allows for cost functions with $c''(0)=0$, such as $c(e)=e^s$ with
$s>2$.

To see this, Lemma~\ref{lem:bounded} and the boundedness of the head
start imply that $|x| = 
|e_m-e_n+h|\leq \bar e+\bar H$.
Since $G(0)=1/2$, we have $G(x)\to\frac12$ uniformly over this bounded interval as $\sigma\to\infty$. Therefore, for sufficiently large $\sigma$, we have $\frac13 \leq G(x)\leq\frac34$ for every $x$ bounded by $\bar e+\bar H$.

Fix $\alpha\in(0,1)$. Because $c'(0)=0$, the marginal payoff from
effort at zero is at least $\alpha/4>0$. 
Hence, equilibrium efforts are interior, and the first-order conditions imply
$c'(e_i) \geq  {\alpha\underline\theta}/{4}$ and therefore
\[
e_i
\geq
\underline e_\alpha
\equiv
(c')^{-1}\left(
\frac{\alpha\underline\theta}{4}
\right)>0,
\qquad i\in\{m,n\}.
\]
Continuity of $c''$ then implies
\[
\kappa_\alpha
\equiv
\inf_{e\in[\underline e_\alpha,\bar e]}c''(e)>0.
\]

Thus, for each fixed $\alpha\in(0,1)$, there exists a variance threshold $\bar\sigma(\alpha)$ such that the relevant large-variance arguments may use $\kappa_\alpha$ in place of $\kappa$.
\end{remark}

\begin{remark}\label{rem:weaker-assumptions}
Instead of assumptions (i) $c''(e)>0$ for all $e\geq0$ and
(ii) the variance of $G$ is sufficiently large, we could instead impose
(i')
\(
\kappa=\inf_{e\in[0,\bar e]}c''(e)
\)
is sufficiently large, which implies that equilibrium efforts satisfy
$e_i=\mathcal O(1/\kappa)$, and (ii') $G$ is log-concave, which implies
\(
J'(x)=(G(x)/g(x))'-1\geq-1.
\)
\end{remark}

\subsection{Proof of Lemma~\ref{lemma:1}}
\begin{proof}
Given $\theta_n$, the first-order conditions are
\begin{align}
   & \frac{\partial u_m}{\partial e_m} \equiv  \alpha[(e_n+e_m)\cdot g(e_m-e_n+h) + G(e_m-e_n+h)] - c'(e_m)/\theta_m = 0, \\
   & \frac{\partial u_n}{\partial e_n} \equiv  \alpha[(e_n+e_m)\cdot g(e_m-e_n+h) + G(e_n-e_m-h)] - c'(e_n)/\theta_n = 0.
\end{align}
When the variance is sufficiently large, $g(x)$ and $|g'(x)/g(x)|$ are sufficiently small, so the second-order conditions are satisfied.

Define $x^*=e_m^*-e_n^*+h$, and
    \begin{align*}
        v_{mm}  &\equiv \frac{\partial^2 u_m}{\partial e_m^2} = \alpha (2g(x^*) + (e_m^*+e_n^*)  g'(x^*)) - c''(e_m^*)/\theta_m   < 0 \\
       v_{nn}  &\equiv \frac{\partial^2 u_n}{\partial e_n^2} = \alpha (2g(x^*) - (e_m^*+e_n^*)  g'(x^*)) - c''(e_n^*)/\theta_n   < 0\\
        v_{mn} &\equiv \frac{\partial^2 u_m}{\partial e_m \partial e_n} = - \alpha   (e_m^*+e_n^*)  g'(x^*)  \\
        v_{nm} &\equiv \frac{\partial^2 u_n}{\partial e_n \partial e_m} =  \alpha  (e_m^*+e_n^*)  g'(x^*)  
     \end{align*}
     Define the Jacobian matrix as
        \[ \mathbf{J}_1 = \begin{pmatrix}
            v_{mm} & v_{mn}\\
            v_{nm} & v_{nn}
        \end{pmatrix}. \]
By Lemma~\ref{lem:bounded}, $e_m$ and $e_n$ are bounded by $\bar e$, so $|x^*|\leq  \bar e + \bar H$.
By Lemma~\ref{lem:prelim}, choose \(\sigma\) sufficiently large that
\[
\sup_{|x|\leq\bar e+\bar H}
\left(2g(x)+2\bar e|g'(x)|\right)
<
 {\kappa}/{\bar\theta}.
\]
Because \(c''(e_i)\ge\kappa\) for \(e_i\in[0,\bar e]\), we have \(v_{mm}<0\) and \(v_{nn}<0\) uniformly. 
Therefore, the determinant of the Jacobian matrix is
\[
\det(\mathbf J_1)= v_{mm}v_{nn}- v_{mn}v_{nm} >0.
\footnote{This implies the Nash equilibrium in the subgame $\Gamma(\theta_n)$ is stable.}
\]

Denote     $  v_{mp} = \frac{\partial^2 u_m}{\partial e_m \partial p}$ and
        $v_{np} = \frac{\partial^2 u_n}{\partial e_n \partial p}$ for any parameter $p\in\{\theta_m, \theta_n, \alpha, h\}$.
By the implicit function theorem,  
\begin{equation}
    \begin{pmatrix}
           d e_m^*/d p\\
           d e_n^*/d p
        \end{pmatrix}
        = - \mathbf{J}_1^{-1} \begin{pmatrix}
           v_{mp}\\
           v_{np}
        \end{pmatrix} 
\end{equation}

Therefore, 
(1) with respect to $\theta_m$, 
because $v_{m,\theta_m}=c'(e_m^*)/\theta_m^2$ and $v_{n,\theta_n}=0$,
we have
\begin{align}
    & \frac{d e_m^*}{d\theta_m} %
    = -\frac{c'(e_m^*)}{\theta_m^{2}} \frac{v_{nn}}{ \det(\mathbf{J}_1)}  >0,\\
    & \frac{d e_n^*}{d\theta_m} %
     = \frac{c'(e_m^*)}{\theta_m^2} \frac{ \alpha (e_m^*+e_n^*)  g'(x^*) }{ \det(\mathbf{J}_1)}  \stackrel{\text{sign}}{=} g'(x^*).
 \end{align}
 By Lemma~\ref{lem:prelim}, $g'(x)\leq0$ ($g'(x)\geq0$) if $x\geq 0$ ($x\leq 0$) because $g(x)$ is unimodal. Thus, $\frac{d e_n^*}{d\theta_m}\geq0$ ($\frac{d e_n^*}{d\theta_m}\leq0$) if $n$ is more (less) likely to win.
 By symmetry, with respect to $\theta_n$, we also have
 \begin{align}
    & \frac{d e_n^*}{d\theta_n} = -\frac{c'(e_n^*)}{\theta_n^{2}} \frac{v_{mm}}{ \det(\mathbf{J}_1)} >0,\\
    & \frac{d e_m^*}{d\theta_n}  = -\frac{c'(e_n^*)}{\theta_n^2} \frac{ \alpha (e_m^*+e_n^*)  g'(x^*) }{ \det(\mathbf{J}_1)} \stackrel{\text{sign}}{=} -g'(x^*).
 \end{align}
As shown above, $\frac{d e_m^*}{d\theta_n}\geq0$ ($\frac{d e_m^*}{d\theta_n}\leq0$) if $m$ is more (less) likely to win.

(2) With respect to $\alpha$, because 
 $v_{m \alpha}=(e_m^*+e_n^*) g(x^*)+G(x^*)=\frac{c'(e_m^*)}{\alpha \theta_m}$ and  $v_{n \alpha}=(e_m^*+e_n^*) g(x^*)+G(-x^*)=\frac{c'(e_n^*)}{\alpha \theta_n}$,
we have
 \begin{align}
    & \frac{d e_m^*}{d\alpha} %
    =   \frac{ -v_{nn}c'(e_m^*)/\alpha\theta_m - (e_m^*+e_n^*)  g'(x^*) c'(e_n^*)/\theta_n }{ \det(\mathbf{J}_1)}   >0,\\
    & \frac{d e_n^*}{d\alpha} %
     =   \frac{  -v_{mm}c'(e_n^*)/\alpha\theta_n + (e_m^*+e_n^*)  g'(x^*) c'(e_m^*)/\theta_m }{ \det(\mathbf{J}_1)}>0,
 \end{align}

(3) With respect to $h$, because
$v_{m h} = \alpha  ( (e_m^* + e_n^*) g'(x^*) + g(x^*)  )$ and $v_{n h} = \alpha  ( (e_m^* + e_n^*) g'(x^*) - g(x^*)  )$,
we have
\begin{align}
   & \frac{d e_m^*}{dh} %
   =   \alpha g(x^*) \frac{-2\alpha g(x^*)  +  \left( 1+ (e_m^*+e_n^*)  \frac{g'(x^*)}{g(x^*)} \right) c''(e_n^*)/ \theta_n  }{  \det(\mathbf{J}_1)}   >0,\\
   & \frac{d e_n^*}{dh} %
    =   \alpha g(x^*) \frac{  2\alpha g(x^*)  - \left( 1- (e_m^*+e_n^*)  \frac{g'(x^*)}{g(x^*)} \right) c''(e_m^*)/\theta_m}{  \det(\mathbf{J}_1)}<0,
\end{align}
when the variance of $G$ is sufficiently large.%

(4) With respect to $\sigma$ within a scale family $G(x) = F(x/\sigma)$, 
because 
\begin{align*}
& v_{m \sigma} = -\frac{\alpha}{\sigma} ((e_m^*+e_n^*) (g(x^*)+x^* g^{\prime}(x^*) )+x^* g(x^*) ), \\ 
& v_{n \sigma} = -\frac{\alpha}{\sigma} ((e_m^*+e_n^*) (g(x^*)+x^* g^{\prime}(x^*) )-x^* g(x^*) ),
\end{align*}
we have
 \begin{align}
    & \frac{d e_m^*}{d\sigma} = - \frac{\alpha g(x^*)^2}{\sigma} \frac{ \left( (2e_m^*+h) + (e_m^*+e_n^*)x^*\frac{g'(x^*)}{g(x^*)} \right) c''(e_n^*)/\theta_n - 2\alpha g(x^*) (2e_m^*+h) }{ \det(\mathbf{J}_1)},  \\
    & \frac{d e_n^*}{d\sigma} =  - \frac{\alpha g(x^*)^2}{\sigma} \frac{ \left( (2e_n^*-h) + (e_m^*+e_n^*)x^*\frac{g'(x^*)}{g(x^*)} \right) c''(e_m^*)/\theta_m - 2\alpha g(x^*) (2e_n^*-h) }{ \det(\mathbf{J}_1)},
 \end{align}
 For sufficiently large variance of $G$,
 $\frac{d e_m^*}{d\sigma}<0$ if $2e_m^*+h>0$, and $\frac{d e_n^*}{d\sigma}<0$ if $2e_n^*-h>0$.
 Sufficient conditions are $h\geq0$ and $h\leq 0$, respectively.

(5)
We show that if $\theta_m\geq\theta_n$ ($\theta_m\leq\theta_n$) and $h\geq0$ ($h\leq0$), then $x^* \geq 0$ ($x^*\leq 0$), i.e., the manager is more (less) likely to win.

First, note that when $h=0$ and $\theta_m=\theta_n$, we have $e_m^*=e_n^*$ and thus $x^*=0$.

Then, because
\begin{align}
    \frac{d x^*}{d\theta_m }  =  \frac{d e_m^*}{d\theta_m } -  \frac{d e_n^*}{d\theta_m }
    =  \frac{c'(e_m^*)}{\theta_m^{2}} \frac{c''(e_n^*)/\theta_n - 2\alpha g(x^*)}{ \det(\mathbf{J}_1)} >0,
 \end{align}
$\frac{d  x^*}{d\theta_n } <0 $ (by symmetry), and
    \begin{align}
        \frac{d  x^*}{d h}  
        =  \frac{d e_m^*}{dh} - \frac{d e_n^*}{dh} + 1 > 0,
    \end{align}
we have $x^*\geq0$ if $\theta_m\geq\theta_n$ and $h\geq0$.
By symmetry, $x^*\leq0$ if $\theta_n\geq\theta_m$ and $h\leq0$.

(6) Finally, because efforts are bounded, there exist \(\bar h_1,\bar h_2>0\) such that \(x^*\ge0\) for all \(h>\bar h_1\) and \(x^*\le0\) for all \(h<-\bar h_2\).
\end{proof}

\subsection{Proof of Lemma~\ref{lemma:2}}

\begin{proof}

    Denote $x(\theta_n)=e_m(\theta_n)-e_n(\theta_n)+h$ and $J(x)= G(x)/g(x)-x$.
    Then, for all $\theta_n\in\Theta$, we have
    \begin{align}  \label{foc:theta_n}
        \frac{d  u_m (e_m(\theta_n),e_n(\theta_n),\theta_n)}{d \theta_n} =  
        \alpha e_n'(\theta_n)  [J(x(\theta_n)) - (2e_n(\theta_n)-h)] g(x(\theta_n)).
    \end{align}
    First, we show that $u_m$ is quasiconcave in $\theta_n$.
    Define $z(\theta_n)=J(x(\theta_n)) - (2e_n(\theta_n)-h)$. We have
    \begin{equation*}
        \begin{aligned}
        z'(\theta_n) 
        &= J'(x(\theta_n)) x'(\theta_n) - 2 e_n'(\theta_n)\\
        &= \frac{c'(e_n)}{\theta_n^2 \det(\mathbf{J}_1)}\left(2 \alpha\left[\left(e_m+e_n\right) g^{\prime}(x)+\left(J^{\prime}(x)+2\right) g(x)\right]-\left(J^{\prime}(x)+2\right) \frac{c^{\prime \prime}\left(e_m\right)}{\theta_m}\right)
        \end{aligned}
    \end{equation*}
    where \(e_m=e_m(\theta_n)\), \(e_n=e_n(\theta_n)\), and 
\(x=x(\theta_n)\).

    To determine the sign of $z'(\theta_n)$ when $\sigma$ is sufficiently large, first note that, by Lemma~\ref{lem:bounded}, $e_m(\theta_n)$ and $e_n(\theta_n)$ are bounded by $\bar e$ for all $\theta_n\in\Theta$. Because $|h|\leq\bar H$, we have $|x(\theta_n)|=|e_m(\theta_n)-e_n(\theta_n)+h|\leq  \bar e + \bar H$ for all $\theta_n\in\Theta$.  
    Therefore, by Lemma~\ref{lem:prelim}, we have
    \[
    \sup_{\theta_n\in\Theta}|g(x(\theta_n))|\to0,
    \quad
    \sup_{\theta_n\in\Theta}|g'(x(\theta_n))|\to0,
    \quad
    \sup_{\theta_n\in\Theta}|J'(x(\theta_n))|\to0
    \]
    as $\sigma \to \infty$.
Now we decompose the bracketed term into two parts:
\[
\begin{aligned}
    B_1(\theta_n)
&= 2\alpha\left[
(e_m(\theta_n)+e_n(\theta_n))g'(x(\theta_n))+(J'(x(\theta_n))+2)g(x(\theta_n))
\right]\\
B_2(\theta_n)
&=
-(J'(x(\theta_n))+2)\frac{c''(e_m(\theta_n))}{\theta_m}.
\end{aligned}
\]
As $\sigma \to \infty$, we have 
$$
\sup_{\theta_n\in\Theta}|B_1(\theta_n)| \to 0,\quad \mbox{and }
\sup_{\theta_n\in\Theta}\left|B_2(\theta_n)+2\frac{c''(e_m(\theta_n))}{\theta_m}\right| \to 0.$$
    Because $c''(e_m(\theta_n))\geq {\kappa}>0$ for all $\theta_n\in\Theta$, there exists \(\bar\sigma<\infty\) such that, for all 
    \(\sigma\ge \bar\sigma\),
    \[
    B_1(\theta_n) + B_2(\theta_n) <0
    \quad
    \text{for all }\theta_n\in\Theta.
    \]
    Thus, we have $z'(\theta_n) <0$ for all $\theta_n\in\Theta$ when the variance is sufficiently large.

    Hence, $u_m$ is quasiconcave in $\theta_n$, and there exists a unique $\theta_n^*\in \Theta$ (possibly at the boundary) that satisfies $u_m'(\theta_n)\geq0$ if and only if $\theta_n\leq \theta_n^*$.
    In this lemma, we focus on the interior solution $\theta_n^*\in(\underline\theta,\bar\theta)$ such that $z(\theta_n^*)=0$, $u_m'(\theta_n^*)=0$, and 
    $2e_n(\theta_n^*) -h = J(x(\theta_n^*))>0$.

    When $\theta_n^*\in(\underline\theta,\bar \theta)$, the first-order conditions are
    \begin{align}
        &c'(e_n^* )/\theta_n^*=\alpha, \\
        &c'(e_m^* )/\theta_m  = 2\alpha  \cdot G(e_m^* -e_n^* +h), \\
        &G(e_m^*-e_n^*+h) = (e_m^*+e_n^*)\cdot g(e_m^*-e_n^*+h).
    \end{align}
    Denote $e_m^*=e_m(\theta_n^*)$, $e_n^*=e_n(\theta_n^*)$, and $x^*=x(\theta_n^*)$.
    Denote $w_m = c'(e_m^* )/\theta_m - 2\alpha  \cdot G(e_m^* -e_n^* +h)$
    and $w_n= G(e_m^*-e_n^*+h)-(e_m^*+e_n^*)\cdot g(e_m^*-e_n^*+h)$.
    Define
    $w_{m\cdot}$ and $w_{n\cdot}$ analogously to $v_{m\cdot}$ and $v_{n\cdot}$ in the proof of Lemma~\ref{lemma:1}.
    
    Denote the Jacobian matrix as
        \[ \mathbf{J}_2  
        =   \begin{pmatrix}
            w_{mm} & w_{mn}\\
            w_{nm} & w_{nn}
        \end{pmatrix}
        =\begin{pmatrix}
            c''(e_m^*)/\theta_m - 2\alpha g(x^*) & 2\alpha g(x^*)\\
            -(e_m^*+e_n^*)g'(x^*) & -2g(x^*)+(e_m^*+e_n^*)g'(x^*)
        \end{pmatrix}. \]
    By the implicit function theorem, for any parameter $p\in\{\theta_m, \alpha, h\}$,
    \begin{equation}
        \begin{pmatrix}
               d e_m^*/d p\\
               d e_n^*/d p 
            \end{pmatrix}
            = - \mathbf{J}_2^{-1}  
            \begin{pmatrix}
               w_{mp}\\
               w_{np} 
            \end{pmatrix}
    \end{equation}
    and 
    \[
    \frac{d\theta_n^*}{d p} = \frac{c''(e_n^*)}{\alpha} \frac{d e_n^*}{d p} -  \frac{c' (e_n^*)}{\alpha^2}  \frac{d \alpha}{d p}.
    \]

    Similar to Lemma~\ref{lemma:1}, when the variance is sufficiently large, we have 
    \begin{align}
    & A \equiv  {w_{mm}}  = \frac{c''(e_m^*)}{ \theta_m} -2\alpha g(x^*)>0,\\
    & B \equiv   -\frac{w_{nn}}{g(x^*)} =  2 - (e_m^*+e_n^*) \frac{g'(x^*)}{g(x^*)} >0,\\
    & {D} \equiv -\frac{\det(\mathbf{J}_2)}{ g(x^*)} =  \frac{c''(e_m^*)}{ \theta_m} B -4\alpha g (x^*) >0 ,\\
    & {D}-A = \frac{c''(e_m^*)}{ \theta_m} (B-1) -2\alpha g (x^*)  >0
    \end{align}
    and thus second-order conditions are satisfied.

   (1) With respect to $\theta_m$, 
    because $w_{m,\theta_m}=-c'(e_m^*)/\theta_m^2$ and $w_{n,\theta_m}=0$,
    we have
    \begin{align}
        & \frac{d e_m^*}{d\theta_m} =  \frac{ B }{{D}} \frac{c'(e_m)}{\theta_m^2}  >0, \\
        & \frac{d e_n^*}{d\theta_m} = - \frac{1 }{{D}}(e_m^*+e_n^*) \frac{c'(e_m)}{\theta_m^2}  \frac{g'(x^*)}{g(x^*)} \stackrel{\text{sign}}{=} -g'(x^*), \\
        & \frac{d \theta_n^*}{d\theta_m} =   \frac{c''(e_n^*)}{\alpha}  \frac{d e_n^*}{d\theta_m} \stackrel{\text{sign}}{=} -g'(x^*).
    \end{align}
    By Lemma~\ref{lem:prelim}, $g'(x)\leq0$ ($g'(x)\geq0$) if $x\geq 0$ ($x\leq 0$) because $g(x)$ is single-peaked at zero. Thus, $\frac{d e_n^*}{d\theta_m}$ and $\frac{d \theta_n^*}{d\theta_m}$ are positive (negative) if $m$ is more (less) likely to win.

   (2) With respect to $\alpha$, 
   because $w_{m \alpha} = -2  G(x^*)= - c'(e_m^*)/\alpha\theta_m$ and $w_{n \alpha} =0$,
   we have
    \begin{align}
        & \frac{d e_m^*}{d\alpha} =  \frac{ B }{{D}} \frac{c'(e_m)}{\alpha \theta_m}  >0, \\
        & \frac{d e_n^*}{d\alpha} = - \frac{1 }{{D}}\frac{c'(e_m)}{\alpha \theta_m}  (e_m^*+e_n^*) \frac{g'(x^*)}{g(x^*)} \stackrel{\text{sign}}{=} -g'(x^*), \\
        & \frac{d \theta_n^*}{d\alpha} =   \frac{c''(e_n^*)}{\alpha}  \frac{d e_n^*}{d\alpha} - \frac{c'(e_n^*)}{\alpha ^2}<0.
    \end{align}
    The last equation is negative if $g'(x^*)\geq 0$, because then $d e_n^*/d\alpha\leq 0$.
    When $g'(x^*)\leq 0$, it holds under the large-variance condition, since $|g'(x^*)/g(x^*)|$ is sufficiently small and hence $d e_n^*/d\alpha$ is sufficiently small.

    (3)  With respect to $h$, 
    because $w_{m h} = -2\alpha g(x^*)$ and $w_{n h} = g(x^*) - (e_m^*+e_n^*)g'(x^*)$,
    we have
    \begin{align}
        & \frac{d e_m^*}{dh} =   \frac{2\alpha g(x^*)}{{D}}>0, \\
        & \frac{d e_n^*}{dh} =   \frac{{D}-A}{{D}}>0, \\
        & \frac{d \theta_n^*}{dh} =  \frac{c''(e_n^*)}{\alpha}  \frac{d e_n^*}{dh}>0.
    \end{align}
    Moreover,  holding the new hire's ability fixed, we have %
   \begin{equation}
    \frac{\partial e_n(\theta_n^* ,h)}{\partial h} 
    = - \alpha g(x^*) \frac{ {D}-A}{\det(\mathbf{J}_1)} <0.
   \end{equation}
    Thus, the discouragement effect is negative but is dominated by the hiring effect.

   (4) With respect to $\sigma$ and within a scale family $G(x) = F(x/\sigma)$,
because $w_{m\sigma} = 2\alpha x^* g(x^*)/\sigma$
and $w_{n\sigma} =  ((e_m^*+e_n^*) (g(x^*)+x^* g^{\prime}(x^*) )-x^* g(x^*) )/\sigma$,
we have
  \begin{align}
    & \frac{d e_m^*}{d\sigma} =   - \frac{2\alpha(2e_m^*+h)g(x^*)}{\sigma {D}}<0 \quad \mbox{if }h\geq0, \\
    & \frac{d e_n^*}{d\sigma} =    \frac{A \cdot J(x^*) +  \frac{c''(e_m^*)}{ \theta_m}  (e_m^*+e_n^*) x^*\frac{g'(x^*)}{g(x^*)}}{ \sigma {D}}> 0, \\
    & \frac{d \theta_n^*}{d\sigma} = \frac{c''(e_n^*)}{\alpha}  \frac{d e_n^*}{d\sigma}>0
\end{align}
The second inequality follows from $A>0$ and
$J(x^*)\geq \frac1{2g(0)}>0$. The remaining term is dominated under the
large-variance condition because $|g'(x^*)/g(x^*)|$ is sufficiently small.

(5)
We show that
there exist $\bar\theta_m, \underline\theta_m\in \Theta$ such that if $\theta_m>\bar\theta_m$ ($\theta_m<\underline\theta_m$) and $h\geq0$ ($h\leq0$),
then $x^* \geq 0$ ($x^*\leq0$), i.e., the manager is more (less) likely to win.

First, note that when $h=0$ and $\theta_m\to0$, we have $e_m^*\to0$, so the first-order condition~\eqref{foc:theta_n} implies
    \begin{equation}
       G(-e_n^*) = e_n^* g(-e_n^*) \iff  e_n^*=\frac{1-G(e_n^*)}{g(e_n^*)} >0.  
    \end{equation}
    Therefore, $x^*=-e_n^*<0$.
    On the other hand, when $h=0$ and $\theta_m=\bar\theta$, we have $\theta_n^*\leq \theta_m$.
    Therefore, by Lemma~\ref{lemma:1}, $x^*\geq0$.

    Then, because 
    \begin{align}
        \frac{d x^*}{d\theta_m }  =  \frac{d e_m^*}{d\theta_m } -  \frac{d e_n^*}{d\theta_m }
        =  \frac{c'(e_m^*)}{\theta_m^{2}} \frac{2}{ D} >0,
     \end{align}
    and
        \begin{align}
            \frac{d  x^*}{d h}  
            =  \frac{d e_m^*}{dh} - \frac{d e_n^*}{dh} + 1
            = \frac{ 2\alpha g(x^*) +A }{{D}}>0,
        \end{align}
    we have $x^*\geq0$ if $\theta_m\geq\bar\theta_m$ and $h\geq0$.
    Similarly, we also have $x^*\leq0$ if $\theta_m\leq\underline\theta_m$ and $h\leq0$.

(6) Finally, because equilibrium efforts are bounded,
there exist $\bar h_1,\bar h_2>0$ such that $x^* = e_m^*-e_n^*+h \geq0$ for all $h>\bar h_1$ and $x^*= e_m^*-e_n^*+h  \leq0$ for all $h<- \bar h_2$.
\end{proof}

\subsection{Proof of Lemma~\ref{lemma:3}}
\begin{proof}
    Given $\theta_n$, the first-order conditions are
    \begin{align}
       & u_m \equiv  \alpha[(e_n+e_m)\cdot g(e_m-e_n+h) + G(e_m-e_n+h)] - c'(e_m)/\theta_m = 0, \\
       & v_n \equiv  \alpha[(e_n+e_m)\cdot g(e_m-e_n+h) + G(e_n-e_m-h)] - c'(e_n)/\theta_n = 0.
    \end{align}
        We have
    \begin{equation}
        \frac{d e_m^*}{dh}+ \frac{d e_n^*}{dh}  = \alpha \frac{ (g(x^*)+(e_m^*+e_n^*)  g'(x^*)) \frac{c''(e_n^*)}{\theta_n}  -(g(x^*)-(e_m^*+e_n^*)  g'(x^*)) \frac{c''(e_m^*)}{\theta_m}}{ \det(\mathbf{J}_1)}. 
    \end{equation}
    Thus,  $\frac{d e_m^*}{dh}+ \frac{d e_n^*}{dh} \leq 0$ if and only if 
    \begin{equation}
        (e_m^*+e_n^*)  \frac{g'(x^*)}{g(x^*)} \leq \frac{c''(e_m^*)/\theta_m - c''(e_n^*)/\theta_n}{c''(e_m^*)/\theta_m + c''(e_n^*)/\theta_n}.
    \end{equation}
    With quadratic costs $c(e)=e^2/2$, this condition simplifies to
    \begin{equation} 
        (e_m^*+e_n^*)  \frac{g'(e_m^* -e_n^* +h)}{g(e_m^* -e_n^* +h)} \leq \frac{\theta_n-\theta_m}{\theta_n+\theta_m},
    \end{equation}
    which is satisfied if $\theta_n>\theta_m$ and the variance is sufficiently large because $|g'(x)/g(x)|$ is small.
    \end{proof}

\subsection{Proofs of Proposition~\ref{prop:2} and Corollary}
\begin{proof}[Proof of Proposition~\ref{prop:2}]
By Proposition~\ref{prop:B1}, which covers winner-take-all tournaments as a special case, the principal's profit satisfies
\begin{equation}  
   \Pi(\alpha,h)
\leq
(1-\alpha)
\frac{\alpha\bar\theta M}
     {M-2\alpha\theta_m}
\equiv\overline\Pi(\alpha).  
\end{equation}
   Maximizing $\overline\Pi(\alpha) =  
    (1-\alpha ) (\alpha \bar \theta +  \frac{2\alpha\bar \theta}{M/\alpha\theta_m -2})$ gives \[\alpha^* = \frac{M - \sqrt{M(M-2\theta_m)}}{2\theta_m} = \left(1+ \sqrt{1-2\frac{\theta_m}{M}}\right)^{-1}.\]
   The head start \(h^* = \bar h (\alpha^*) = 2 \alpha^* \bar\theta - M/2\), which induces the manager to choose $\theta_n^* = \bar\theta$, is feasible by the assumption $\bar H\geq|h^*|$.
   The contract $(\alpha^*, h^*)$ attains the upper bound $\overline\Pi(\alpha^*)$.
   By Proposition~\ref{prop:1}, conditional on
    $\alpha^*$, $h^*=\bar h(\alpha^*)$ is also the unique optimal head start when
    $\theta_m<\bar\theta$.

    The principal's profit is 
\[
\Pi^*=\overline\Pi(\alpha^*)
= \frac{M(M-\theta_m-\sqrt{M(M-2\theta_m)})}
        {2\theta_m^2}\bar\theta
= \left(
2\left(\frac1{\alpha^*}-\frac{\theta_m}{M}\right)
\right)^{-1}\bar\theta.
\]
\end{proof}

\begin{proof}[Proof of Corollary \ref{cor:2.1}]
    For $\alpha^* = \frac{M - \sqrt{M(M-2\theta_m)}}{2\theta_m}$, because $M>2\bar\theta\geq 2 \theta_m$, we have
        \begin{align}
            &\frac{d \alpha^*}{d M} = \frac{\sqrt{M(M-2\theta_m)}-(M-\theta_m) }{2\theta_m\sqrt{M(M-2\theta_m)}}<0.\\
           & \frac{d \alpha^*}{d \theta_m} =  - \frac{\sqrt{M(M-2\theta_m)}-(M-\theta_m) }{2\theta_m^2 \sqrt{M(M-2\theta_m)}}M>0.
        \end{align}
    
    For $h^* = 2\alpha^* \bar\theta -M /2 $,
    \begin{align}
        &\frac{d h^*}{d M} = 2\frac{d \alpha^*}{d M} \bar \theta - \frac{1}{2}<0.\\
       & \frac{d h^*}{d \theta_m} = 2\frac{d \alpha^*}{d \theta_m}  \bar \theta>0.
    \end{align}
    Direct calculations show that \(h^*>0\iff M<8\bar\theta^2/(4\bar\theta-\theta_m)\).
    
    Because $\Pi^* = \max_{\alpha} 
    (1-\alpha ) (\alpha \bar \theta +  \frac{2\alpha\bar \theta}{M/\alpha\theta_m -2})$, by the envelope theorem, we have $\frac{d \Pi^*}{d M}<0$ and $\frac{d \Pi^*}{d \theta_m}>0$.
\end{proof}

\section{General Prize Sharing Rules}
\label{app:beta}
\subsection{Setup and Results}

We consider a general sharing rule where the winner receives share $\alpha$ and the loser receives share $\beta$, where $0\leq \beta \leq \alpha$ and $0<\alpha + \beta < 1$, instead of a winner-take-all tournament.
Let $\gamma = \alpha - \beta\in [0,\alpha]$ denote the spread and $\tau = \alpha + \beta \in (0,1)$ denote the total share.
In particular, the winner-take-all tournament corresponds to $\beta=0$ and hence
$\gamma=\tau=\alpha$.
The principal's profit is given by $\Pi(\alpha,\beta) = (1-\tau) (e_m^*+e_n^*)$.

Given the new hire's ability $\theta_n$, in the tournament subgame $\Gamma(\theta_n)$, the expected payoffs for $m$ and $n$ are given by
\begin{align}
   u_m(e_m,e_n,\theta_m) & = \left[\beta + \gamma  G(e_m-e_n+h)\right] (e_m+e_n) - \frac{c(e_m)}{\theta_m}, \\
   u_n(e_m,e_n,\theta_n) & = \left[\beta + \gamma  G(e_n-e_m-h)\right] (e_m+e_n) - \frac{c(e_n)}{\theta_n}.
\end{align}
The first-order conditions for the tournament stage are
\begin{align}
  \frac{\partial u_m}{\partial e_m} &= \gamma\cdot(e_m+e_n) g(e_m-e_n+h) + \beta + \gamma G(e_m-e_n+h) - \frac{c'(e_m)}{\theta_m} = 0, \label{FOC1'}\\
  \frac{\partial u_n}{\partial e_n} &= \gamma\cdot(e_m+e_n) g(e_m-e_n+h) + \beta + \gamma G(e_n-e_m-h) - \frac{c'(e_n)}{\theta_n} = 0. \label{FOC2'}
\end{align}
The equilibrium effort levels $e_m(\theta_n)$ and $e_n(\theta_n)$ in the tournament stage are implicitly defined by the above first-order conditions.

\begin{example*}[Uniform--Quadratic]
Suppose $G \sim \operatorname{Unif}[-\frac{M}{2},\frac{M}{2}]$, $c(e)=e^2/2$, and $M$ is sufficiently large that all winning probabilities are interior.
Then, 
\begin{align*}
    e_m(\theta_n) &= \frac{\tau M/2 + \gamma h}{M/\theta_m -2 \gamma}, \\
    e_n(\theta_n) &=  \frac{\tau M/2 - \gamma h}{M/\theta_n -2 \gamma}.
\end{align*}

Fixing the spread $\gamma$, a larger total share $\tau$ increases both the manager's and the new hire's efforts.
Fixing the total share $\tau$, a wider spread $\gamma$ increases the manager's effort if and only if $h\geq -\tau \theta_m$ and increases the new hire's effort if and only if $h\leq  \tau \theta_n$. %
\end{example*}

Then, at the hiring stage, the manager chooses $\theta_n$ to maximize her expected payoff in the tournament stage, and the first-order condition is 
\begin{align} \label{FOC3'}
   \frac{d u_m}{d \theta_n} =
   \frac{\partial u_m}{\partial e_n} e_n'(\theta_n) 
   = \left[\beta + \gamma G(e_m-e_n+h) - \gamma \cdot (e_m+e_n) g(e_m-e_n+h)\right] e_n'(\theta_n).
\end{align}
Hence, in the SPNE, when the hiring decision $\theta_n^*\in(\underline{\theta},\bar\theta)$, the first-order conditions for $\theta_n^*$ and the equilibrium effort levels $(e_m^*,e_n^*)$ are given by 
\begin{align}
\beta + \gamma G(e_m^*-e_n^*+h) &= \gamma\cdot(e_m^*+e_n^*) g(e_m^*-e_n^*+h), \label{eqm1'} \\
 c'(e_n^*)/\theta_n^* &= \tau, \label{eqm2'} \\
 c'(e_m^*)/\theta_m  &= 2\left[\beta + \gamma G(e_m^*-e_n^*+h)\right]. \label{eqm3'}
\end{align}

If the prize spread $\gamma>0$, the comparative statics of the benchmark model continue to apply if both winning probabilities are interior. When $\theta_n^*\in(\underline\theta,\bar\theta)$, the head start has a positive encouragement effect on the manager (i.e., $de_m^*/dh>0$), and its hiring effect dominates the discouragement effect (i.e., $de_n^*/dh>0$).
Once the strongest candidate $\theta_n^*=\bar\theta$ is hired, total effort is weakly decreasing in $h$ whenever condition~\eqref{condition} holds.

If the prize spread $\gamma=0$ (i.e., $\alpha=\beta$), the manager has no incentive to sabotage hiring, so she always hires the strongest candidate $\theta_n^*=\bar\theta$, and the equilibrium effort levels are given by $c'(e_m^*)=\alpha \theta_m$ and $c'(e_n^*)=\alpha \bar\theta$ regardless of the head start.

\begin{example*}[Uniform--Quadratic]
Suppose that $G \sim \operatorname{Unif}[-\frac{M}{2},\frac{M}{2}]$, $c(e)=e^2/2$, and $M$ is sufficiently large that all winning probabilities are interior. 
If $\gamma>0$, the equilibrium efforts and the optimal hiring decision are given by 
\begin{align*}
    \theta_n^*
    &= \min\left\{
    \frac{\tau M/2+\gamma h}{2\gamma\tau},
    \bar \theta
    \right\}, \\
    e_m^*
    &= \frac{\tau M/2+\gamma h}{M/\theta_m - 2\gamma}, \\
    e_n^*
    &= \min\left\{
    \frac{\tau M/2+\gamma h}{2\gamma},
    \frac{\tau M/2-\gamma h}{M/\bar \theta -2\gamma}
    \right\}.
\end{align*}
If $\gamma=0$ (i.e., $\alpha=\beta=\tau/2$), we have 
$\theta_n^*=\bar\theta$, $e_m^*=\alpha \theta_m $, and $e_n^*=\alpha \bar\theta$.

Fixing the spread $\gamma$, a higher total share $\tau$ always increases the manager's effort.
It also increases her incentive to sabotage if and only if $h\geq0$, and the aggregate effect on the new hire's effort is always positive until the strongest candidate is hired (i.e., if $\theta_n^* < \bar\theta$). 

Fixing the total share $\tau$, a wider spread $\gamma$ increases the manager's effort if and only if $h\geq -\tau \theta_m$. It also increases her incentive to sabotage, and the aggregate effect on the new hire's effort is negative until the strongest candidate is hired.
\end{example*}

Although awarding a positive share $\beta>0$ to the loser can
mitigate hiring sabotage, this may be unnecessary when the principal
can instead rely on a head start to serve the same purpose. Indeed, the
following proposition shows that, in the uniform--quadratic setting,
the optimal sharing rule is winner-take-all---that is, $\beta^*=0$.

\begin{proposition} \label{prop:B1}
 Suppose
$G\sim\operatorname{Unif}[-\frac{M}{2},\frac{M}{2}]$, $c(e)=\frac{e^2}{2}$, and $
M>(1+\sqrt{2})\bar\theta.$
Define
\[
\alpha^*
=
\left(
1+\sqrt{1-\frac{2\theta_m}{M}}
\right)^{-1}, \quad h^* = 2\alpha^*\bar\theta-\frac{M}{2},
\]
and suppose $\bar H \geq \left|h^* \right|$.
We restrict to contracts for which the 
tournament subgame $\Gamma(\theta_n)$ has a pure-strategy Nash equilibrium for every $\theta_n\in\Theta$.

Then, the optimal prize-sharing rule is winner-take-all (\(\beta^*=0\)), the optimal payout ratio and head start are \((\alpha^*,h^*)\), and this contract induces the manager to hire the highest-ability agent.
\footnote{
If $\theta_m<\bar\theta$, then $(\alpha^*,\beta^*,h^*)$ is the unique optimal contract.
}
\end{proposition}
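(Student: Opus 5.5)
The plan is to avoid optimizing over the three-dimensional contract $(\alpha,\beta,h)$ directly. Instead I would derive, in the uniform--quadratic setting, an upper bound on equilibrium total effort that depends on the contract only through the total share $\tau=\alpha+\beta$. I would then maximize $(1-\tau)$ times this bound over $\tau$, and finally check that the bound is attained exactly by $\beta=0$, $\alpha=\alpha^*$, $h=h^*$. Throughout I would reparametrize by the spread $\gamma=\alpha-\beta\in[0,\tau]$ and $\tau$, and use the closed forms of the running example:
\[
e_m(\theta_n)=\frac{\tau M/2+\gamma h}{M/\theta_m-2\gamma},\qquad e_n(\theta_n)=\frac{\tau M/2-\gamma h}{M/\theta_n-2\gamma}.
\]

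\textbf{Step 1 (hiring constraint).} Let $\theta_n^*$ be the manager's choice. Under the uniform distribution the bracket in \eqref{FOC3'} is affine in $x$. Optimality of $\theta_n^*$, whether interior or at the corner $\bar\theta$, requires the bracket to be nonnegative there. This gives
\[
\tau M/2+\gamma h\geq 2\gamma e_n^* .
\]
Substituting this into the formula for $e_n^*$ gives $e_n^*(M/\theta_n^*-2\gamma)\le \tau M-2\gamma e_n^*$, hence $e_n^*\le\tau\theta_n^*\le\tau\bar\theta$. At the corner $\underline\theta$ the bracket may be negative, and I would treat that case separately, although it should satisfy the same final bound. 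The case $\gamma=0$ is trivial: then $e_m^*=\tau\theta_m/2$ and $e_n^*=\tau\bar\theta/2$, which lie below the bound derived next.

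\textbf{Step 2 (eliminate $h$).} I would solve the $e_n^*$ equation for $\gamma h$ and substitute it into $e_m^*$. This gives
\[
e_m^*+e_n^*=\frac{\tau M+e_n^*\,(M/\theta_m-M/\theta_n^*)}{M/\theta_m-2\gamma}.
\]
If $\theta_n^*\ge\theta_m$, the coefficient on $e_n^*$ is nonnegative, and the bound $e_n^*\le\tau\theta_n^*$ yields $e_m^*+e_n^*\le \tau\theta_n^*M/(M-2\gamma\theta_m)$. If $\theta_n^*<\theta_m$, dropping the now-negative term yields $e_m^*+e_n^*\le \tau\theta_m M/(M-2\gamma\theta_m)$. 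In both cases, since $\gamma\le\tau$ and $\theta_n^*,\theta_m\le\bar\theta$,
\[
\Pi\le (1-\tau)\frac{\tau\bar\theta M}{M-2\tau\theta_m}\equiv\overline\Pi(\tau).
\]

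\textbf{Step 3 (optimize and attain).} Differentiating $\overline\Pi$ gives a quadratic $\theta_m\tau^2-M\tau+M/2=0$. Its root in $(0,1)$ is $\tau=(1+\sqrt{1-2\theta_m/M})^{-1}=\alpha^*$, and $\overline\Pi$ is single-peaked there. Equality in the chain of Step 2 requires $\gamma=\tau$ (so $\beta=0$), $\theta_n^*=\bar\theta$, and $e_n^*=\tau\bar\theta$; when $\theta_m<\bar\theta$ it also requires the first case. Equivalently, the hiring constraint must bind at $\bar\theta$, which gives $h=2\alpha^*\bar\theta-M/2=h^*$. This yields both optimality and uniqueness when $\theta_m<\bar\theta$. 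Feasibility of $h^*$ follows from $\bar H\ge|h^*|$.

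\textbf{Main obstacle.} The hardest part will be justifying the closed forms: every winning probability must be interior, i.e.\ $|x|\le M/2$, and the restriction to contracts for which a pure equilibrium exists must handle the rest. For contracts where some subgame has a boundary winning probability, the first-order conditions change: $g$ jumps to zero and effort becomes $c'(e)=2\beta\theta$ or $2\alpha\theta$. I would first try to show directly that in those regimes total effort is at most $\tau\bar\theta$, which is below $\overline\Pi(\tau)/(1-\tau)$. I would then use $M>(1+\sqrt2)\bar\theta$ to verify that at $(\alpha^*,h^*)$ one has $|x^*|<M/2$, that $2\gamma<M/\bar\theta$ so that the second-order conditions hold, and that $\alpha^*\bar\theta<M/2$ so the resulting subgames are well defined.
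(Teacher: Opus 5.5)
Steps 1--3 are the paper's argument in different algebra. Your hiring inequality $\tau M/2+\gamma h\ge 2\gamma e_n^*$ is exactly the paper's condition $\hat\theta\ge\theta_n^*$, with $\hat\theta=(\tau M/2+\gamma h)/(2\gamma\tau)$, and eliminating $h$ replaces its case-by-case monotonicity in $\hat\theta$. Your worry about $\theta_n^*=\underline\theta$ is unnecessary: there $\theta_n^*\le\theta_m$, and the second branch of Step 2 never uses $e_n^*\le\tau\theta_n^*$.

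\textbf{The gap.} It lies in Step 1 itself. Deriving that inequality from optimality of $\theta_n^*$ presupposes that the manager's continuation payoff is given by the interior closed forms for all $\theta_n$ in a left neighbourhood of $\theta_n^*$. Existence of a pure equilibrium in every subgame is all the restriction gives you, and it does not deliver this. You need the subgame equilibrium to be \emph{unique}, hence continuous in $\theta_n$, and never to sit exactly at $|x|=M/2$.

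This is not automatic. At $x=-M/2$ a player's marginal return jumps \emph{up} from $\beta$ to $\beta+\gamma(e_m+e_n)/M$, so payoffs can be bimodal. A certain-loss profile $(\beta\theta_m,\alpha\theta_n)$ could therefore a priori coexist with the interior one. If it did, an SPNE could select the corner for $\theta_n<\theta_n^*$ and sustain an interior hire with your bracket negative. Then the bound genuinely fails. With $\beta=0$ and $\theta_n=\bar\theta$ fixed, $d(e_m+e_n)/dh=\alpha\theta_m/(M-2\alpha\theta_m)-\alpha\bar\theta/(M-2\alpha\bar\theta)<0$ for $\theta_m<\bar\theta$. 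So lowering $h$ slightly below $2\alpha\bar\theta-M/2$ pushes total effort above $\alpha\bar\theta M/(M-2\alpha\theta_m)$.

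The paper closes this with a separate uniqueness lemma:
\begin{itemize}
\item one-sided derivatives exclude equilibria at the kink;
\item two strict corners would need $M<\gamma(\theta_m+\theta_n)$;
\item coexistence of the interior and certain-loss equilibria is excluded by showing that the manager's gain from $e_m^I$ over $\beta\theta_m$ is strictly decreasing in the rival's effort on $[\alpha\theta_n,e_n^I]$.
\end{itemize}
That last step is where $M>(1+\sqrt{2})\bar\theta$ enters essentially, via $M^2-2M\bar\theta-\bar\theta^2>0$, and not only in checking the candidate contract.

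\textbf{Two smaller repairs.} First, at a strict corner the efforts are $(\beta\theta_m,\alpha\theta_n)$ or $(\alpha\theta_m,\beta\theta_n)$, i.e.\ $c'(e)=\beta\theta$ or $\alpha\theta$, not $2\beta\theta$ or $2\alpha\theta$. Only the correct values give total effort at most $\tau\bar\theta$.

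Second, at $(\alpha^*,0,h^*)$, checking $|x^*|<M/2$ and the interior second-order condition does not show that the interior profile is an equilibrium for every $\theta_n\in\Theta$. You need this both for the contract to be admissible and for the manager's choice of $\bar\theta$ to follow from the sign of your bracket. You must also rule out deviations into the certain-loss region, i.e.\ verify:
\begin{itemize}
\item $M/2+h^*-e_n^*(\theta_n)>0$, which holds since $e_n^*(\theta_n)\le\alpha^*\theta_n<2\alpha^*\bar\theta$;
\item $M/2-h^*-e_m^*>0$.
\end{itemize}
Then each payoff is strictly concave on all of $e_i\ge0$.

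With these pieces in place, your equality analysis gives uniqueness of $h^*$ for $\theta_m<\bar\theta$ directly. That is slightly cleaner than the paper's appeal to Proposition~\ref{prop:1}.
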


The proof is in Appendix~\ref{app:proofs-beta}.
Our results show that giving a head start $h$ to the manager is a more effective instrument for mitigating hiring sabotage than awarding a positive share $\beta$ to the loser: when the principal can use a head start, the optimal tournament is winner-take-all, and the optimal head start $h^*$ and payout ratio $\alpha^*$ characterized in Proposition~\ref{prop:2} remain unchanged.

\subsection{Proofs for Appendix \ref{app:beta}}
\label{app:proofs-beta}

\begin{lemma}[Equilibrium uniqueness in the tournament subgame]
\label{lem:unique-continuation}
Under the assumptions of Proposition~\ref{prop:B1}, for every feasible sharing rule $(\alpha,\beta)$, head start $h$,
and hiring choice $\theta_n\in\Theta$, the tournament subgame
$\Gamma(\theta_n)$ has at most one pure-strategy Nash equilibrium.
\end{lemma}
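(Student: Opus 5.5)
The plan is to exploit the piecewise-linear structure of the uniform distribution. Under uniform noise, each player's marginal benefit from effort does not depend on the opponent's effort as long as the winning probability is interior. Fix $(\alpha,\beta)$, $h$ and $\theta_n$, and write $x=e_m-e_n+h$. The effort space $\mathbb{R}_+^2$ splits into three regions: the \emph{interior region} $|x|<M/2$, the region $x\ge M/2$ where $m$ wins surely, and the region $x\le -M/2$ where $n$ wins surely. I will (a) characterize each player's best-response correspondence explicitly, (b) show it is single-valued with a simple monotone form, and (c) show that the two best-response maps intersect at most once.

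\textbf{Step 1 (payoffs region by region).} On the interior region, $G(x)=\frac{x+M/2}{M}$ and
\[
\frac{\partial u_m}{\partial e_m}=\beta+\frac{\gamma}{M}\left(2e_m+h+\frac M2\right)-\frac{e_m}{\theta_m},
\]
which does not depend on $e_n$. Its slope in $e_m$ is $\frac{2\gamma}{M}-\frac1{\theta_m}$, and this is negative because $M>(1+\sqrt2)\bar\theta>2\bar\theta\ge 2\gamma\theta_m$ (using $\gamma\le 1$). Hence $u_m$ is strictly concave in $e_m$ there, with unconstrained maximizer
\[
\hat e_m=\frac{\beta+\gamma(h+M/2)/M}{1/\theta_m-2\gamma/M},
\]
and symmetrically for $\hat e_n$. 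On the two outer regions $G$ is constant, so $u_i$ is linear minus a quadratic term. The candidate maximizers are then $e_i=\tau\theta_i$ for the sure winner and $e_i=\beta\theta_i$ for the sure loser.

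At the boundary where player $i$ becomes a sure winner, the slope of $G$ drops from $1/M$ to $0$. This is a concave kink and causes no trouble. At the boundary where $i$ becomes a sure loser, the slope jumps from $0$ to $\gamma(e_m+e_n)/M$. This is a convex kink, so $u_i$ is only piecewise concave in own effort, with at most two local maxima: one in the sure-losing piece and one in the remaining (concave) piece.

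\textbf{Step 2 (best responses).} For fixed $e_n$, I will compare the two local maxima of $u_m$ and write $BR_m(e_n)$ explicitly. The expected form is: $\hat e_m$ (or its projection onto the interior-region constraint) when $e_n$ is small enough that $m$ can reach the interior region profitably, and $\beta\theta_m$ once $e_n$ is large enough that $m$ gives up. Tie-breaking occurs at most at a single threshold $\bar e_n$, determined by equating the two local-maximum payoffs. The analogous description holds for $BR_n(e_m)$.

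\textbf{Step 3 (at most one intersection).} Using the explicit forms, I will classify any equilibrium by the region containing its $x$. Inside each region the equilibrium conditions are linear equations, since the interior-region best responses are constants, so each region contains at most one candidate. It then remains to rule out two candidates in different regions simultaneously satisfying the equilibrium conditions.

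For this I plan to use the bound $e_i\le\tau\theta_i<\bar\theta$, which follows from the first-order conditions with quadratic cost. Together with $M>(1+\sqrt2)\bar\theta$, this bound shows that the regions in which the three candidates are self-consistent are ordered monotonically in $h$ and are disjoint. Concretely, an interior candidate requires $|\hat e_m-\hat e_n+h|<M/2$. A candidate in which $n$ gives up requires that $n$ prefer $\beta\theta_n$ to climbing back into the interior region, and the payoff comparison should show this forces $\hat e_m-\hat e_n+h\ge M/2$ up to a margin controlled by $M-(1+\sqrt2)\bar\theta$.

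\textbf{Main obstacle.} I expect the hardest step to be this cross-region exclusion, because of the convex kink: the sure-loser piece can compete with the interior piece, and the payoff comparison involves quadratics in $h$. I would first attempt it by direct algebra. I would substitute the closed forms for $\hat e_i$, compute the payoff gap between giving up and competing as a quadratic in $h$, and locate its root relative to the self-consistency threshold $h=M/2-\hat e_m+\hat e_n$. The condition $M>(1+\sqrt2)\bar\theta$ looks exactly like the discriminant-type condition needed to order these two thresholds.
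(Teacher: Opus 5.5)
\textbf{There is a genuine gap.} Your architecture matches the paper's: sort equilibria by whether the winning probability is interior or a strict corner, note that each regime pins down at most one profile, then exclude coexistence across regimes. The gap is the cross-regime exclusion. You correctly call it the crux, but you do not supply it, and the mechanism you sketch fails.

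You want the giving-up player's preference for $\beta\theta_i$ to force $\hat e_m-\hat e_n+h\ge M/2$ (up to a margin), so that the sets of $h$ on which the candidates are self-consistent are disjoint. They are not disjoint. Take $\bar\theta=\theta_n=1$, $\theta_m=0.2$, $M=2.5$, $\alpha=\tfrac12$, $\beta=0$, $h=1.17$.
\begin{itemize}
\item The corner $(\alpha\theta_m,\beta\theta_n)=(0.1,0)$ is an equilibrium. Against $e_m=0.1$, the new hire's payoff on the interior piece is $-0.3e_n^2+0.016e_n-0.0004<0$.
\item Yet the interior candidate $(\hat e_m,\hat e_n)\approx(0.1052,0.0267)$ has $\hat e_m-\hat e_n+h\approx1.2486<M/2$.
\item It fails to be an equilibrium only because the new hire also prefers to give up against $\hat e_m$.
\end{itemize}
So no ordering of a root of a quadratic in $h$ against the threshold $M/2-\hat e_m+\hat e_n$ can do the job. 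You must use the no-give-up condition at the interior candidate itself, not just its region condition.

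\textbf{The missing idea is monotonicity in the opponent's effort rather than in $h$.} In the case where the manager gives up, let $\Delta(\tilde e_n)=u_m(e_m^I,\tilde e_n,\theta_m)-u_m(\beta\theta_m,\tilde e_n,\theta_m)$.
\begin{itemize}
\item The corner condition $\beta\theta_m-\alpha\theta_n+h<-M/2$ implies $e_n^I>\alpha\theta_n$.
\item It also implies $e_m^I-\beta\theta_m<\gamma\tau\bar\theta^2/(M-2\gamma\bar\theta)<M$. This is where $M>(1+\sqrt2)\bar\theta$ enters, through $M^2-2M\bar\theta-\bar\theta^2>0$.
\item Hence on $[\alpha\theta_n,e_n^I]$, the effort $\beta\theta_m$ loses for sure while $e_m^I$ keeps the winning probability interior, and $\Delta'(\tilde e_n)=\frac{\gamma}{M}(\frac M2+h-2\tilde e_n)<0$.
\item The corner equilibrium gives $\Delta(\alpha\theta_n)\le0$, so $\Delta(e_n^I)<0$, contradicting optimality of $e_m^I$.
\end{itemize}
Your Step 2 ``single threshold $\bar e_n$'' is exactly this fact, but you assert it without proof and never use it in Step 3.

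Three smaller errors:
\begin{itemize}
\item The sure winner's effort is $\alpha\theta_i$, not $\tau\theta_i$.
\item The bound $e_i\le\tau\theta_i$ does not follow from the first-order conditions, since on the interior piece $e_m/\theta_m=\beta+\gamma G+\gamma(e_m+e_n)/M$. With $\theta_m=\theta_n=1$, $M=2.5$, $\alpha=\tfrac12$, $\beta=0$, $h=0.4$, there is an interior equilibrium with $e_m=0.55>\tau\theta_m$.
\item Profiles with $|e_m-e_n+h|=M/2$ must be ruled out separately, because at a kink the equilibrium conditions are inequalities rather than your linear equations. The paper does this via the would-be loser's convex kink, where the right derivative exceeds the left.
\end{itemize}
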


\begin{proof}
If $\gamma=0$, winning is payoff-irrelevant, and the unique
equilibrium is
$(e_m,e_n)=(\alpha\theta_m,\alpha\theta_n)$. Hence, suppose
$\gamma>0$. The zero-effort profile cannot be an equilibrium
because $\alpha>0$ implies that at least one player has a strictly positive
marginal return to effort.

First, no equilibrium can satisfy
$e_m-e_n+h=-M/2$. At the support boundary, the manager's one-sided marginal returns to effort from the left and right are
\[
u_m'(e_m-) = D_-=\beta-\frac{e_m}{\theta_m},
\qquad
u_m'(e_m+) = D_+=\beta+\frac{\gamma (e_m+e_n)}{M}
          -\frac{e_m}{\theta_m}.
\]
If $e_m>0$, optimality would require
$D_-\geq0\geq D_+$, which is impossible because
$D_+>D_-$. If $e_m=0$, then $D_+>0$ (unless $e_n=\beta=0$, which is impossible because at least one player must have a strictly positive marginal return), which is also inconsistent
with optimality. Applying the same argument to the new hire rules
out an equilibrium at $e_m-e_n+h=M/2$.

Every pure-strategy equilibrium must therefore be in one of three
regimes. An equilibrium with an interior winning probability must
satisfy
\[
e_m^I
=
\frac{\theta_m(\tau M/2+\gamma h)}
     {M-2\gamma\theta_m},
\qquad
e_n^I
=
\frac{\theta_n(\tau M/2-\gamma h)}
     {M-2\gamma\theta_n}.
\]
These are the unique solutions to the interior first-order
conditions because $M>2\gamma\bar\theta$. 

If the manager loses with
probability one, equilibrium efforts must be
$(\beta\theta_m,\alpha\theta_n)$; if she wins with probability one,
they must be $(\alpha\theta_m,\beta\theta_n)$. Hence, there is at
most one equilibrium within each regime.

The two strict-corner profiles cannot both be equilibria. Their
support conditions would jointly imply
\[
M<\gamma(\theta_m+\theta_n)\leq2\bar\theta,
\]
contrary to $M>(1+\sqrt{2})\bar\theta$.

It remains to rule out coexistence of an interior equilibrium and a
strict-corner equilibrium. Suppose, toward a contradiction, that
both the interior equilibrium $(e_m^I,e_n^I)$ and the strict-corner equilibrium
$(\beta\theta_m,\alpha\theta_n)$ where the manager always loses exist. The latter requires
\[
\beta\theta_m-\alpha\theta_n+h<-\frac M2.
\]
Using the interior effort formulas, this condition implies
\[
e_n^I>\alpha\theta_n,
\qquad
e_m^I-\beta\theta_m
<
\frac{\gamma\tau\bar\theta^2}
     {M-2\gamma\bar\theta}
<M.
\]
The last inequality follows from
$M>(1+\sqrt2)\bar\theta$, since
\[M^2-2\gamma M\bar\theta-\gamma\tau\bar\theta^2
\geq M^2-2M\bar\theta-\bar\theta^2>0.\]

For every
$\tilde e_n\in[\alpha\theta_n,e_n^I]$, choosing
$\beta\theta_m$ leaves the manager certain to lose because
\[
\beta\theta_m-\tilde e_n+h
\leq
\beta\theta_m-\alpha\theta_n+h
<-\frac M2.
\]
By contrast, choosing $e_m^I$ gives the manager an interior winning
probability throughout this interval. This holds at
$\tilde e_n=e_n^I$ because $(e_m^I,e_n^I)$ is an interior
equilibrium. At the other endpoint,
\[
e_m^I-\alpha\theta_n+h
=
(e_m^I-\beta\theta_m)
+(\beta\theta_m-\alpha\theta_n+h)
<
M-\frac M2
=
\frac M2.
\]
Moreover, because $e_n^I>\alpha\theta_n$,
\[
e_m^I-\alpha\theta_n+h
>
e_m^I-e_n^I+h
>-\frac M2.
\]

Consider the manager's gain from choosing $e_m^I$ rather than
$\beta\theta_m$:
\[
\Delta(\widetilde e_n)
=
u_m(e_m^I,\widetilde e_n,\theta_m)
-
u_m(\beta\theta_m,\widetilde e_n,\theta_m).
\]
On $[\alpha\theta_n,e_n^I]$,  we have
\[
\Delta'(\widetilde e_n)=
\frac{\gamma}{M}
\left(
\frac{M}{2}+h-2\widetilde e_n
\right),
\]
which is negative for every $\widetilde e_n\geq\alpha\theta_n$ because $
\beta\theta_m-\alpha\theta_n+h<-\frac M2$ implies
\[
\frac{M}{2}+h-2\widetilde e_n
<
\alpha\theta_n-\beta\theta_m-2\widetilde e_n
\leq
-\alpha\theta_n-\beta\theta_m
\leq0.
\]
Hence, the manager's gain from choosing $e_m^I$ rather than
$\beta\theta_m$ is strictly decreasing in $\widetilde e_n$ on
$[\alpha\theta_n,e_n^I]$.

Because $(\beta\theta_m,\alpha\theta_n)$ is an equilibrium,
\[
\Delta(\alpha\theta_n)= u_m(e_m^I,\alpha\theta_n,\theta_m)
-
u_m(\beta\theta_m,\alpha\theta_n,\theta_m)
\leq0.
\]
Since $e_n^I>\alpha\theta_n$, strict monotonicity then gives
\[
 \Delta(e_n^I)= u_m(e_m^I,e_n^I,\theta_m)
-
u_m(\beta\theta_m,e_n^I,\theta_m)
<0,
\]
contradicting the optimality of $e_m^I$ at the interior equilibrium.

Interchanging the two players and replacing $h$ with $-h$ rules
out coexistence of the interior equilibrium and the strict-corner
equilibrium $(\alpha\theta_m,\beta\theta_n)$ where the manager always wins.

Thus, the tournament subgame has at most one pure-strategy Nash
equilibrium.
\end{proof}

\begin{lemma}[Equilibrium existence under winner-take-all]\label{lem:existence}
    Under the assumptions of Proposition~\ref{prop:B1}, for every $\theta_n\in\Theta$, the winner-take-all contract $(\alpha^*,\beta^*,h^*)$, where 
    \[
     \alpha^*
     =
     \left(
     1+\sqrt{1-\frac{2\theta_m}{M}}
     \right)^{-1}, \quad \beta^* = 0, \quad h^* = 2\alpha^*\bar\theta-\frac{M}{2},
     \]
    induces a unique pure-strategy Nash equilibrium in the tournament subgame $\Gamma(\theta_n)$, as characterized by the first-order conditions~\eqref{FOC1} and~\eqref{FOC2}:
    \begin{align*} 
        e_m^*(\theta_n) &=  \frac{M/2+ h^*}{M/\alpha^*\theta_m -2}=
\frac{2(\alpha^*)^2\bar\theta\theta_m}
     {M-2\alpha^*\theta_m},\\
        e_n^*(\theta_n) &=  \frac{M/2-h^*}{M/\alpha^* \theta_n -2}=
\frac{\alpha^*(M-2\alpha^*\bar\theta)\theta_n}
     {M-2\alpha^*\theta_n}.
    \end{align*}
\end{lemma}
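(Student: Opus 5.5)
The plan is to combine existence with the uniqueness already established in Lemma~\ref{lem:unique-continuation}. Since that lemma shows $\Gamma(\theta_n)$ has at most one pure-strategy equilibrium, it suffices to show that the interior candidate $(e_m^*(\theta_n),e_n^*(\theta_n))$ given by \eqref{FOC1}--\eqref{FOC2} is in fact a Nash equilibrium for every $\theta_n\in\Theta$. First I would substitute $h^*=2\alpha^*\bar\theta-M/2$ into the uniform--quadratic interior formulas. This gives $M/2+h^*=2\alpha^*\bar\theta$ and $M/2-h^*=M-2\alpha^*\bar\theta$, which yields the two displayed closed forms. I would record the basic facts used throughout: $\alpha^*\in(1/2,1)$, and $M-2\alpha^*\bar\theta>0$ because $M>(1+\sqrt2)\bar\theta>2\bar\theta$.

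\textbf{Step 1 (interior winning probability).} Set $x^*=e_m^*-e_n^*+h^*$ and verify $-M/2<x^*<M/2$ for all $\theta_n\in\Theta$. Both efforts are monotone in $\theta_n$, and $e_m^*$ does not depend on $\theta_n$, so it suffices to check the extreme cases $\theta_n=\underline\theta$ and $\theta_n=\bar\theta$. At $\theta_n=\bar\theta$ the hiring first-order condition holds by construction of $\bar h(\alpha)$. I expect the bounds to reduce to polynomial inequalities in $M/\bar\theta$ that follow from $M>(1+\sqrt2)\bar\theta$ together with the explicit form of $\alpha^*$.

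\textbf{Step 2 (global best responses).} Fix the opponent's effort at its candidate value. I would write the manager's payoff as a piecewise function of $e_m$ with three regions: certain loss, where the payoff is $-e_m^2/2\theta_m$; interior, where it is $\alpha^*\frac{e_m-e_n+h^*+M/2}{M}(e_m+e_n)-e_m^2/2\theta_m$; and certain win, where it is $\alpha^*(e_m+e_n)-e_m^2/2\theta_m$.
\begin{itemize}
\item On the interior region the second derivative is $2\alpha^*/M-1/\theta_m<0$, since $M>2\alpha^*\bar\theta$.
\item At the upper kink the derivative jumps down by $\alpha^*(e_m+e_n)/M$.
\item On the win region the payoff is concave.
\end{itemize}
Hence the payoff is concave on $[\text{lower kink},\infty)$, and the interior first-order condition identifies the maximizer on that set. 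On the loss region the payoff is decreasing, so its best point is $e_m=0$ with payoff $0$ (or the kink itself, which is already covered). It therefore remains to show that the candidate payoff is nonnegative. The new hire is handled symmetrically, with $h^*$ replaced by $-h^*$.

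\textbf{Step 3 (participation), the main obstacle.} Using the first-order condition $e_m^*/\theta_m=\alpha^*(G(x^*)+(e_m^*+e_n^*)/M)$, the cost term equals $\tfrac{\alpha^*}{2}e_m^*\bigl(G(x^*)+(e_m^*+e_n^*)/M\bigr)$. Because $e_m^*+e_n^*\ge e_m^*$, the condition $u_m\ge0$ reduces to the sufficient condition $G(x^*)\ge e_m^*/(2M)$, that is,
\[
x^*+\tfrac M2\ \ge\ \tfrac{e_m^*}{2}.
\]
For the new hire the analogous condition is $\tfrac M2-x^*\ge \tfrac{e_n^*}{2}$.

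I would verify these two inequalities with the closed forms, again at the worst-case $\theta_n$ (the endpoints of $\Theta$). I expect this algebra, and pinning down which endpoint of $\Theta$ binds, to be where $M>(1+\sqrt2)\bar\theta$ is genuinely needed.

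If the crude reduction $e_m^*+e_n^*\ge e_m^*$ turns out to be too lossy, I would instead compare $u_m$ at the candidate directly with the value $0$ by bounding $G(x^*)(e_m^*+e_n^*)$ from below. Combined with Steps 1--2 and Lemma~\ref{lem:unique-continuation}, this gives existence and uniqueness.
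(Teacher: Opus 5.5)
Your overall structure is sound: uniqueness comes from Lemma~\ref{lem:unique-continuation}, so only existence of the interior candidate needs proof, and you analyze each payoff piecewise against the candidate opponent effort. However, the reduction in Step~3 is off by a factor of two. Substituting the first-order condition gives, with $S=e_m^*+e_n^*$,
\[
\frac{u_m}{\alpha^*}=G(x^*)\Bigl(S-\frac{e_m^*}{2}\Bigr)-\frac{e_m^*S}{2M}.
\]
Since $S-e_m^*/2\ge S/2$, the sufficient condition is $G(x^*)\ge e_m^*/M$, i.e.\ $x^*+M/2\ge e_m^*$. Your condition $G(x^*)\ge e_m^*/(2M)$ is only necessary: at $G(x^*)=e_m^*/(2M)$ the display equals $-(e_m^*)^2/(4M)<0$. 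So checking $x^*+M/2\ge e_m^*/2$, and $M/2-x^*\ge e_n^*/2$ for the new hire, would not show that the candidate beats the zero-effort deviation.

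With the correct factor, the two conditions become $M/2+h^*\ge e_n^*$ and $M/2-h^*\ge e_m^*$. These are exactly the inequalities the paper proves, but it uses them to show that the certain-loss region is never reached. For every $e_m\ge 0$, $e_m-e_n^*+h^*+M/2=(2\alpha^*\bar\theta-e_n^*)+e_m>0$, and symmetrically $M/2-h^*-e_m^*+e_n=(M-2\alpha^*\bar\theta-e_m^*)+e_n>0$ for every $e_n\ge0$. So the lower kink lies at negative effort, and each payoff is strictly concave on all of $[0,\infty)$: the interior region has second derivative $2\alpha^*/M-1/\theta_i<0$, the certain-win region is concave, and the kink between them is downward. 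The interior FOC point is therefore the unique best response, and no participation comparison is needed. Your Step~3 is redundant rather than the main obstacle.

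The inequalities themselves are short. First, $e_n^*(\theta_n)=\alpha^*\theta_n\frac{M-2\alpha^*\bar\theta}{M-2\alpha^*\theta_n}\le\alpha^*\theta_n<2\alpha^*\bar\theta$. Second, $M>(1+\sqrt2)\bar\theta$ gives $\alpha^*<1/\sqrt2$, hence $e_m^*<\bar\theta<M-2\alpha^*\bar\theta$. The same bounds give Step~1 at once, since $x^*+M/2=e_m^*-e_n^*+2\alpha^*\bar\theta$ and $M/2-x^*=(M-2\alpha^*\bar\theta-e_m^*)+e_n^*$. This, not a participation check, is where $M>(1+\sqrt2)\bar\theta$ is used.
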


\begin{proof}    
Fix any $\theta_n\in\Theta$. Since $\theta_m\leq\bar\theta$ and
$M>(1+\sqrt{2})\bar\theta \ge (1+\sqrt{2})\theta_m$, we have
\[
\alpha^*= \left(1 + \sqrt{1-2\frac{\theta_m}{M}}\right)^{-1} <\frac{1}{\sqrt{2}},
\qquad
M>(1+2\alpha^*)\bar\theta,
\qquad
M>2\alpha^*(1+\alpha^*)\bar\theta.
\]
Therefore, \(e_m^*<\bar\theta<M-2\alpha^*\bar\theta\) and \(e_n^*(\theta_n)\le\alpha^*\theta_n\).

For any $e_m\geq0$, holding $e_n=e_n^*(\theta_n)$,
\[
\frac{M}{2}+e_m-e_n^*(\theta_n)+h^*
=
\left(
\frac{M}{2}+h^*-e_n^*(\theta_n)
\right)+e_m
>0.
\]
Thus, the manager's winning probability is interior or equal to one.
Similarly, for any $e_n\geq0$, holding $e_m=e_m^*$,
\[
\frac{M}{2}+e_n-e_m^*-h^*
=
\left(
\frac{M}{2}-h^*-e_m^*
\right)+e_n
>0,
\]
so the new hire's winning probability is also interior or equal to
one.

Now we show that the payoff function of each player is strictly concave in their own effort.
For $i\in\{m,n\}$,  when the winning probability is interior, we have
\[
\frac{\partial^2 u_i}{\partial e_i^2} = \frac{2\alpha^*}{M}-\frac{1}{\theta_i}<0.
\]
When the winning probability is equal to one, we have
$\frac{\partial^2 u_i}{\partial e_i^2} =-\frac{1}{\theta_i}<0$.
At the
boundary between these regions, the marginal payoff decreases because
additional effort does not increase the winning probability. Hence,
against the displayed opponent effort, each player's payoff is
strictly concave over the entire domain $e_i\geq0$.

The previous bounds also imply
\[
e_m^*-e_n^*(\theta_n)+h^*+\frac{M}{2}
= e_m^*-e_n^*(\theta_n) +2\alpha^*\bar\theta
\in(0,M),
\]
so $(e_m^*, e_n^*(\theta_n))$ has an interior winning probability.
Strict concavity of the payoff functions therefore implies that 
the effort levels are the unique best responses to each other. Hence,
$(e_m^*, e_n^*(\theta_n))$ constitutes a pure-strategy Nash equilibrium of the tournament subgame $\Gamma(\theta_n)$.
By Lemma~\ref{lem:unique-continuation}, it is the unique equilibrium. 

\end{proof}

\begin{proof}[Proof of Proposition~\ref{prop:B1}]
Fix a contract $(\alpha,\beta,h)$ for which $\Gamma(\theta_n)$ has a
pure-strategy Nash equilibrium for every $\theta_n\in\Theta$. By
Lemma~\ref{lem:unique-continuation}, this equilibrium is unique.

Because each agent's expected revenue share is at most $\alpha\leq1$ and the two agents' shares sum to $\tau=\alpha+\beta\leq1$, the proof of Lemma~\ref{lem:bounded} applies unchanged under a general sharing rule. Hence, equilibrium efforts are uniformly bounded by $\bar e$.
Continuity of payoffs and uniqueness therefore imply that the
equilibrium effort profile is continuous in $\theta_n$.

\textbf{Case 1.}
Suppose that $\gamma>0$ and that the equilibrium following the
manager's hiring choice $\theta_n^*$ has an interior winning
probability. Continuity of the equilibrium effort profile implies that
the winning probability remains interior for all $\theta_n$
sufficiently close to $\theta_n^*$. Hence, the interior effort formulas
and the hiring first-order or endpoint condition are valid locally.

Define
\[
\hat\theta
=
\frac{\tau M/2+\gamma h}{2\gamma\tau}.
\]
The tournament-stage first-order conditions give
\[
e_m
=
\frac{2\gamma\tau\hat\theta\,\theta_m}
     {M-2\gamma\theta_m},
\qquad
e_n(\theta_n)
=
\frac{\tau(M-2\gamma\hat\theta)\theta_n}
     {M-2\gamma\theta_n}.
\]
By the envelope theorem,
\[
 \frac{d u_m}{d\theta_n}
 =
 \left[
 \beta+\gamma G(e_m-e_n+h)-\frac{\gamma(e_m+e_n)}{M}
 \right]e_n'(\theta_n)
 =  
\frac{2\gamma\tau(\hat\theta-\theta_n)}
     {M-2\gamma\theta_n} e_n'(\theta_n),
\]
where \(e_n'(\theta_n)
>0\). Thus, the derivative has the same sign as
$\hat\theta-\theta_n$.

If $\theta_n^*\in(\underline\theta,\bar\theta)$, hiring optimality
implies $\hat\theta=\theta_n^*$. If
$\theta_n^*=\bar\theta$, it implies
$\hat\theta\geq\bar\theta$, and if
$\theta_n^*=\underline\theta$, it implies
$\hat\theta\leq\underline\theta$. In all three cases,
using $\underline\theta\leq\theta_m\leq\bar\theta$, we obtain
\begin{equation}
e_m+e_n
\leq
\frac{\tau\bar\theta M}
     {M-2\gamma\theta_m}.
\end{equation}

\textbf{Case 2.}    
Next, suppose that the equilibrium lies at a strict probability
corner, so that
\[
e_m-e_n+h<-\frac M2
\quad\text{or}\quad
e_m-e_n+h>\frac M2.
\] If the manager loses with probability one, equilibrium
efforts are $(e_m,e_n)=(\beta\theta_m,\alpha\theta_n^*)$.
     If she wins with probability one,  equilibrium
efforts are $(e_m,e_n)=(\alpha\theta_m,\beta\theta_n^*)$.
Thus, in either case,
\[
e_m+e_n\leq\tau\bar\theta
\leq
\frac{\tau\bar\theta M}
     {M-2\gamma\theta_m}.
\]

Moreover, as shown in the proof of Lemma~\ref{lem:unique-continuation}, no
pure-strategy Nash equilibrium can satisfy
$|e_m-e_n+h|= M/2$.

\textbf{Case 3.}
Finally, if $\gamma=0$, equilibrium efforts are $(e_m,e_n) = (\frac{\tau\theta_m}{2},\frac{\tau\theta_n^*}{2})$.
Hence, $e_m+e_n\leq\tau\bar\theta$.

Combining all these cases above gives
\[
e_m+e_n
\leq
\frac{\tau\bar\theta M}
     {M-2\gamma\theta_m}
\leq
\frac{\tau\bar\theta M}
     {M-2\tau\theta_m}.
\]
Therefore,
\begin{equation} \label{eqn:barpi}
   \Pi(\alpha,\beta,h)
\leq
(1-\tau)
\frac{\tau\bar\theta M}
     {M-2\tau\theta_m}
\equiv\overline\Pi(\tau)  
\end{equation}
and the inequality is strict if $\gamma<\tau$.

Direct maximization of $\overline\Pi(\tau)$ gives the unique maximizer
\[
\tau^*
=
\left(1+\sqrt{1-\frac{2\theta_m}{M}}\right)^{-1}.
\]
Consider the feasible contract
\[
 \alpha^*=\gamma^*=\tau^*,  \quad
 \beta^*=0,
 \quad
 h^*= \bar h(\alpha^*)  = 2\alpha^*\bar\theta-\frac M2
\]
Lemma~\ref{lem:existence} shows that this contract induces a unique
pure-strategy Nash equilibrium in every tournament subgame. Moreover,
by the definition of $\bar h(\alpha^*)$, the manager hires
$\theta_n^*=\bar\theta$. 
At this hiring choice,
\[
e_n^*=\alpha^*\bar\theta,
\quad
e_m^*
=
\frac{2(\alpha^*)^2\bar\theta\theta_m}
     {M-2\alpha^*\theta_m},
\]
and hence the contract attains $\overline\Pi(\alpha^*)$.

Because the inequality in~\eqref{eqn:barpi} is strict whenever
$\gamma<\tau$, and $\gamma<\tau$ whenever $\beta>0$, the unique optimal sharing rule satisfies $\beta^*=0$ and $\alpha^*=\tau^*$.
Finally, conditional on $\alpha^*$,
Proposition~\ref{prop:1} implies that
$h^*=\bar h(\alpha^*)$ is the unique optimal head start if
$\theta_m<\bar\theta$.
Hence,
$(\alpha^*,\beta^*,h^*)$ is the unique optimal contract if
$\theta_m<\bar\theta$.
\end{proof}

\section{A Two-Period Model}
\label{twoperiod}
\subsection{Setup}

In the main text, we show that the optimal head start always ensures that the manager hires the strongest candidate---that is, it eliminates hiring sabotage.
To examine the robustness of this result, we now extend the model to two periods to account for future profitability of the firm by assuming that the tournament winner is promoted (or retained) in the next period. 
As is common in practice, the firm uses the same rule for both compensation and promotion for the sake of simplicity and transparency.
Thus, %
the principal now incorporates the winner's ability into her objectives, as promoting a higher-ability agent from the tournament is vital for future profitability.

Assume the promoted agent receives a constant continuation payoff $\tilde v>0$. 
After the first-period tournament, the principal receives a continuation payoff $\tilde V(\theta)>0$ if the agent who is promoted (i.e., who wins in the first period) has ability $\theta$. Assume that $\tilde V(\theta)$ is increasing.

Alternatively, the continuation payoff $\tilde V$ can be interpreted as capturing the cost of providing head starts in the one-period model, which is incurred when a weaker player wins the tournament. When a weaker player wins the tournament, the principal's reputation may suffer or its future profit may decline.

\subsection{Agent's Problem}

For simplicity, we focus on the uniform--quadratic case: $c(e)=e^2/2$ and $G\sim \operatorname{Unif}[-M/2,M/2]$.
Let $\delta\in(0,1)$ denote the discount factor.
Given the manager’s hiring decision \(\theta_n\), the expected payoffs of \(m\) and \(n\) in the subgame are given by
\begin{align} \label{eq:30}
& u_m%
    = \alpha \cdot (e_m + e_n) G(e_m - e_n+h) - \frac{e_m^2}{2\theta_m} 
    + \delta  G(e_m - e_n+h)  \cdot \tilde v,\\
   & u_n%
    = \alpha \cdot (e_m + e_n)G(e_n - e_m-h)  \, - \frac{e_n^2}{2\theta_n} 
    \; + \delta G(e_n - e_m-h) \cdot  \tilde v.
\end{align}
Analogous to the one-period case, the equilibrium efforts in the tournament stage given $\theta_n$ are
\begin{align}
e_m(\theta_n)
        = \frac{(M/2+h) + \delta \tilde v / \alpha}{M/\alpha \theta_m - 2},\\
            e_n(\theta_n)
        = \frac{(M/2-h) + \delta \tilde v/ \alpha}{M/\alpha \theta_n - 2}.
\end{align}
At the hiring stage in the first period, the manager chooses $\theta_n^*$ according to the first-order condition
\begin{equation}
    \begin{aligned}\label{eq18}
        \frac{d u_m}{d \theta_n}=
        e_n'(\theta_n) [ \alpha \cdot (G(e_m -e_n +h)-g(e_m - e_n +h)(e_m +e_n)) - \delta  g(e_m -e_n +h) \cdot \tilde v]=0. %
    \end{aligned}
\end{equation}
Therefore, in the SPNE,
\begin{align} \label{eq20}
    & \theta_n^* = \min\left(\frac{ (M/2 +h) -\delta \tilde v /\alpha}{2\alpha}, \bar{\theta}\right),\\
    & e_n^* \equiv e_n(\theta_n^*) = \min \left( \frac{ (M/2 +h) -\delta \tilde v /\alpha}{2},\frac{(M/2-h) + \delta \tilde v / \alpha}{M/\alpha \bar\theta - 2} \right),\\
    & e_m^* = \frac{(M/2+h) + \delta \tilde v / \alpha}{M/\alpha \theta_m - 2}.
\end{align}
The amount of head start necessary to induce the manager to hire the strongest candidate is
\begin{equation}
\bar  h(\alpha)  = 
   2 \alpha \bar\theta- \frac{M}{2} + \delta \tilde v/\alpha  >   2 \alpha  \bar\theta -  \frac{M}{2},  %
\end{equation}
which is larger than that in the one-period model where $\bar h (\alpha)= 2 \alpha  \bar\theta - M/2$ because of the continuation value.
We assume that $M$ is sufficiently large so that the winning probabilities lie in $(0,1)$ when the head start is $\bar h(\alpha)$.

As we shall see below, the \emph{encouragement}, \emph{discouragement}, and \emph{hiring effects} observed in the one-period model continue to operate, and Lemma~\ref{lemma:1} still holds.

\subsection{Principal's Problem}
The principal's problem is 
\begin{equation}
    \begin{aligned}
        \max_{\alpha\in[0,1],\;h\in [-\bar H ,\bar H]}  \tilde \Pi(\alpha,h) =    (1-\alpha) (e_n^* +e_m^* )  + \delta \tilde V(\theta_{m})G(e_m^* -e_n^* +h)  \\ + \delta \tilde V(\theta_n^*(h) ) [1-G(e_m^* -e_n^* +h) ].  
    \end{aligned}
\end{equation}

Using two-step maximization again, the first-order condition for $h$ is
\begin{align}\label{FOC}
    &\begin{aligned}
    \frac{d\tilde \Pi}{d h} = \underbrace{ (1-\alpha ) (\frac{d e_n }{d h}+\frac{d e_m }{d h})  }_{ \text{first-period effects ($>0$ if $\theta_n^* <\bar\theta$)}}
        +  \delta \underbrace{  (\tilde V(\theta_m)-\tilde V(\theta_n^*(h)))\left(\frac{d e_m }{d h}-\frac{d e_n }{d h}+1\right) g(e_m^* -e_n^* +h)}_{\text{Succession effect}} \\
      ~  + \delta \underbrace{ \tilde V'(\theta_n^*(h))[1-G(e_m^* -e_n^*+h) ]\theta_n^{*\prime}(h)}_{\text{Extended hiring effect}}. %
    \end{aligned} 
\end{align}
For simplicity, assume that the principal's continuation payoff is given by $\tilde V(\theta)= k \theta+ b$ with $k, b>0$, which captures the succession concern.

\paragraph{Succession and Hiring Effects.}
Until the strongest candidate is hired (when $\theta_n^* <\bar\theta$), the head start increases the manager's effort because of the \emph{encouragement effect} ($de_m^*/dh>0$) and also increases the new hire's effort as the \emph{hiring effect} still dominates the \emph{discouragement effect} ($de_n^*/dh=1/2$ as in equation~\eqref{eq: DECE}).  %
Thus, they aggregate to a positive effect on the profit as long as $\theta_n^*<\bar\theta$ as in Lemma~\ref{lemma:1}.
However, in a model with succession concerns, a head start has two additional effects:
\begin{enumerate}\setcounter{enumi}{3}
    \item Extended hiring effect: $[1-G(e_m -e_n+h) ] \delta k \theta_n^{*\prime}(h) \geq0$.
    \item Succession effect: $\delta k ( \theta_m -\theta_n^*(h))(\frac{d e_m }{d h}-\frac{d e_n }{d h}+1) g(e_m^*-e_n^*+h)$, which is nonpositive if and only if $\theta_n^*(h)\geq\theta_m$.
\end{enumerate}

As observed in the one-period model, the head start partially insulates the hiring manager from competition and leads her to hire a higher-ability agent than she otherwise would, who could be promoted for the next period with positive probability.
This \emph{extended hiring effect} has a nonnegative impact on the principal's continuation payoff. Conversely, the head start also increases the probability of promoting the manager for the future period. Such a \emph{succession effect} can be detrimental to future profit if and only if the manager's ability is lower than the new hire's.

\paragraph{Alternative Interpretation.}
It is worth noting that when $\theta_n^*(h)\geq\theta_m$, the term $\delta (\tilde V(\theta_m) - \tilde{V}(\theta_n^*(h))) G(e_m -e_n+h)<0$ can also be viewed as the cost of providing a head start, which arises when the weaker player wins the tournament, as it potentially damages the firm's reputation and lowers employee morale.
Therefore, the succession effect can be interpreted as the reputation or morale effect.

\begin{proposition}
In the two-period model, the optimal head start may allow for hiring sabotage in equilibrium (i.e., $\theta_n^*(h^*)<\bar\theta$) when the future stake is large ($\delta k>0$).
However, the new hire is always strictly better than the manager when sabotage arises in equilibrium.
\end{proposition}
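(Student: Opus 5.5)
Both parts of the statement follow from the first-order condition~\eqref{FOC} for $h$, specialized to the uniform--quadratic case with $\tilde V(\theta)=k\theta+b$. Write $\tilde\Pi(h)$ for profit at a fixed $\alpha$.

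\emph{Sign of each term on the region $h<\bar h(\alpha)$.} Here $\theta_n^*(h)<\bar\theta$. From \eqref{eq20}, $\theta_n^*(h)=\big((M/2+h)-\delta\tilde v/\alpha\big)/(2\alpha)$, so $\theta_n^{*\prime}(h)=1/(2\alpha)$. Also $e_n^*=(M/2+h-\delta\tilde v/\alpha)/2$, so $de_n^*/dh=1/2$. Finally $de_m^*/dh=1/(M/\alpha\theta_m-2)$, and $g=1/M$ on the interior. Hence
\[
\frac{d\tilde\Pi}{dh}=(1-\alpha)\Big(\tfrac12+\tfrac{1}{M/\alpha\theta_m-2}\Big)
+\frac{\delta k}{M}(\theta_m-\theta_n^*(h))\Big(\tfrac{1}{M/\alpha\theta_m-2}+\tfrac12\Big)
+\frac{\delta k}{2\alpha}\,[1-G(x^*)].
\]
If $\theta_n^*(h)\le\theta_m$, every term is nonnegative and the first is strictly positive, so $d\tilde\Pi/dh>0$.

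\emph{Second claim.} Suppose sabotage occurs at the optimum, i.e.\ $h^*<\bar h(\alpha^*)$. Since $\theta_n^*$ is continuous in $h$, either $h^*$ is interior, in which case $d\tilde\Pi/dh=0$ there, or $h^*=-\bar H$, in which case $d\tilde\Pi/dh\le 0$ there. Either way $d\tilde\Pi/dh\le 0$ at $h^*$. By the sign analysis, this forces $\theta_n^*(h^*)>\theta_m$, which is the claim.

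A boundary case needs care. If $\theta_n^*(h^*)=\underline\theta$, the formula for $\theta_n^*$ is at a corner and both $de_n^*/dh$ and $\theta_n^{*\prime}$ change. On that region, however, further raising $h$ still moves $\theta_n^*$ weakly up. I would handle it by showing $\tilde\Pi$ is increasing in $h$ whenever $\theta_n^*(h)\le\theta_m$, including flat corner pieces. On those pieces the discouragement effect appears, but it is dominated: the succession term is nonnegative when $\theta_n^*\le\theta_m$, and Lemma~\ref{lemma:3} with $\theta_n\le\theta_m$ makes the first-period term nonnegative. If $\alpha$ is also chosen optimally, apply the argument at $\alpha=\alpha^*$ by holding $\alpha^*$ fixed.

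\emph{First claim (possibility).} I would construct an example in which the left derivative of $\tilde\Pi$ at $\bar h(\alpha)$ is negative, so that a slightly smaller $h$ does strictly better. At $\theta_n^*=\bar\theta$ the derivative is
\[
(1-\alpha)\Big(\tfrac12+a\Big)-\frac{\delta k}{M}(\bar\theta-\theta_m)\Big(a+\tfrac12\Big)+\frac{\delta k}{2\alpha}[1-G],
\qquad a=\frac{1}{M/\alpha\theta_m-2}.
\]
The middle term is linear in $\delta k$ with coefficient $-(\bar\theta-\theta_m)(a+\tfrac12)/M$, and the last with coefficient $(1-G)/(2\alpha)$. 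The aim is to choose parameters so the negative coefficient dominates: $\theta_m$ small relative to $\bar\theta$, $\alpha$ not tiny, and $1-G$ small enough, since $G$ at $\bar h$ is close to one when the head start is large. Then take $\delta k$ large. Because this must hold for every $\alpha$, including the optimal one, I would fix the remaining parameters and verify the inequality uniformly over the relevant range of $\alpha$.

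This parameter construction is the main obstacle: $G$ must stay interior while $1-G$ is small relative to $(\bar\theta-\theta_m)\alpha/M$. I would first try an explicit numerical example, for instance with $\delta\tilde v$ chosen so that $\bar h$ pushes $G$ near one.
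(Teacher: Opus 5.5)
Your argument for the second claim is essentially the paper's. At a maximizer $h^*<\bar h(\alpha^*)\le\bar H$, if $\theta_n^*(h^*)\le\theta_m$, then the succession term is nonnegative, the extended hiring term is nonnegative, and the first-period term is strictly positive. This contradicts optimality. Your extra care with the $\underline\theta$ corner and with holding $\alpha^*$ fixed is fine. Strictly, you only need the right derivative at $h^*$ to be nonpositive, which sidesteps differentiability at kinks.

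\textbf{The gap is the first claim.} ``May allow for hiring sabotage'' is an existence statement, so it needs a witness. You never produce one: you describe a search strategy and call the parameter construction ``the main obstacle'' without resolving it. The paper closes this half by exhibiting a concrete configuration, $\delta=0.85$, $k=16.4$, $M=2.55$, $\tilde v=1$, $\bar\theta=1$, $\theta_m\in(0,0.6)$, for which the optimal contract has $\theta_n^*<\bar\theta$.

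Your planned analytic route is also incomplete as stated, in three concrete ways.

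\emph{(a) Values of $h$ above the threshold.} A negative left derivative at $\bar h(\alpha)$ only rules out $h^*=\bar h(\alpha)$; the optimum could still lie above it. You must also show that $\tilde\Pi$ is decreasing for $h>\bar h(\alpha)$. There, with $a_i=1/(M/\alpha\theta_i-2)$, the derivative is $(1-\alpha)(a_m-a_{\bar\theta})-\frac{\delta k}{M}(\bar\theta-\theta_m)(a_m+a_{\bar\theta}+1)\le 0$ while probabilities are interior.

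\emph{(b) Small $\alpha$.} ``Uniformly over the relevant range of $\alpha$'' cannot be checked with your displayed formula. At the threshold, $e_n^*=\alpha\bar\theta$ and $G=(e_m^*+\alpha\bar\theta+\delta\tilde v/\alpha)/M$. Because of the $\delta\tilde v/\alpha$ term in $\bar h(\alpha)$, this exceeds $1$ for all small $\alpha$. On that range the interior formulas, and $\bar h(\alpha)$ as the sabotage threshold, no longer apply. You would have to show separately that $\alpha^*$ does not lie there, or handle those contracts directly.

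\emph{(c) Taking $\delta k$ large.} You must raise $k$ with $\delta$ fixed, since $\delta$ also moves $\bar h(\alpha)$ and $G$. And because $\alpha^*$ moves with $k$, you need the negative coefficient on $\delta k$ to be bounded away from zero on the whole set of candidate $\alpha$.

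Until these points are handled, or a specific numerical instance like the paper's is actually computed and verified, the first half of the proposition is unproved.
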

\begin{proof}
    To show the existence of hiring sabotage in equilibrium with the optimal head start, we show $\theta_n^*(h^*)<\bar\theta$.
    When $\delta = 0.85$, $k = 16.4$, $M = 2.55$, $\tilde v =1$, $\theta_m \in(0,0.6)$, and $\bar\theta=1$, we have $\theta_n^*<\bar\theta$.%

    Now we prove that $\theta_n^*(h^*)<\bar\theta$ implies $\theta_n^*(h^*)>\theta_m$ by contradiction.
    Suppose that $\theta_n^*(h^*)<\bar\theta$ and $\theta_n^*(h^*) \leq \theta_m$. 
    Then, the succession effect is nonnegative, and the first-period effects are positive.
    Thus, equation~\eqref{FOC} implies $d\tilde \Pi/ d h>0$, so increasing $\theta_n^*$ by increasing $h$ would strictly increase $\tilde \Pi$, which contradicts the optimality of $h^*$.    
    \end{proof}

\begin{figure}[htbp]
\centering
\includegraphics[width=0.95\textwidth]{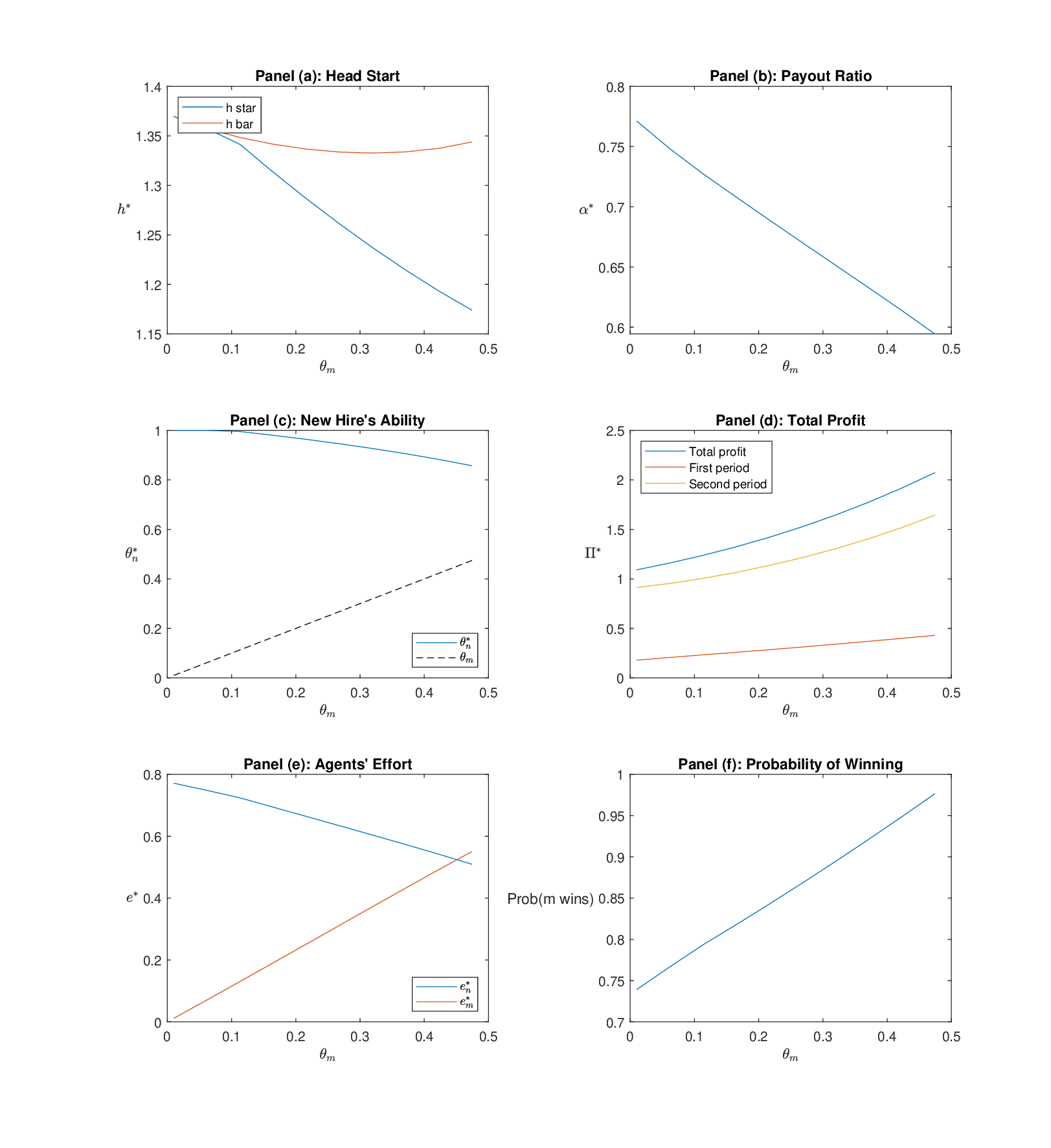}
\caption{Hiring sabotage arises due to continuation payoffs}
\label{fig:1}
\end{figure}

In sharp contrast to the benchmark one-period model, in the two-period model where the winner's ability matters, the optimal head start may allow for hiring sabotage in equilibrium. 
This is driven by the \emph{succession effect} of the head start.
At the level of the head start that eliminates hiring sabotage (i.e., $h=\bar h$), if the (negative) succession effect dominates the sum of the aggregate positive effect on the first-period profit and the extended hiring effect on the continuation profit, the principal will lower the head start to increase the profit, which creates scope for hiring sabotage. This only happens when the manager's ability is lower than the new hire's because the succession effect would be positive otherwise.
Therefore, regardless of the existence of hiring sabotage, the optimal contract ensures that the new hire is always of higher ability than the manager (i.e., $\theta_n^*\in(\theta_m,\bar\theta]$).
\footnote{If the optimal contract allows for hiring sabotage, the new hire must be of higher ability than the manager. If it does not, then the new hire is $\theta_n=\bar\theta$. In either case,  $\theta_n^*\in(\theta_m,\bar\theta]$. }

Figure~\ref{fig:1} illustrates the optimal head start $h^*$, optimal payout ratio $\alpha^*$, and the new hire's ability $\theta_n^*$ as a function of the manager's ability $\theta_m\in(0,0.5)$ when $\delta = 0.85$, $\bar\theta=1$, $k = 4$, $M = 2.55$, and $\tilde v =1$.%
\footnote{For $\theta_m>0.5$, the probability of winning is 1, as shown in panel (f), so a pure-strategy equilibrium does not exist.}
It can be seen in panel (a) that the optimal head start $h^*$ (blue line) is smaller than the sabotage-free level $\bar h$ (red line), which allows for hiring sabotage---the new hire's ability $\theta_n^*$ (blue line), as shown in panel (c), is lower than $\bar \theta=1$. However, it is still higher than the manager's ability ($45^{\circ}$ dashed line).

Moreover, it can be deduced from panels (a) and (b) that an increase in the manager's ability $\theta_m$ \emph{decreases} both the optimal head start $h^*$ and the optimal payout ratio $\alpha^*$ (blue line), in contrast to the one-period model. 
This is because the agents now have career incentives (continuation payoffs $v$) in addition to the first-period tournament prize incentives. As the manager's ability increases, the career incentive encourages her to invest more effort, thereby allowing the principal to lower the optimal payout ratio  $\alpha^*$, which in turn lowers the optimal head start $h^*$.
As the manager's ability $\theta_m$ increases, the decrease in the optimal head start $h^*$ and the optimal payout ratio $\alpha^*$ have opposite effects on hiring sabotage---the former exacerbates it and the latter mitigates it. 
In combination, the former dominates the latter, and they jointly lead to a decrease in the new hire's ability $\theta_n^*$, as shown in panel (c).
Nevertheless, the firm's total profit is still increasing in the manager's ability $\theta_m$, as shown in panel (d), because the direct effect of a better manager on the profit outweighs the indirect effect due to the increase in hiring sabotage.

\end{appendices}

\section*{Declaration of generative AI and AI-assisted technologies in the manuscript preparation process}

During the preparation of this work, the authors used ChatGPT and Claude to assist with copyediting, conducting searches for related literature, and checking mathematical proofs. After using these tools, the authors reviewed and edited the content as needed and take full responsibility for the content of the published article.

\singlespacing
\urlstyle{same}
\bibliographystyle{jpe}
\bibliography{nelly}

\begin{thebibliography}{54}
\newcommand{\enquote}[1]{``#1''}
\providecommand{\natexlab}[1]{#1}
\providecommand{\url}[1]{\texttt{#1}}
\providecommand{\urlprefix}{URL }

\bibitem[{Baye and Hoppe(2003)}]{BayeHoppe2003}
Baye, Michael~R. and Heidrun~C. Hoppe. 2003.
\newblock \enquote{The Strategic Equivalence of Rent-Seeking, Innovation, and Patent-Race Games.}
\newblock \emph{Games and Economic Behavior} 44~(2):217--226.

\bibitem[{Brown and Chowdhury(2017)}]{brown2017hidden}
Brown, Alasdair and Subhasish~M Chowdhury. 2017.
\newblock \enquote{The Hidden Perils of Affirmative Action: Sabotage in Handicap Contests.}
\newblock \emph{Journal of Economic Behavior \& Organization} 133:273--284.

\bibitem[{Brown and Minor(2014)}]{BrownMinor2014}
Brown, Jennifer and Dylan~B. Minor. 2014.
\newblock \enquote{Selecting the {{Best}}? {{Spillover}} and {{Shadows}} in {{Elimination Tournaments}}.}
\newblock \emph{Management Science} 60~(12):3087--3102.

\bibitem[{Carmichael(1988)}]{carmichael1988incentives}
Carmichael, H~Lorne. 1988.
\newblock \enquote{Incentives in Academics: Why Is There Tenure?}
\newblock \emph{Journal of Political Economy} 96~(3):453--472.

\bibitem[{Chen(2024)}]{Chen2024}
Chen, Joanne. 2024.
\newblock \enquote{Optimal {{Managerial Authority}}.}
\newblock Available at SSRN: \url{https://papers.ssrn.com/abstract=4840824}.

\bibitem[{Chen(2003)}]{chen2003sabotage}
Chen, Kong-Pin. 2003.
\newblock \enquote{Sabotage in Promotion Tournaments.}
\newblock \emph{Journal of Law, Economics, and Organization} 19~(1):119--140.

\bibitem[{Chen, Jungbauer, and Wang(2023)}]{ChenJungbauerWang2023}
Chen, Yi, Thomas Jungbauer, and Zhe Wang. 2023.
\newblock \enquote{The Strategic Decentralization of Recruiting.}
\newblock \emph{Journal of Economic Theory} 209:105639.

\bibitem[{Chowdhury and G{\"u}rtler(2015)}]{chowdhury2015sabotage}
Chowdhury, Subhasish~M and Oliver G{\"u}rtler. 2015.
\newblock \enquote{Sabotage in Contests: A Survey.}
\newblock \emph{Public Choice} 164:135--155.

\bibitem[{Chung(1996)}]{Chung1996}
Chung, Tai-Yeong. 1996.
\newblock \enquote{Rent-{{Seeking Contest When}} the {{Prize Increases}} with {{Aggregate Efforts}}.}
\newblock \emph{Public Choice} 87~(1/2):55--66.

\bibitem[{Clark and Riis(2001)}]{clark2001rank}
Clark, Derek~J and Christian Riis. 2001.
\newblock \enquote{Rank-Order Tournaments and Selection.}
\newblock \emph{Journal of Economics} 73:167--191.

\bibitem[{Dai and Toikka(2022)}]{DaiToikka2022}
Dai, Tianjiao and Juuso Toikka. 2022.
\newblock \enquote{Robust {{Incentives}} for {{Teams}}.}
\newblock \emph{Econometrica} 90~(4):1583--1613.

\bibitem[{Danilov, Harbring, and Irlenbusch(2019)}]{DanilovHarbringIrlenbusch2019}
Danilov, Anastasia, Christine Harbring, and Bernd Irlenbusch. 2019.
\newblock \enquote{Helping under a Combination of Team and Tournament Incentives.}
\newblock \emph{Journal of Economic Behavior \& Organization} 162:120--135.

\bibitem[{Dasaratha, Golub, and Shah(2024)}]{DasarathaGolubShah2024}
Dasaratha, Krishna, Benjamin Golub, and Anant Shah. 2024.
\newblock \enquote{Incentive {{Design}} with {{Spillovers}}.}
\newblock {arXiv}:2411.08026.

\bibitem[{Deller and Sandino(2020)}]{DellerSandino2020}
Deller, Carolyn and Tatiana Sandino. 2020.
\newblock \enquote{Who {{Should Select New Employees}}, {{Headquarters}} or the {{Unit Manager}}? {{Consequences}} of {{Centralizing Hiring}} at a {{Retail Chain}}.}
\newblock \emph{The Accounting Review} 95~(4):173--198.

\bibitem[{Dessein(2002)}]{Dessein2002}
Dessein, Wouter. 2002.
\newblock \enquote{Authority and {{Communication}} in {{Organizations}}.}
\newblock \emph{The Review of Economic Studies} 69~(4):811--838.

\bibitem[{Deutscher et~al.(2013)Deutscher, Frick, G{\"u}rtler, and Prinz}]{deutscher2013sabotage}
Deutscher, Christian, Bernd Frick, Oliver G{\"u}rtler, and Joachim Prinz. 2013.
\newblock \enquote{Sabotage in Tournaments with Heterogeneous Contestants: Empirical Evidence from the Soccer Pitch.}
\newblock \emph{The Scandinavian Journal of Economics} 115~(4):1138--1157.

\bibitem[{Drugov and Ryvkin(2017)}]{DrugovRyvkin2017}
Drugov, Mikhail and Dmitry Ryvkin. 2017.
\newblock \enquote{Biased Contests for Symmetric Players.}
\newblock \emph{Games and Economic Behavior} 103:116--144.

\bibitem[{Drugov and Ryvkin(2020)}]{DrugovRyvkin2020}
---{}---{}---. 2020.
\newblock \enquote{How Noise Affects Effort in Tournaments.}
\newblock \emph{Journal of Economic Theory} 188:105065.

\bibitem[{Drugov and Ryvkin(2022)}]{drugov2022hunting}
---{}---{}---. 2022.
\newblock \enquote{Hunting for the Discouragement Effect in Contests.}
\newblock \emph{Review of Economic Design} .

\bibitem[{Ederer(2010)}]{Ederer2010}
Ederer, Florian. 2010.
\newblock \enquote{Feedback and {{Motivation}} in {{Dynamic Tournaments}}.}
\newblock \emph{Journal of Economics \& Management Strategy} 19~(3):733--769.

\bibitem[{Frankel(2021)}]{Frankel2021}
Frankel, Alex. 2021.
\newblock \enquote{Selecting {{Applicants}}.}
\newblock \emph{Econometrica} 89~(2):615--645.

\bibitem[{Friebel and Raith(2004)}]{FriebelRaith2004}
Friebel, Guido and Michael Raith. 2004.
\newblock \enquote{Abuse of {{Authority}} and {{Hierarchical Communication}}.}
\newblock \emph{The RAND Journal of Economics} 35~(2):224--244.

\bibitem[{Fu and Wu(2020)}]{fu2020optimal}
Fu, Qiang and Zenan Wu. 2020.
\newblock \enquote{On the Optimal Design of Biased Contests.}
\newblock \emph{Theoretical Economics} 15~(4):1435--1470.

\bibitem[{Gershkov, Li, and Schweinzer(2009)}]{GershkovLiSchweinzer2009}
Gershkov, Alex, Jianpei Li, and Paul Schweinzer. 2009.
\newblock \enquote{Efficient {{Tournaments}} within {{Teams}}.}
\newblock \emph{The RAND Journal of Economics} 40~(1):103--119.

\bibitem[{Gl{\"o}kler, Pull, and Stadler(2022)}]{GloklerPullStadler2022}
Gl{\"o}kler, Thomas, Kerstin Pull, and Manfred Stadler. 2022.
\newblock \enquote{Do {{Output-Dependent Prizes Alleviate}} the {{Sabotage Problem}} in {{Tournaments}}?}
\newblock \emph{Games} 13~(5):65.

\bibitem[{G{\"u}rtler and M{\"u}nster(2010)}]{gurtler2010sabotage}
G{\"u}rtler, Oliver and Johannes M{\"u}nster. 2010.
\newblock \enquote{Sabotage in Dynamic Tournaments.}
\newblock \emph{Journal of Mathematical Economics} 46~(2):179--190.

\bibitem[{G{\"u}th et~al.(2016)G{\"u}th, Lev{\'i}nsk{\'y}, Pull, and Weisel}]{GuthLevinskyPull2016}
G{\"u}th, Werner, Ren{\'e} Lev{\'i}nsk{\'y}, Kerstin Pull, and Ori Weisel. 2016.
\newblock \enquote{Tournaments and Piece Rates Revisited: A Theoretical and Experimental Study of Output-Dependent Prize Tournaments.}
\newblock \emph{Review of Economic Design} 20~(1):69--88.

\bibitem[{Haegele(2026)}]{Haegele2026}
Haegele, Ingrid. 2026.
\newblock \enquote{Talent {{Hoarding}} in {{Organizations}}.}
\newblock \emph{American Economic Review} 116~(8):3110--3151.

\bibitem[{Harbring and Irlenbusch(2008)}]{HarbringIrlenbusch2008}
Harbring, Christine and Bernd Irlenbusch. 2008.
\newblock \enquote{How Many Winners Are Good to Have?: {{On}} Tournaments with Sabotage.}
\newblock \emph{Journal of Economic Behavior \& Organization} 65~(3):682--702.

\bibitem[{Holmstrom(1982)}]{Holmstrom1982}
Holmstrom, Bengt. 1982.
\newblock \enquote{Moral {{Hazard}} in {{Teams}}.}
\newblock \emph{The Bell Journal of Economics} 13~(2):324--340.

\bibitem[{Hvide and Kristiansen(2003)}]{hvide2003risk}
Hvide, Hans~K and Eirik~G Kristiansen. 2003.
\newblock \enquote{Risk Taking in Selection Contests.}
\newblock \emph{Games and Economic Behavior} 42~(1):172--179.

\bibitem[{Kawasaki(2015)}]{Kawasaki2015}
Kawasaki, Guy. 2015.
\newblock \emph{The {{Art}} of the {{Start}} 2.0: {{The Time-Tested}}, {{Battle-Hardened Guide}} for {{Anyone Starting Anything}}}.
\newblock New York, NY: Penguin.

\bibitem[{Konrad(2002)}]{Konrad2002}
Konrad, Kai~A. 2002.
\newblock \enquote{Investment in the Absence of Property Rights; the Role of Incumbency Advantages.}
\newblock \emph{European Economic Review} 46~(8):1521--1537.

\bibitem[{Konrad(2009)}]{Konrad2009}
---{}---{}---. 2009.
\newblock \emph{Strategy and {{Dynamics}} in {{Contests}}}.
\newblock London {{School}} of {{Economics Perspectives}} in {{Economic Analysis}}. Oxford, New York: Oxford University Press.

\bibitem[{Kr{\"a}kel(2005)}]{krakel2005helping}
Kr{\"a}kel, Matthias. 2005.
\newblock \enquote{Helping and sabotaging in tournaments.}
\newblock \emph{International Game Theory Review} 7~(02):211--228.

\bibitem[{Lazear(1989)}]{lazear1989pay}
Lazear, Edward~P. 1989.
\newblock \enquote{Pay Equality and Industrial Politics.}
\newblock \emph{Journal of Political Economy} 97~(3):561--580.

\bibitem[{Lazear(1995)}]{lazear1995personnel}
---{}---{}---. 1995.
\newblock \emph{Personnel Economics}.
\newblock MIT Press.

\bibitem[{Lazear and Rosen(1981)}]{lazear1981rank}
Lazear, Edward~P and Sherwin Rosen. 1981.
\newblock \enquote{Rank-Order Tournaments as Optimum Labor Contracts.}
\newblock \emph{Journal of Political Economy} 89~(5):841--864.

\bibitem[{Morgan, Tumlinson, and V{\'a}rdy(2022)}]{MorganTumlinsonVardy2022}
Morgan, John, Justin Tumlinson, and Felix V{\'a}rdy. 2022.
\newblock \enquote{The Limits of Meritocracy.}
\newblock \emph{Journal of Economic Theory} 201:105414.

\bibitem[{M{\"u}nster(2007)}]{munster2007selection}
M{\"u}nster, Johannes. 2007.
\newblock \enquote{Selection Tournaments, Sabotage, and Participation.}
\newblock \emph{Journal of Economics \& Management Strategy} 16~(4):943--970.

\bibitem[{Nalebuff and Stiglitz(1983)}]{nalebuff1983prizes}
Nalebuff, Barry~J and Joseph~E Stiglitz. 1983.
\newblock \enquote{Prizes and Incentives: Towards a General Theory of Compensation and Competition.}
\newblock \emph{The Bell Journal of Economics} :21--43.

\bibitem[{O'Keeffe, Viscusi, and Zeckhauser(1984)}]{o1984economic}
O'Keeffe, Mary, W~Kip Viscusi, and Richard~J Zeckhauser. 1984.
\newblock \enquote{Economic Contests: Comparative Reward Schemes.}
\newblock \emph{Journal of Labor Economics} 2~(1):27--56.

\bibitem[{Purkayastha(1998)}]{Purkayastha1998}
Purkayastha, Sumitra. 1998.
\newblock \enquote{Simple Proofs of Two Results on Convolutions of Unimodal Distributions.}
\newblock \emph{Statistics \& Probability Letters} 39~(2):97--100.

\bibitem[{Ryvkin and Drugov(2020)}]{RyvkinDrugov2020}
Ryvkin, Dmitry and Mikhail Drugov. 2020.
\newblock \enquote{The Shape of Luck and Competition in Winner-take-all Tournaments.}
\newblock \emph{Theoretical Economics} 15~(4):1587--1626.

\bibitem[{Ryvkin and Ortmann(2008)}]{ryvkin2008predictive}
Ryvkin, Dmitry and Andreas Ortmann. 2008.
\newblock \enquote{The Predictive Power of Three Prominent Tournament Formats.}
\newblock \emph{Management Science} 54~(3):492--504.

\bibitem[{Sengupta(2004)}]{Sengupta2004}
Sengupta, Sarbajit. 2004.
\newblock \enquote{Delegating Recruitment under Asymmetric Information.}
\newblock \emph{International Journal of Industrial Organization} 22~(8):1327--1347.

\bibitem[{Shaked and Shanthikumar(2007)}]{ShakedShanthikumar2007}
Shaked, Moshe and J.~George Shanthikumar. 2007.
\newblock \emph{Stochastic {{Orders}}}.
\newblock Springer Science \& Business Media.

\bibitem[{Siegel(2014)}]{Siegel2014}
Siegel, Ron. 2014.
\newblock \enquote{Asymmetric {{Contests}} with {{Head Starts}} and {{Nonmonotonic Costs}}.}
\newblock \emph{American Economic Journal: Microeconomics} 6~(3):59--105.

\bibitem[{Skaperdas and Grofman(1995)}]{skaperdas1995modeling}
Skaperdas, Stergios and Bernard Grofman. 1995.
\newblock \enquote{Modeling Negative Campaigning.}
\newblock \emph{American Political Science Review} 89~(1):49--61.

\bibitem[{Sullivan(2011)}]{Sullivan2011}
Sullivan, John. 2011.
\newblock \enquote{20 {{Reasons Why Weak Managers Never Hire A-level Talent}}.}
\newblock Accessed 31 October 2025. \url{https://drjohnsullivan.com/uncategorized/20-reasons-why-weak-managers-never-hire-a-level-talent/}.

\bibitem[{{The Goldman Sachs Group, Inc.}(2023)}]{Goldman2023Q2}
{The Goldman Sachs Group, Inc.} 2023.
\newblock \enquote{Second Quarter 2023 Earnings Results.}
\newblock U.S. Securities and Exchange Commission, EDGAR database.
\newblock Accessed October 19, 2025. \url{https://www.sec.gov/Archives/edgar/data/886982/000119312523189212/d523443dex991.htm}.

\bibitem[{Tullock(1980)}]{Tullock1980}
Tullock, Gordon. 1980.
\newblock \enquote{Efficient Rent Seeking.}
\newblock In \emph{Toward a {{Theory}} of the {{Rent-seeking Society}}}. College Station: Texas A \& M University, 97--112.

\bibitem[{Wu and Liu(2026)}]{WuLiu2026}
Wu, Hugh~Xiaolong and Shannon~X. Liu. 2026.
\newblock \enquote{The {{Trade-Offs}} of {{Letting Local Managers Make Hiring Decisions}}.}
\newblock \emph{Management Science} .

\bibitem[{Zaman and Lakhani(2024)}]{ZamanLakhani2024}
Zaman, Hashim and Karim~R. Lakhani. 2024.
\newblock \enquote{Determinants of {{Top-Down Sabotage}}.}
\newblock Available at SSRN: \url{https://papers.ssrn.com/abstract=4941749}.

\end{thebibliography}

\end{document}